\numberwithin{equation}{section}
\numberwithin{equation}{section}
\newtheorem{theorem}{Theorem}
\newtheorem{corollary}[theorem]{Corollary}
\newtheorem{definition}[theorem]{Definition}
\newtheorem{example}[theorem]{Example}
\newtheorem{lemma}[theorem]{Lemma}
\newtheorem{remark}[theorem]{Remark}
\title{{Geometric aspects of non-homogeneous $1+0$ operators }}
\author{Marta Dell'Atti$^1$, \qquad Alessandra Rizzo$^2$, \qquad Pierandrea Vergallo$^{3,4}$
\\[5mm]
 \small $^1$  Faculty of Mathematics Informatics and Mechanics\\ 
\small University of Warsaw, Banacha 2, Warsaw 02-097, Poland  \\
\small \texttt{m.dell-atti@uw.edu.pl}\\
\small $^2$  Department of Mathematics\\
\small  University of Palermo, via Archirafi 34, Palermo, Italy \\
\small  \texttt{alessandra.rizzo07@unipa.it} \\
\small $^3$  Department of Basic and Applied Sciences\\
\small  University of Basilicata, via dell'Ateneo Lucano, 85100 Potenza, Italy\\
\small $^4$ Istituto Nazionale di Fisica Nucleare, Sezione di Napoli\\
\small Via Cintia, 80126 Napoli, Italy\\
\small  
\texttt{pierandrea.vergallo@unibas.it}
\date{} 
}
\begin{document}
\maketitle

\begin{abstract}
    Led by the key example of the Korteweg-de Vries equation, we study pairs of Hamiltonian operators which are non-homogeneous and are given by the sum of a first-order operator and an ultralocal structure. We present a complete classification of the Casimir functions associated with the degenerate operators in two and three components. 
    We define tensorial criteria to establish the compatibility of two non-homogeneous operators and show a classification of pairs for systems in two components, with some preliminary results for three components as well. 
    Lastly, we study pairs composed of non-degenerate operators only, introducing the definition of bi-pencils. First results show that the considered operators can be related to Nijenhuis geometry, proving a compatibility result in this direction in the framework of Lie algebras. 
\end{abstract} 

\section*{Introduction}
\addcontentsline{toc}{section}{Introduction}
The Hamiltonian formalism for differential equations is a consolidated framework in mathematical physics, differential geometry, and field theories, 
which serves as a fundamental technical tool to investigate nonlinear phenomena \cite{DubKri}. In addition, it plays a fundamental role in the theory of integrable systems of differential equations, both ordinary and partial. As shown by Magri in \cite{Magri:SMInHEq}, finding 
a suitable bi-Hamiltonian structure for a given system represents a possible way to prove its integrability and, by using Lenard-Magri's chains tool (see e.g.\ \cite{dorfman91:_local}), a concrete way to 
produce infinitely many conserved quantities and commuting flows.
The bi-Hamiltonian formalism is established by the existence of two different Hamiltonian structures for the same system. One crucial property for a pair of operators to constitute such a structure is their compatibility: any linear combination of the two Hamiltonian operators is Hamiltonian as well.

Starting from Dubrovin and Novikov's pioneering results on Hamiltonian operators of order $1$ in \cite{DN83}, here we aim to provide a geometric description of a more general Hamiltonian structure, defined as the sum of a Dubrovin-Novikov operator (order $1$) and an ultralocal operator (order $0$). In particular, inspired by some recent results on this topic \cite{casa1,DellAVer1,GubOliSgrVer,Riz1,Ver3}, we study pairs of compatible Hamiltonian operators of type~$(1+0)$. 

The geometric nature of compatible pairs of first-order operators was investigated first by Dubrovin~\cite{dub2}, and continued by Ferapontov \cite{Fer11}, Mokhov~\cite{omokh,Mok14,Mok11,Mok12,Mok13} and in the more recent papers \cite{DMS,FL,LPR}. The problem of classifying  bi-Hamiltonian structures of this type has been further connected to the study of flat pencils of metrics (see the seminal work \cite{Mok13}) but also to the broader theory of Frobenius manifolds, these arising in the context of two-dimensional topological field theories \cite{dub1}. Technically, solving this problem is equivalent to solve a highly nonlinear system which has been proved to be integrable, as shown in~\cite{Fer11,Mok12}.

The complete description of the classification conceals technical difficulties dealing with a high number of constraints. However, thanks to the fact that here we are considering non-homogeneous operators, 
further constraints can be solved compared to the same problem formulated only in presence of {homogeneous}operators (i.e.\ either of order $1$ or $0$).

Here we summarise our main results:
\begin{enumerate}
    \item We present a complete classification of Casimir functions for the Poisson brackets associated with operators of type $(1+0)$ in the degenerate case, in Table \ref{tab:casimir_2} for systems in $2$ components and Table \ref{tab:casimir_3} for of the structures in $3$ components. The non-trivial cases for the operators $\mathcal{C}^{ij}_{3,2}$ and $\mathcal{C}^{ij}_{3,5}$ as introduced in \cite{DellAVer1} are treated separately in section~\ref{sec:casimir_degenerate_case}. The classification completes the description of Casimirs of such operators, extending the recent result obtained in \cite{GubOliSgrVer};
    \item We classify pairs of compatible Hamiltonian operators $(\mathcal{A},\mathcal{B})$ of type~$(1+0)$ in $2$ components with $\mathcal{A}$ non-degenerate in Theorem~\ref{2x2thm}. We obtain new cases of pairs of contravariant metrics, 
    thanks to the nature of the structures under study, as noted in Remark~\ref{rmk:compatibility_non_homog}.
    Moreover, in Remark~\ref{rmk:general_coords} we provide the purely contravariant expression of tensors describing the compatibility of the operators $\mathcal{A}$ and $\mathcal{B}$; 
    \item We present some preliminary results of pairs of operators in $3$ components, being able to fully formulate the discussion related to the order $0$ operator in Lemma~\ref{lemma:ultralocal_3}. This is sufficient to reproduce the bi-Hamiltonian structure in the case of the inverted KdV equation in Example~\ref{ex:kdv_inverted};
    \item We introduce the geometric object of the bi-pencil, i.e.\ pairs of pencils of compatible metrics and compatible Poisson tensors associated with the ultralocal structures. Accordingly, in Theorem~\ref{thm:bi_pencil} we show that they are in one-to-one correspondence with non-homogeneous hydrodynamic operators. A further specification on the introduced structure identifying strong bi-pencils is given in Remark~\ref{rmk:strong_bi_pencil};
    \item We show a preliminary result in terms of the theory of Nijenhuis geometry for non-homogeneous operators. In particular, in Theorem~\ref{nijnonhom} we characterise purely algebraic conditions on the coefficients of $\mathcal{A}$ non-degenerate such that a suitable~$(1,1)$-tensor is Nijenhuis torsionless.
\end{enumerate}

\subsubsection*{Structure of the paper}

In the following, in Section \ref{sect2}, we review some known results on the non-homogeneous operators of interest. In Section \ref{sec3}, we focus on the role of Casimir functions for such operators, presenting a classification for Hamiltonian structures in $n=2,3$ components. In Section \ref{sec4} and \ref{biham_nije}, we show the core results of our paper. In the first we investigate the properties of Hamiltonian pairs, classifying them in $n=2$ components and giving some preliminary results for $n=3$. The computational result is then enriched in Section \ref{biham_nije} by the introduction of the geometric structure of {bi-pencils}, which are in bijection with compatible non-homogeneous pairs. A connection with Nijenhuis geometry is finally discussed, revealing an open question: \emph{is it possible to describe compatibility results on non-homogeneous operators using the modern approach of Nijenhuis geometry?}

\section{Non-homogeneous operators}\label{sect2}
Our object of study is the differential Hamiltonian operator $\mathcal{A}_{(1+0)}$ which is naturally associated with quasilinear evolutionary systems, i.e.\ $n$-component systems of the form
\begin{equation}\label{sys}
    u^i_t=V^i_j(u)\,u^j_x+W^i(u), \qquad i=1,2,\dots n\,,
\end{equation}
where $u=(u^i(x,t))_{1 \le i \le n}$ with $u^i(x,t)$ the field variables of the theory and $(x,t)$ the independent variables. In \eqref{sys} $V^i_j(u)$, transforms as a $(1\,,1)$-tensor, $W^i(u)$ transforms as a covector and both $V^i_j$ and $W^i$ depend on the field variables only, not their derivatives\footnote{ The well-posedeness for the existence of solutions to this type of systems is studied in the classical works~\cite{Rozh1,Rozh2,Liu}}. A natural Hamiltonian operator providing the Hamiltonian description for the system~\eqref{sys} is 
\begin{equation}\label{eq:op_1_0}
    \mathcal{A}_{(1+0)} = \underbrace{\, g^{ij}(u)\,\partial_x+b^{ij}_k(u)\,u^k_x\,}_{\mathcal{A}_{(1)}} + \underbrace{\,\omega^{ij}(u)\, }_{\mathcal{A}_{(0)}} \,,
\end{equation}
with $g^{ij}$, $b^{ij}_k$ and $\omega^{ij}$ depending on the field variables only. The operator $\mathcal{A}_{(1+0)}$ represents the simplest possible extension of the first-order homogeneous operator $\mathcal{A}_{(1)}$, this being a fundamental building block in the theory of Poisson structures. First-order operators $\mathcal{A}_{(1)}$ naturally arise in the study of systems of the form \eqref{sys} setting $W^i = 0$, i.e.\
\begin{equation}\label{eq:hydro_system}
    u_t^i = V^i_j(u)\,u^j_x, \qquad i=1,2,\dots n.
\end{equation}
These systems are known as systems of \emph{hydrodynamic type}, and first-order homogeneous operators are then also referred to as Hamiltonian operators of hydrodynamic type (see e.g.~\cite{tsarev85:_poiss_hamil,tsarev91:_hamil,RizVer1,ManRizVer1}). The homogeneous zero-order operator $\mathcal{A}_{(0)}$ in~\eqref{eq:op_1_0} is represented by the ultralocal structure~$\omega^{ij}$.  In this context, the word \emph{homogeneous} refers to the degree of derivation making use of two natural grading rules:
 $   \text{deg}(\partial_x^k)=k\,,$ and $ \text{deg}(u_{hx})=h\,,$
jointly with the common properties for the degrees. 
Requiring for a differential operator to be homogeneous with respect to derivatives of degree 1, one obtains the standard form 
\begin{equation}\label{dnop_tens}
  \mathcal{A}^{ij}_{(1)} = g^{ij}(u)\,\partial_x+b^{ij}_k(u)\,u^k_x\,, 
\end{equation}
with which is associated the bracket of hydrodynamic type
\begin{equation}\label{dnop_PB}
\begin{split}
    \{\, u^i(x),u^j(y) \,\} &= g^{i j}(u(x)) \, \delta_x (x-y) + b^{i j}_{k} (u(x))\, u^k_x\, \delta (x-y)  \,.
    \end{split}
\end{equation}
In the following, we review the conditions for the operators $\mathcal{A}_{(1)}$,  $\mathcal{A}_{(0)}$ and $\mathcal{A}_{(1+0)}$ to be Hamiltonian. 

\subsection{\texorpdfstring{Hamiltonian property of homogeneous operators $\mathcal{A}_{(1)}$ and $\mathcal{A}_{(0)}$}{a1 a0}}

We briefly recall that a differential operator $\mathcal{A}$ is Hamiltonian if the associated bracket~$\{\,\cdot \,,\, \cdot\,\}_{\mathcal{A}}$ defined as
\begin{equation}
    \{F, G\}_{\mathcal{A}} = \int \frac{\delta f}{\delta u^i}\,\mathcal{A}^{ij} \,\frac{\delta g}{\delta u^j}\, dx \,,
\end{equation}
for any pairs of functionals $F = \int f\, dx$ and $G = \int g\, dx$, is a Poisson bracket, i.e.\ it is skew-symmetric and satisfies the Jacobi identity 
$$\{\{F,G\}_{\mathcal{A}},H\}_{\mathcal{A}}+\{\{G,H\}_{\mathcal{A}},F\}_{\mathcal{A}}+\{\{H,F\}_{\mathcal{A}},G\}_{\mathcal{A}}=0\,,$$
for any additional functional $H= \int h\, dx$. Depending on the form of $\mathcal{A}$, one can specify the Hamiltonian property accordingly. 

Operators of type $\mathcal{A}_{(1)}$ \eqref{dnop_tens} were firstly considered by Dubrovin and Novikov in 
 \cite{DN83}, where the authors focused on the study of the non-degenerate case (i.e.\ $\det{g^{ij}}\neq 0$), revealing the geometric properties of such operators. In particular, they proved that an operator of the form \eqref{dnop_tens} is Hamiltonian 
 if and only if  $g_{ij}=(g^{ij})^{-1}$ is a flat  metric and
 \begin{align}\label{eq:cond_ham_nondeg_first_order}
     &b^{ij}_k=-g^{is}\,\Gamma^j_{sk}\,,
\end{align} 
where $\Gamma^{i}_{jk}$ are Christoffel symbols for $g_{ij}$. Introducing the covariant derivative $\nabla_i$ defined in terms of the Levi-Civita connection of $g_{ij}$, the system \eqref{eq:hydro_system} can be rewritten as 
\begin{equation}\label{eq:non_deg_system_hom_1}
    u^i_t = ( \nabla^i\, \nabla_j\, h(u) ) u^i_x\,,    \qquad i = 1, \dots, n\,,
\end{equation}
where $\nabla^i=g^{ij}\,\nabla_j$, and $h(u)$ is the Hamiltonian density depending on the field variables only.

Releasing the non-degeneracy assumption on the leading coefficient $g^{ij}$ in $\mathcal{A}_{(1)}$ 
Grinberg extended the previous result establishing the following theorem in terms of $g^{ij}(u)$ and $b^{ij}_k(u)$:
\begin{theorem}[\cite{grin}] \label{th:ham_A}
The operator \eqref{dnop_tens} is Hamiltonian if and only if
\begin{subequations}
\begin{align} 
&g^{ij}=g^{ji} \,,\\[1ex]
&\label{eq1thm}\frac{\partial g^{ij}}{\partial u^k} = b^{ij}_k+b^{ji}_k \,, \\[1ex] \label{eq2thm}
&g^{is}b^{jk}_s-g^{js}b^{ik}_s=0 \,, \\[1ex] 
&\label{eq4thm}g^{is}\left(\frac{\partial b^{jr}_{s}}{\partial u^k}-\frac{\partial b^{jr}_{k}}{\partial u^s}\right)+b^{ij}_sb^{sr}_k-b^{ir}_s b^{sj}_k=0 \,, \\[1ex]
\begin{split}\label{eq5thm}
&\displaystyle \sum_{(i, j ,r)} \left(b^{si}_q\left(\frac{\partial b^{jr}_k}{\partial u^s}-\frac{\partial b^{jr}_s}{\partial u^k}\right)+b^{si}_k\left(\frac{\partial b^{jr}_q}{\partial u^s}-\frac{\partial b^{jr}_s}{\partial u^q}\right)\right)=0 \,,
\end{split}
\end{align} \end{subequations}
with the sum over $(i\,,j\,,k)$ is on cyclic permutations of the indices. 
\end{theorem}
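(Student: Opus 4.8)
The plan is to reduce the Hamiltonian property of $\mathcal{A}_{(1)}$ to the two defining axioms of a Poisson bracket --- skew-symmetry and the Jacobi identity --- applied to the bracket \eqref{dnop_PB}, and to show that each axiom unfolds, coefficient by coefficient in the resulting distributions, into the stated tensorial relations. I would start with skew-symmetry. Imposing $\{u^i(x),u^j(y)\}+\{u^j(y),u^i(x)\}=0$ and using $\delta_x(x-y)=-\delta_y(x-y)$ together with the transfer identity $f(u(y))\,\delta_x(x-y)=f(u(x))\,\delta_x(x-y)-\partial_k f(u(x))\,u^k_x\,\delta(x-y)$, the terms proportional to $\delta_x$ force $g^{ij}=g^{ji}$, while the surviving $\delta$-terms force $\partial g^{ij}/\partial u^k=b^{ij}_k+b^{ji}_k$, i.e.\ exactly \eqref{eq1thm}. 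These two conditions are necessary and sufficient for the bracket to be skew-symmetric.

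Next I would impose the Jacobi identity. It suffices to require it on the evaluation functionals $u^i(x)$, i.e.\ that the distribution $J^{ijk}(x,y,z):=\{\{u^i(x),u^j(y)\},u^k(z)\}+\circlearrowleft$ vanish, where $\circlearrowleft$ denotes the cyclic sum over the pairs $(i,x),(j,y),(k,z)$. Computing the inner bracket yields a distribution in $x,y$ whose coefficients depend on $u(x)$ and on $u^k_x$; applying the outer bracket $\{\,\cdot\,,u^k(z)\}$ then requires the derivation (Leibniz) property combined with the chain rule in the field variables, and, after integrating by parts to move all derivatives onto the delta functions, produces terms carrying $\delta''$, $\delta'\cdot\delta$ and $\delta\cdot\delta$, with the two delta arguments taken among $x-z$ and $y-z$ and with $u$-derivatives of the coefficients up to first order. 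Since the distributions $\delta''_{xz}\delta_{yz}$, $\delta'_{xz}\delta'_{yz}$, $\delta'_{xz}\delta_{yz}u^\bullet_x$, $\delta_{xz}\delta_{yz}$, and so on, are linearly independent, $J^{ijk}\equiv 0$ is equivalent to the separate vanishing of each of these coefficients.

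It then remains to read off the relations. Using skew-symmetry \eqref{eq1thm} to eliminate the redundant pieces, the highest-weight coefficient (the one multiplying the purely second-order delta structure) collapses to \eqref{eq2thm}, namely $g^{is}b^{jk}_s=g^{js}b^{ik}_s$; the coefficients of the mixed first-order delta terms give \eqref{eq4thm}; and the lowest-weight coefficient, after symmetrising in the relevant index pair and taking the cyclic sum over $(i,j,r)$, gives \eqref{eq5thm}. Conversely, if $g^{ij}=g^{ji}$ together with \eqref{eq1thm}, \eqref{eq2thm}, \eqref{eq4thm}, \eqref{eq5thm} hold, reversing the computation shows that the bracket is skew-symmetric and that $J^{ijk}\equiv 0$; since the Jacobi identity for arbitrary local functionals follows from its validity on the generators by bilinearity and the Leibniz rule, $\mathcal{A}_{(1)}$ is Hamiltonian.

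The main obstacle is the bookkeeping in the Jacobi computation: expanding the double bracket correctly through the derivation property, keeping track of which variables the delta functions and their derivatives depend on after the integrations by parts, and organising the cyclic sum so that the genuinely independent distributional components --- hence exactly the three nontrivial relations \eqref{eq2thm}, \eqref{eq4thm}, \eqref{eq5thm} --- are isolated without double counting. As a cross-check I would also consider recasting the whole argument in the language of Poisson vertex algebras, where skew-symmetry and Jacobi become polynomial identities in a formal variable $\lambda$ via the master formula, and the same tensorial relations drop out by comparing coefficients of monomials in $\lambda$.
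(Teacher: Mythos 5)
The paper does not reproduce a proof of this statement: it is quoted directly from Grinberg \cite{grin}, and the classical proof there (as in the Dubrovin--Novikov analysis of the non-degenerate case) is exactly the route you describe, namely imposing skew-symmetry and the Jacobi identity on the bracket \eqref{dnop_PB} evaluated on the generators and matching coefficients of the linearly independent distributions $\delta''$, $\delta'\delta'$, $\delta'\delta\,u^k_x$, $\delta\delta\,u^k_xu^\ell_x$, which yield $g^{ij}=g^{ji}$, \eqref{eq1thm}, \eqref{eq2thm}, \eqref{eq4thm} and \eqref{eq5thm} respectively. Your proposal is therefore correct and coincides in substance with the cited proof, the only caveat being the bookkeeping you already acknowledge (and which the Poisson vertex algebra reformulation you mention handles systematically).
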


We  refer to \cite{sav1} for a classification in 2 and 3 components of degenerate operators $\mathcal{A}_{(1)}$.

The operator $\mathcal{A}_{(0)}$ is represented by a $(2,0)$-tensor\footnote{We recall that zero-order operators are also known as ultralocal operators in the context of the infinite-dimensional systems, whereas traditionally called Poisson tensors for finite-dimensional ones.}
which is Hamiltonian if and only if $\omega^{ij}(u)$ satisfies the following. 
\begin{theorem} \label{th:ham_omega}
The operator $\omega^{ij}(u)$ is Hamiltonian if and only if it forms  a finite-dimensional Poisson structure, i.e.\ it satisfies the conditions
\begin{subequations}
\begin{align}\label{cond1om}
        &\omega^{i j} = - \omega^{j i}, \\
 \label{cond2om}
        &\omega^{i s} \,\frac{\partial \omega^{j k}}{\partial u^s} + \omega^{j s}\, \frac{\partial \omega^{k i}}{\partial u^s} + \omega^{k s} \,\frac{\partial \omega^{i j}}{\partial u^s} =0.
\end{align}
\end{subequations}
\end{theorem}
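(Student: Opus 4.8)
The plan is to reduce the statement to a distributional computation on the coordinate functionals $u^i(x)$. On the generators the bracket attached to $\mathcal{A}_{(0)}$ acts as $\{u^i(x),u^j(y)\}_\omega=\omega^{ij}(u(x))\,\delta(x-y)$, with no derivatives of $\delta$ since the operator has order $0$. By the standard theory of local Poisson brackets, $\{\cdot,\cdot\}_\omega$ is a Poisson bracket on the space of local functionals if and only if it is skew-symmetric and satisfies the Jacobi identity \emph{when evaluated on the generators}: indeed the bracket is bilinear and a derivation in each slot, so both axioms propagate from $u^i(x)$ to arbitrary $F=\int f\,dx$. Hence it suffices to translate skew-symmetry and the Jacobi identity for $\{u^i(x),u^j(y)\}_\omega$ into pointwise conditions on $\omega^{ij}$, and every step below will be an equivalence, so the two implications come out together.

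For skew-symmetry, since $\delta(x-y)=\delta(y-x)$, the identity $\{u^i(x),u^j(y)\}_\omega=-\{u^j(y),u^i(x)\}_\omega$ holds for all $x,y$ exactly when $\omega^{ij}(u)=-\omega^{ji}(u)$, which is \eqref{cond1om}; if one prefers to work at the level of functionals, one pairs this relation against arbitrary test densities to localise it to the pointwise statement. For the Jacobi identity, writing $\partial_s=\partial/\partial u^s$ and using that $\omega^{ij}(u(x))$ contains no $x$-derivatives of $u$, the chain rule gives
\begin{equation*}
\{\{u^i(x),u^j(y)\}_\omega,u^k(z)\}_\omega=\big(\partial_s\omega^{ij}\big)(u(x))\;\omega^{sk}(u(x))\;\delta(x-y)\,\delta(x-z)\,,
\end{equation*}
and, crucially, no terms in $\partial_x\delta$ appear — this is precisely where the order-$0$ nature of $\mathcal{A}_{(0)}$ is used, and it is the reason no integration-by-parts corrections arise. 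Summing the three cyclic permutations of $(i,x)$, $(j,y)$, $(k,z)$ and noting that $\delta(x-y)\,\delta(x-z)$ is invariant under any permutation of $x,y,z$, the Jacobi identity becomes equivalent to the pointwise relation $\omega^{sk}\,\partial_s\omega^{ij}+\omega^{si}\,\partial_s\omega^{jk}+\omega^{sj}\,\partial_s\omega^{ki}=0$; reordering the first index on each leading factor by means of \eqref{cond1om} turns this into exactly \eqref{cond2om}.

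The only genuinely delicate point is the reduction in the first paragraph — that verifying the axioms on the generators $u^i(x)$ is enough for all local functionals. I would either invoke this from the general framework of variational (local) Poisson brackets, or justify it directly by observing that the Jacobiator of $\{\cdot,\cdot\}_\omega$ is a differential-polynomial expression in $\omega^{ij}$ and its $u$-derivatives which vanishes identically if and only if it vanishes on the generators. Everything else is routine $\delta$-function bookkeeping, and because $\mathcal{A}_{(0)}$ carries no $\partial_x$ this bookkeeping collapses to precisely the algebraic identities \eqref{cond1om}–\eqref{cond2om} defining a finite-dimensional Poisson bivector, which is the content of the theorem.
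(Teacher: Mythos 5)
Your proof is correct, and it is essentially the classical argument that the paper itself relies on: Theorem \ref{th:ham_omega} is stated there without proof, as a known fact about ultralocal (order-zero) brackets, and your distributional verification on the coordinate functionals is exactly the computation underlying that classical statement. The one point you rightly flag --- that skew-symmetry and the Jacobi identity need only be checked on the generators $u^i(x)$ --- is standard in the formal variational setting: for a skew-adjoint local operator the Jacobi identity on all local functionals is equivalent to the vanishing of the Schouten bracket $[[\mathcal{A}_{(0)},\mathcal{A}_{(0)}]]$, and that tri-vector is precisely the cyclic distributional expression you compute; invoking this equivalence from the standard references on local Poisson brackets (rather than the slightly circular remark about the Jacobiator) is the clean way to close that step. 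With that in place, the rest of your bookkeeping is accurate: no $\partial_x\delta$ terms arise because the operator has order zero, the products of delta functions coincide on the diagonal so the three cyclic terms localise to a single pointwise identity, and reordering the contracted index via \eqref{cond1om} turns it into exactly \eqref{cond2om}.
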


We stress that it is possible to generalise the notion of homogeneous operators also to higher orders. Indeed, following~\cite{DubrovinNovikov:PBHT}, a homogeneous operator of order $m \ge 1$ reads as  
\begin{align}\label{hhom}
   \begin{split}
\mathcal{A}^{ij}_{(m)}&=a^{ij}\!(u) \,\partial^m_x+b^{ij}_k\!(u)\,u^k_x\,\partial^{m-1}_x+\big(c^{ij}_k\!(u)\,u^k_{xx}+c^{ij}_{k\ell}(u)\,u^k_x\,u^\ell_x\big)\partial^{m-2}_x\\[1ex]
&~~~
+\cdots \,+\big( r^{ij}_k\!(u)\,u^k_{mx} + \dots +r^{ij}_{k_1 \dots \,k_m}\!(u)\,u^{k_1}_x\,\cdots\, u^{k_m}_x\big) \,,
\end{split}
\end{align}  
where the coefficients of the different orders of the differential operator $\partial_x$ depend on the field variables only. The necessary and sufficient conditions for the operators \eqref{hhom} to be Hamiltonian were obtained for $m=2,3$ by Doyle \cite{Doyle} and Potemin \cite{Pote} independently. To the authors' knowledge, for the case $m>3$ an analogous result has not been established yet. For a general description of homogeneous Hamiltonian operators with a differential geometric approach we refer to \cite{mokhov98:_sympl_poiss}.

\subsection{\texorpdfstring{Hamiltonian property of non-homogeneous operators $\mathcal{A}_{(1+0)}$}{a10}}

A larger class of evolutionary systems
is represented in the Hamiltonian formalism by the sum of two or more differential homogeneous operators, for this called \emph{non-homogeneous}.
An operator $\mathcal{A}_{(k+m)}$ is said to be an operator of type $(k+m)$ if there exist two homogeneous operators $\mathcal{A}_{(k)}$ and $\mathcal{A}_{(m)}$ (with degree $k$ and $m$ respectively) such that $\mathcal{A}_{(k+m)}=\mathcal{A}_{(k)}+\mathcal{A}_{(m)}$. This notation was firstly introduced by Novikov in \cite{Novi2}. 
As an example, the second Hamiltonian structure of the celebrated Korteweg-de Vries equation equation (namely the {Magri} operator) belongs to the class of $(3+1)$ non-homogeneous Hamiltonian operators, being $\mathcal{A}_{(3+1)}=\partial_x^3+2u\,\partial_x+u_x$.

A non-homogeneous operator $\mathcal{A}_{(k+m)}$ is  Hamiltonian if and only if $\mathcal{A}_{(k)}$ and $\mathcal{A}_{(m)}$ are both independently Hamiltonian, and the Schouten bracket between the operators of mixed orders $[[\mathcal{A}_{(k)},\mathcal{A}_{(m)}]]$ vanishes\footnote{The Schouten bracket is an 
extension of the commutator of vector fields on the space of local multi-vectors $\Lambda^k$. It is a bilinear operation $\Lambda^\ell \times \Lambda^k \to \Lambda^{\ell+k-1}$ such that it coincides with the commutator of local vector fields for $\ell = k = 1$ and satisfies a graded Leibniz property on the exterior product, i.e.
\begin{equation*}
    [[\mathcal{A}, \mathcal{X} \wedge \mathcal{Y}]] = [[\mathcal{A}, \mathcal{X}]] \wedge \mathcal{Y} + (-1)^{(a-1)x}\mathcal{X} \wedge [[\mathcal{A}, \mathcal{Y}]]
\end{equation*}
for $\mathcal{A} \in \Lambda^a$ and $\mathcal{X}\in \Lambda^x$.}. 

We can now identify the operator $\mathcal{A}_{(1+0)}$ in \eqref{eq:op_1_0} as the non-homogeneous operator of type~$(1+0)$ given by 
\begin{equation} \label{eq:operator_c}
  \mathcal{A}^{ij}_{(1+0)}= \mathcal{A}_{(1)}^{ij} + \mathcal{A}_{(0)}^{ij} = \big(g^{ij}(u)\,\partial_x+b^{ij}_k(u)\,u^k_x\big)+\big(\omega^{ij}(u)\big)\,. 
\end{equation}
Such operators were introduced in~\cite{DubrovinNovikov:PBHT}, as the simplest extension of first-order operators for hydrodynamic systems, for this reason also known as operators of \emph{non-homogeneous hydrodynamic type}.

The following theorem holds true.

\begin{theorem}[\cite{FerMok1,mokhov98:_sympl_poiss}]\label{thm1}
A non-homogeneous operator $\mathcal{A}_{(1+0)}$ of type $(1+0)$ is Hamiltonian if and only if
\begin{enumerate}[label=\roman*.]
    \item $\mathcal{A}_{(1)}=g^{i j}\, \partial_x + b^{i j}_{\,\, k} \, u^k_x$ 
is Hamiltonian,
\item $\mathcal{A}_{(0)}=\omega^{i j}$ is Hamiltonian, and
\item the compatibility conditions are satisfied
\begin{align}\label{cond1}
    \Phi^{i j k} &= \Phi^{k i j} \,,
\\\label{eq:phi}
        \frac{\partial \Phi^{i j k}}{\partial u^r} & = \sum_{(i,j,k)}  b^{s i}_{r}\, \frac{\partial \omega^{j k}}{\partial u^s} + \left( \frac{\partial b^{i j}_{r}}{\partial u^s} - \frac{\partial b^{i j}_{s}}{\partial u^r}  \right)\omega^{s k}\,,
\end{align}
where $\Phi^{i j k}$ is the $(3,0)$-tensor
\begin{align}
        \Phi^{i j k} & = g^{i s}\, \frac{\partial \omega^{j k}}{\partial u^s} - b^{i j}_{s} \, \omega^{s k} - b^{i k}_{s} \, \omega^{j s} \,. \end{align} 
\end{enumerate}
\end{theorem}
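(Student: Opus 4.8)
The plan is to deduce the statement from the general principle recalled just above: a non\-homogeneous operator $\mathcal{A}_{(1+0)}=\mathcal{A}_{(1)}+\mathcal{A}_{(0)}$ is Hamiltonian if and only if $\mathcal{A}_{(1)}$ and $\mathcal{A}_{(0)}$ are each Hamiltonian and the mixed Schouten bracket $[[\mathcal{A}_{(1)},\mathcal{A}_{(0)}]]$ vanishes. Items (i) and (ii) are exactly the first two requirements, being the content of Theorems~\ref{th:ham_A} and~\ref{th:ham_omega}, so the whole theorem reduces to showing that $[[\mathcal{A}_{(1)},\mathcal{A}_{(0)}]]=0$ is equivalent to the pair of tensorial identities~\eqref{cond1}--\eqref{eq:phi}. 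I would run this computation in the formal variational (odd\-variable) calculus of local multivectors: encode $\mathcal{A}_{(1)}$ and $\mathcal{A}_{(0)}$ as the local bivectors $P_{1}=\tfrac12\int\big(g^{ij}\theta_i\theta_{j,x}+b^{ij}_k u^k_x\,\theta_i\theta_j\big)\,dx$ and $P_{0}=\tfrac12\int\omega^{ij}\theta_i\theta_j\,dx$, with $\theta_i$ odd fibre coordinates, and use that $\mathcal{A}_{(1+0)}$ is Hamiltonian iff the master equation $[[P_1+P_0,\,P_1+P_0]]=0$ holds.

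Next I would invoke the grading argument. Bilinearity and graded symmetry of the Schouten bracket on bivectors give $[[P_1+P_0,P_1+P_0]]=[[P_1,P_1]]+2[[P_1,P_0]]+[[P_0,P_0]]$, and with respect to the derivative grading ($\deg\partial_x=\deg u_{kx}=\deg\theta^{(k)}=k$, which is well\-defined precisely because $g,b,\omega$ depend on $u$ only) the three trivectors on the right are homogeneous of degrees $2$, $1$ and $0$ respectively. A sum of homogeneous multivectors of distinct degrees vanishes if and only if each one vanishes, so the three conditions decouple: the degree\-$2$ part $[[P_1,P_1]]=0$ is condition (i) by Grinberg's Theorem~\ref{th:ham_A}, and the degree\-$0$ part $[[P_0,P_0]]=0$ is the finite\-dimensional Jacobi identity, i.e.\ condition (ii).

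It remains to evaluate the cross term. Using $[[P,Q]]=\int\big(\tfrac{\delta P}{\delta\theta_i}\tfrac{\delta Q}{\delta u^i}+\tfrac{\delta P}{\delta u^i}\tfrac{\delta Q}{\delta\theta_i}\big)dx$, I would compute the four variational derivatives of $P_1$ and $P_0$ — this is where the integration by parts against $\partial_x$ and the antisymmetry of the $\theta$'s enter. The result is a degree\-$1$ trivector, hence of the form $\int\big(\Psi^{ijk}\,\theta_i\theta_j\theta_{k,x}+\Theta^{ijk}_{r}\,u^r_x\,\theta_i\theta_j\theta_k\big)dx$ modulo total derivatives. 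Reading off the coefficient of $\theta_i\theta_j\theta_{k,x}$, and using the antisymmetry of $\theta_i\theta_j\theta_k$ to symmetrise, yields (up to a total derivative) the statement that $\Phi^{ijk}=g^{is}\,\partial\omega^{jk}/\partial u^s-b^{ij}_s\,\omega^{sk}-b^{ik}_s\,\omega^{js}$ is invariant under cyclic permutation of $(i,j,k)$, i.e.~\eqref{cond1}; reading off the coefficient of $u^r_x\theta_i\theta_j\theta_k$, and then using \eqref{cond1} to simplify, yields exactly~\eqref{eq:phi}. Conversely, substituting \eqref{cond1}--\eqref{eq:phi} back shows both coefficients vanish, so $[[P_1,P_0]]=0$; this closes the equivalence.

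The conceptual skeleton (Schouten decomposition plus grading) is routine; the real labour, and the place where sign errors creep in, is the explicit variational evaluation of $[[P_1,P_0]]$ — carrying the non\-symmetric coefficient $b^{ij}_k$ through the odd\-variable integrations by parts, fixing the normalisation convention for the Schouten bracket of two bivectors, and recognising that the somewhat redundant family of coefficient equations collapses precisely to the two stated identities built from the single tensor $\Phi^{ijk}$. An equivalent but more pedestrian route would be to expand the Jacobi identity for the bracket $\{u^i(x),u^j(y)\}$ directly, organising the result by the coefficients of $\delta''$, $\delta'$ and $\delta$ in the two pairs of arguments; the terms bilinear in the coefficients of $\mathcal{A}_{(1)}$ and $\mathcal{A}_{(0)}$ again reorganise into~\eqref{cond1}--\eqref{eq:phi}. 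As a sanity check I would verify the final tensorial identities on the KdV example to confirm the normalisation of $\Phi^{ijk}$.
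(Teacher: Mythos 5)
The paper does not prove Theorem~\ref{thm1}: it is quoted from the literature (\cite{FerMok1,mokhov98:_sympl_poiss}), the only ingredient supplied in the text being the general principle, stated just before the theorem, that $\mathcal{A}_{(k+m)}$ is Hamiltonian iff both homogeneous parts are Hamiltonian and the mixed Schouten bracket vanishes. Your proposal follows exactly that route — decompose $[[\mathcal{A}_{(1)}+\mathcal{A}_{(0)},\mathcal{A}_{(1)}+\mathcal{A}_{(0)}]]$ by differential degree so that items i.\ and ii.\ (Theorems~\ref{th:ham_A} and~\ref{th:ham_omega}) and the cross term decouple, then evaluate $[[\mathcal{A}_{(1)},\mathcal{A}_{(0)}]]$ in the odd-variable formalism and identify the coefficients of $\theta_i\theta_j\theta_{k,x}$ and $u^r_x\theta_i\theta_j\theta_k$, modulo total derivatives, with \eqref{cond1} and \eqref{eq:phi} — which is the standard derivation behind the cited result, so the approach is sound and consistent with the paper; the only caveat is that the decisive computation (the reduction of the trivector density to canonical form and the collapse of its vanishing conditions to precisely the pair \eqref{cond1}--\eqref{eq:phi}) is described rather than carried out.
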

Lastly, we mention that in the non-degenerate case ($\det g^{ij} \neq 0$), the expression in~\eqref{eq:non_deg_system_hom_1} 
becomes
\begin{equation}
    u^i_t=\left(\nabla^i\,\nabla_j\, h(u)\right)u^j_x+ (\tilde{\nabla}^i \,h(u)), \qquad i=1,2,\dots n, 
\end{equation}
where $\tilde{\nabla}^i=\omega^{ij}\,\nabla_j$.

We conclude this section by presenting two paradigmatic examples of quasilinear systems and their associated Hamiltonian operators: the sinh-Gordon for $2$ components and the inverted KdV for $3$ components. 

\begin{example}
In the sinh-Gordon equation in the field variable $\varphi(\xi,\tau)$ 
\begin{equation}
    \phi_{\xi \tau} = \sinh \varphi
\end{equation}
the transformation $\varphi = 2 \log u$ and the introduction of $v=2 u_{\tau}/u$ yield the quasilinear system 
\begin{equation}
    \begin{cases}
        u_t = \dfrac{1}{2}\,uv \\[1ex]
        v_t = v_x + \dfrac{1}{2}\left(u^2  - \dfrac{1}{u^2} \right) 
    \end{cases}\,,
\end{equation}
in the light-cone coordinates $\tau = t$ and $\xi = t-x$. The Hamiltonian operator related to this system has the form $\mathcal{A}_{(1+0)}$, i.e.\
\begin{equation}
    \mathcal{A}_{(1+0)}^{ij} = \begin{pmatrix}
        0 & 0 \\[1ex]
        0 & 1
    \end{pmatrix}\partial_x + \frac{1}{2}\begin{pmatrix}
        0 & u \\[1ex]
        -u & 0
    \end{pmatrix}
\end{equation}
with $g^{ij}$ degenerate, $b^{ij}_k$ vanishing identically in all the components and $\omega^{ij}$ depending on the variable $u$ only. 
\end{example} 

A particularly interesting class of quasilinear systems is determined by applying the so-called \emph{inversion procedure} introduced in \cite{tsa3} by Tsarev to scalar evolutionary systems, i.e.\ of the form 
\begin{equation}\label{evs1}
    u_t=F(u,u_x,\dots , u_{(k-1)x})+G(u,u_x, \dots u_{(k-1)x})\, u_{kx},
\end{equation}
with $F,G$ smooth functions. Introducing the auxiliary variables
\begin{equation}
    u^1=u,\quad u^2=u_x, \quad \dots \quad  u^{k}=u_{(k-1)x},
\end{equation}
the scalar equation \eqref{evs1} is mapped into the $k$-component system 
\begin{equation}
     u^1_x=u^2,\quad 
        u^2_x=u^3,\quad 
        \cdots \quad  u^k_x=\dfrac{F(u^1,\dots u^k)}{G(u^1,\dots u^k)}u^1_t+\dfrac{F(u^1,\dots, u^k)}{G(u^1,\dots u^k)}\,,
\end{equation}
which is quasilinear and non-homogeneous. After applying the inversion of the independent variables $x$ and $t$ the obtained equations constitute a quasilinear system of the form~\eqref{sys}. The procedure of increasing the number of variables and equations by decreasing the degree of derivation was investigated by Tsarev in \cite{tsa3} and by two of the present authors in~\cite{DellAVer1}. 
\begin{example}[\cite{mokhov98:_sympl_poiss}]\label{ex:inverted_kdv}
    The inverted system for the KdV equation 
    \begin{equation} 
    u_t=6uu_x+u_{xxx}
    \end{equation}
after introducing the variables $u^1=u,u^2=u_x$ and $u^3=u_{xx}$ is
\begin{equation}\label{kdvsys}
\begin{cases}
u^1_t=u^2\\u^2_t=u^3\\u^3_t=u^1_x+6u^1u^2\end{cases}\,. 
\end{equation} 
In addition, in \cite{mk2} the author finds a transformation of variables 
{\small
\begin{equation}
    u^1=\frac{w^1-w^3}{\sqrt{2}},\qquad  u^2=w^2,\qquad u^3=\frac{w^1+w^3}{\sqrt{2}}+\left(w^1-w^3\right)^2 ,
\end{equation}}
also known as \emph{local quadratic unimodular change}, such that the KdV equation reads as 
{\small
\begin{equation}\label{kdvsys2}
    \begin{cases}
    w^1_t=-\dfrac{1}{2}\left(w^1-w^3\right)_x+w^2\left(w^1-w^3\right)+\dfrac{1}{\sqrt{2}}w^2\\
    w^2_t=\left(w^1-w^3\right)^2+\dfrac{1}{\sqrt{2}}\left(w^1+w^3\right)\\
    w^3_t=-\dfrac{1}{2}\left(w^1-w^3\right)_x+w^2\left(w^1-w^3\right)-\dfrac{1}{\sqrt{2}}w^2    \end{cases} \,.
\end{equation}}

\noindent
Despite this is not the most usual presentation of the KdV equation, it is possible to prove its integrability by introducing a bi-Hamiltonian structure with the following operators:
{\small
\begin{subequations}\label{kdvops}
\begin{align}  
&\mathcal{A}^{ij}_{(1+0)}=\begin{pmatrix}
1&0&0\\0&-1&0\\0&0&-1
\end{pmatrix}\partial_x+\begin{pmatrix}
0&-2w^3&2w^2\\2w^3&0&2w^1\\-2w^2&-2w^1&0
\end{pmatrix},\\[2ex]
&\mathcal{B}^{ij}_{(1+0)}=\frac{1}{2}\begin{pmatrix}
1&0&1\\0&0&0\\1&0&1
\end{pmatrix}\partial_x+\begin{pmatrix}
0&w^1-w^3+\frac{1}{\sqrt{2}}&0\\
w^3-w^1-\frac{1}{\sqrt{2}}&0&w^3-w^1+\frac{1}{\sqrt{2}}\\0&w^1-w^3-\frac{1}{\sqrt{2}}&0
\end{pmatrix},\label{kdvopsb}
\end{align}
\end{subequations}}
with $\mathcal{A}_{(1+0)}$ non-degenerate in its leading coefficient $g_{\mathcal{A}}(w)$ and $\mathcal{B}_{(1+0)}$ degenerate in~$g_{\mathcal{B}}(w)$. 

\end{example}
We refer to \cite{DellAVer1} for a classification of the Hamiltonian structures of the type $\mathcal{A}_{(1+0)}$ associated with quasilinear systems in $2$ and $3$ components, extending the results in \cite{Savoldi1} for degenerate operators of type $\mathcal{A}_{(1)}$ and to \cite{Riz1} for a recent classification of similar operators in more then one dimension.

\section{Casimirs of the operators}\label{sec3}
In this Section, we introduce the Casimir functionals associated with non-homogeneous operators $\mathcal{A}_{(1+0)}$, which will be relevant in the study of the bi-Hamiltonian systems treated in the next Section. We give the explicit form of the Casimirs both for the degenerate and the non-degenerate cases.

A Casimir $F=\int f \,dx$ of the Hamiltonian operator $\mathcal{A}$ is a functional such that in local coordinates
\begin{equation}\label{caseq}
    \mathcal{A}^{ij}\,\frac{\delta F}{\delta u^j}=0, \qquad i=1,2,\dots n.
\end{equation}
In terms of Poisson brackets, this is equivalent to requiring 
$\{F,G\}_{\mathcal{A}} = 0$ for any functional $G$.
For non-homogeneous hydrodynamic type operators $\mathcal{A}_{(1+0)}$ as in~\eqref{eq:operator_c}, formula~\eqref{caseq} reads as 
\begin{equation}
    \left(g^{ij}(u)\partial_x+b^{ij}_k(u)u^k_x+\omega^{ij}(u)\right)\frac{\delta F}{\delta u^j}=0, \qquad i=1,2,\dots n.
\end{equation}

\begin{remark}[On hydrodynamic functionals] \label{rem}
 A particular class of functionals $F$ is given by those whose densities $f$ only depend on the field variables and not on their derivatives. 
They play a key role in the theory of systems of hydrodynamic type, especially for bi-Hamiltonian systems with compatible Dubrovin-Novikov operators. This is the only possible choice for quasilinear systems.

\end{remark}
Because of Remark \ref{rem}, the hydrodynamic Casimir $F(u^1,\dots u^n)$ of operators~$(1+0)$ is solution of the following system:
\begin{subequations}
\begin{align}
\left(g^{ij}\partial_x+b^{ij}_ku^k_x\right)\frac{\partial F}{\partial u^j}=0, \qquad i=1,2,\dots n,\label{cas1}\\[1ex]
    \omega^{ij}\frac{\partial F}{\partial u^j}=0, \label{cas0}\qquad i=1,2,\dots n.
\end{align}
\end{subequations}
In the following two subsections, we distinguish two very different cases: operators with non-degenerate leading coefficients $g^{ij}$ and operators whose leading coefficient has no maximal rank.

\subsection{Non-degenerate case}
Let us now assume the non-degeneracy hypothesis on $\mathcal{A}_{(1+0)}$ in \eqref{eq:operator_c}, i.e.\ let $\det g\neq 0$. As we saw in the previous Section, the first-order terms in $\mathcal{A}_{(1)}$ define a Dubrovin-Novikov operator~\eqref{dnop_tens} satisfying the conditions \eqref{eq:cond_ham_nondeg_first_order} for which $b^{ij}_k=-g^{is}\,\Gamma^j_{sk}$. In this case, the Casimir equation~\eqref{cas1} reads as
\begin{equation}
  g^{ik}\frac{\partial F}{\partial u^k\partial u^j}+\Gamma^{ik}_j\frac{\partial F}{\partial u^k} = \nabla^i\, \nabla_j \, F=0, \qquad i,j=1,2,\dots n\,.
\end{equation}
From the geometric point of view, the solution $F$ of this equation can be written in terms of $n$ functionally independent densities $F=c_1F^1+\dots c_nF^n$ such that the change of coordinates
\begin{equation}\label{darbcoorcas}
    \tilde{u}^1=F^1,\quad \dots \quad \tilde{u}^n=F^n,
\end{equation}
reduces the leading coefficient $g$ to a constant form, and all the symbols $\Gamma^{ij}_k$ vanish identically. The constant form for  $\mathcal{A}_{(1)}$ is also known as Darboux form. 

In the recent work \cite{GubOliSgrVer}, the authors studied non-degenerate operators of type~$(1+0)$, in which the leading coefficient is in constant form and, consequently, the ultralocal operator is linear in the field variables. 
The following result holds true:
\begin{theorem}[\cite{GubOliSgrVer}]\label{gencasthm}
    In the non-degenerate case, operators of type $
(1+0)$ in Darboux form admit only linear Casimirs. 
\end{theorem}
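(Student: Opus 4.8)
The plan is to work in Darboux coordinates, where by hypothesis $g^{ij}$ is constant (say $g^{ij}=\eta^{ij}$, a non-degenerate constant symmetric matrix) and the Christoffel symbols of $g$ vanish identically, so that $b^{ij}_k\equiv 0$ by \eqref{eq:cond_ham_nondeg_first_order}. In this frame, the Hamiltonian conditions for $\mathcal{A}_{(1+0)}$ from Theorem~\ref{thm1} simplify drastically. First I would compute the tensor $\Phi^{ijk}$ in these coordinates: since $b^{ij}_k=0$, one gets $\Phi^{ijk}=\eta^{is}\,\partial\omega^{jk}/\partial u^s$. The cyclic symmetry condition \eqref{cond1}, $\Phi^{ijk}=\Phi^{kij}$, together with \eqref{eq:phi}, which now reads $\partial\Phi^{ijk}/\partial u^r=0$ (all the $b$-dependent terms on the right-hand side vanish), forces $\Phi^{ijk}$ to be a constant $(3,0)$-tensor. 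Hence $\eta^{is}\,\partial\omega^{jk}/\partial u^s$ is constant in $u$, which means each $\omega^{jk}$ is an affine (degree $\le 1$) function of the field variables: $\omega^{jk}(u)=c^{jk}_\ell u^\ell + d^{jk}$ with $c^{jk}_\ell$, $d^{jk}$ constants. This recovers the statement made in the excerpt that the ultralocal operator is linear; I would state it as the first lemma-step.

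Next I would write down the Casimir system \eqref{cas1}--\eqref{cas0} in Darboux coordinates. Equation \eqref{cas1} becomes $\eta^{ij}\,\partial^2 F/\partial u^k\partial u^j=0$ for all $i$ (taking the relevant index combinations), and since $\eta$ is invertible this is equivalent to the Hessian of $F$ vanishing: $\partial^2 F/\partial u^j\partial u^k=0$ for all $j,k$. Therefore $F$ must be an affine function of the field variables, $F=a_j u^j + a_0$; the constant term is irrelevant for a density defining a functional up to total derivatives, so effectively $F$ is \emph{linear}, $F=a_j u^j$. This already shows that every hydrodynamic Casimir (in the sense of Remark~\ref{rem}) is at most linear, which is the content of the theorem; the constant can be absorbed or dropped. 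The remaining equation \eqref{cas0}, $\omega^{ij}\,\partial F/\partial u^j=0$, i.e.\ $\omega^{ij}a_j=0$, is then simply a linear-algebraic constraint on the coefficient vector $a_j$: it must lie in the kernel of the (in general $u$-dependent) matrix $\omega^{ij}(u)$ for all $u$, equivalently in the common kernel of the constant matrices $c^{ij}_\ell$ and $d^{ij}$. In the non-degenerate-leading-coefficient setting this is a genuine restriction but does not change the conclusion: whatever Casimirs exist are spanned by such linear densities.

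The one subtlety I would be careful about — and I expect this to be the only real obstacle — is justifying that it suffices to consider \emph{hydrodynamic} Casimirs, i.e.\ densities depending on $u$ only and not on $u_x, u_{xx},\dots$. For operators built from a first-order hydrodynamic part this is standard (the Casimir equation \eqref{caseq} for $\mathcal{A}_{(1)}$ alone forces hydrodynamic densities, as recalled around Remark~\ref{rem}), and adding the ultralocal term $\omega^{ij}$ cannot enlarge the solution space in the direction of derivative-dependent densities. I would spell out that the leading-order (in $\partial_x$) part of \eqref{caseq} is precisely the first-order Dubrovin--Novikov Casimir equation, whose variational solutions are hydrodynamic by the non-degeneracy of $g$; then \eqref{cas0} is the additional zero-order constraint. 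With that reduction in hand, the argument above is complete: the combination of the vanishing-Hessian condition from the first-order part and the linear constraint from the ultralocal part shows that only linear Casimirs can occur. I would then simply remark that the explicit list of which linear combinations survive depends on $\omega^{ij}$ and is recorded in the tables, tying back to the broader classification mentioned in the introduction.
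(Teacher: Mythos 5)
Your proposal is correct, but note that the paper itself does not prove Theorem~\ref{gencasthm}: the result is imported from \cite{GubOliSgrVer}, so there is no in-paper proof to compare against. Your route is the natural one and matches what the paper does elsewhere: the affinity of the ultralocal term in Darboux coordinates, which you re-derive from the constancy of $\Phi^{ijk}$ via \eqref{eq:phi}, is exactly the computation carried out in Section~\ref{sec4} (there obtained from condition \eqref{nondegcom}); and the core of the argument — in flat coordinates \eqref{cas1} forces $\eta^{ij}\,\partial^2 F/\partial u^j\partial u^k=0$, hence a vanishing Hessian and an affine density, while \eqref{cas0} merely restricts the admissible coefficient vectors $a_j$ through $\omega^{ij}a_j=0$ — is sound. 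The only point to tighten is the reduction to hydrodynamic densities, which you correctly flag as the delicate step; rather than appealing to the first-order case, it is cleaner to run the top-order argument directly on the full Casimir equation \eqref{caseq}: if $\psi_j=\delta F/\delta u^j$ depends on derivatives up to order $m\ge 1$, the coefficient of $u^k_{(m+1)x}$ in $\eta^{ij}\partial_x\psi_j+\omega^{ij}\psi_j$ is $\eta^{ij}\,\partial\psi_j/\partial u^k_{mx}$, whose vanishing together with the non-degeneracy of $\eta$ contradicts the assumed dependence; hence $\psi_j=\psi_j(u)$, the equation then splits by degree in $u_x$ into $\psi_j$ constant and $\omega^{ij}\psi_j=0$, and a constant variational derivative means $F$ is a linear functional modulo total derivatives. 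With that spelled out, your proof is complete and fully consistent with the classification discussion surrounding the theorem.
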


\begin{example}Let us consider the first operator of the KdV in Example \ref{ex:inverted_kdv}. The operator $\mathcal{A}_{(1+0)}$ is non-degenerate and explicitly reads as
\begin{equation}
    \mathcal{A}^{ij}_{(1+0)}=\begin{pmatrix}
1&0&0\\0&-1&0\\0&0&-1
\end{pmatrix}\partial_x+\begin{pmatrix}
0&-2w&2v\\2w&0&2u\\-2v&-2u&0
\end{pmatrix}\,
\end{equation}
In accordance with Theorem \ref{gencasthm} the Casimirs of $\mathcal{A}_{(1)}$ are given by
\begin{equation}
    F_1(u,v,w)=c_1 u+ c_2 v+ c_3 w+ c_4,
\end{equation}
where $c_i$ are arbitrary constants. On the other hand, the Casimirs of $\mathcal{A}_{(0)}$ are arbitrary functions in $\xi=uw+v^2$, i.e.
\begin{equation}
    F_0(uw+v^2).
\end{equation}
Combining them the only possible Casimir for both is trivial, that is $F=c_4$.
\end{example}

A more general and interesting situation is covered by the operators with degenerate leading coefficients. 

\subsection{Degenerate case}\label{sec:casimir_degenerate_case}

The present investigation is inspired by the KdV bi-Hamiltonian structure as described in Example~\ref{ex:inverted_kdv} 
where the second operator $\mathcal{B}_{(1+0)}$ has degenerate leading coefficient of rank $2$. 
\begin{example}
The operator 
\begin{equation}\label{eq:operator_B_kdv}
    \mathcal{B}^{ij}_{(1+0)}=\frac{1}{2}\begin{pmatrix}
1&0&1\\0&0&0\\1&0&1
\end{pmatrix}\partial_x+\begin{pmatrix}
0&u-w+\frac{1}{\sqrt{2}}&0\\
w-u-\frac{1}{\sqrt{2}}&0&w-u+\frac{1}{\sqrt{2}}\\0&u-w-\frac{1}{\sqrt{2}}&0
\end{pmatrix}
\end{equation}
admits the most general Casimir function of the form
\begin{equation}\label{casb2}
     F(u,v,w)= (u-w)^2- \sqrt{2} (u+w). 
\end{equation}
The Casimir associated with 
$\mathcal{B}_{(1)}$ is 
\begin{equation}
     F_1(u,v,w)=c_1(u+v)+ \varphi(v)+\psi(v,u-w)\,,
\end{equation}
where $\varphi, \psi$ are arbitrary functions in the indicated variables, $c_1$ an arbitrary constant. The Casimir for the ultralocal term $\mathcal{B}_{(0)}$ is
\begin{equation}
  F_0(u,v,w)=\theta(\xi), \qquad    \xi=(u-w)^2- \sqrt{2} (u+w),
\end{equation}
i.e.\ any arbitrary function $\theta$ in the variable $\xi$. We emphasise that $F$ in \eqref{casb2} serves as the Hamiltonian density of the first structure $\mathcal{A}_{(1+0)}$ of the pair.
\end{example}

A general result for the Casimirs associated with degenerate operators is difficult to obtain. Indeed, we remark that no clear geometric (or covariant) interpretation of these operators is currently available in the literature, due to the impossibility to define a unique set of compatible symbols of connections for degenerate symmetric tensors. For this reason, we choose to present a complete list (with the most general solutions) of Casimirs for the classified degenerate operators of type $(1+0)$ in $n=2,3$ components. The classification here presented is then formulated up to diffeomorphisms of the field variables $(u^1,\dots, u^n)$. We use the classification results obtained in~\cite{DellAVer1} for degenerate operators $\mathcal{C}_{(1+0)}$ and we list their Casimir functions in Tables 1 and 2 for operators in $n=2$ and $n=3$ components respectively. According to the notation used in~\cite{DellAVer1} each degenerate operator $\mathcal{C}_{(1+0)}=C^{ij}_{\ell,k}$ is identified by two indices: with~$\ell$ we indicate the number of components and with $k$ the enumeration of the operators (listed in Appendix \ref{app:old_classification}). 

For the sake of clarity, we illustrate the procedure for two selected operators $\mathcal{C}_{(1+0)}$ from the classification in \cite{DellAVer1}, namely $C^{ij}_{3,2}$ and $C^{ij}_{3,5}$, that we recall here: 
{\small
\begin{subequations}
\begin{align}
C^{ij}_{3,2}&=
    \begin{pmatrix}
        1&0&0\\
        0&0&0\\
        0&0&0
    \end{pmatrix}\partial_x + 
    \begin{pmatrix}
        0&f(v,w)&g(v,w)\\
        -f(v,w)&0&h(v,w)\\
        -g(v,w)&-h(v,w)&0
    \end{pmatrix},  \label{eq:operator_C32} \\[2ex]
    C^{ij}_{3,5}&=
    \begin{pmatrix}
        1&0&0\\
        0&0&0\\
        0&0&0
    \end{pmatrix}\partial_x 
    + \frac{1}{u}  \begin{pmatrix}
        0 & -v_x & -w_x \\
        v_x & 0 & 0 \\
        w_x & 0 & 0 \\
    \end{pmatrix}
    + \frac{1}{u}
    \begin{pmatrix}
        0&f(v,w)&g(v,w)\\
        -f(v,w)&0&h(v,w)\\
        -g(v,w)&-h(v,w)&0
    \end{pmatrix}, \label{eq:operator_C35}
\end{align}
\end{subequations}}

\noindent
with $f$, $g$, $h$ arbitrary functions in $(v,w)$ satisfying the constraint
{\small
\begin{equation}\label{jacob1}
   f \! \left(v , w\right) \frac{\partial}{\partial v}h \! \left(v , w\right)-h\!\left(v , w\right) \frac{\partial}{\partial v}f \! \left(v , w\right)+\mathit{g} \! \left(v , w\right) \frac{\partial}{\partial w}h \! \left(v , w\right)-h \! \left(v , w\right) \frac{\partial}{\partial w}\mathit{g} \! \left(v , w\right)=0 \,, 
\end{equation}}

\noindent
given by the vanishing of the Jacobi identity and also referred to as closure relation in the following.

We first consider the Casimirs of the term $\mathcal{C}_{(1)}$ in $C^{ij}_{3,2}$ and $C^{ij}_{3,5}$ respectively, i.e.\ the first term on the right hand side of \eqref{eq:operator_C32} and the first two terms on the right hand side of  \eqref{eq:operator_C35}.
Solving equation \eqref{cas1}, we obtain for $\mathcal{C}_{(1)}$ in $C^{ij}_{3,2}$,
\begin{equation}
    F_1(u,v,w)=c\, u+ \varphi(v,w),
\end{equation} 

with $c$ constant and $\varphi$ and arbitrary function in $(v,w)$, whereas for $\mathcal{C}_{(1)}$ in $C^{ij}_{3,5}$ we have
$
    F_1(u,v,w)= c,
$
with $c$ a constant.

We now consider the part $\mathcal{C}_{(0)}$ in both operators, i.e.\ we compare the Poisson tensors~$\omega^{ij}_{3,2}$ and~$\omega^{ij}_{3,5}$ 
in \eqref{eq:operator_C32} and  
in \eqref{eq:operator_C35} respectively. We observe they are quite similar, and related by 
\begin{equation}\label{eq:ultralocal_35_32}
    \omega^{ij}_{3,5}=\frac{1}{u}\,\omega^{ij}_{3,2}. 
\end{equation}
Due to equation \eqref{cas0}, the Casimir functions of the part $\mathcal{C}_{(0)}$ in \eqref{eq:ultralocal_35_32} coincide. In particular, $F_{0}(u,v,w)$ is a Casimir for $\mathcal{C}_{0}$ if and only if the following system is satisfied
\begin{subequations}\label{sys43}
\begin{align}
   f(v,w) \, \dfrac{\partial}{\partial v}F_0(u,v,w)+g(v,w) \, \dfrac{\partial }{\partial w}F_0(u,v,w)&=0,  \label{sys43a}\\
    f(v,w) \, \dfrac{\partial }{\partial u}F_0(u,v,w)-h(v,w) \, \dfrac{\partial }{\partial w} F_0(u,v,w)&=0, \label{sys43b}\\
     g(v,w) \, \dfrac{\partial}{\partial u}F_0(u,v,w)+h(v,w) \, \dfrac{\partial }{\partial v}F_0(u,v,w)&=0. \label{sys43c}
\end{align}
\end{subequations}
The general solution to this system is hard to find, as the form of $F_0$ depend on the the functions $f(v,w),g(v,w)$ and $h(v,w)$, which in turn satisfy the further constraint \eqref{jacob1}. Nevertheless, it serves as a useful tool to verify whether a given function $F$ is a Casimir for the operator. In the following, we list the explicit solutions in all the subcases for $\mathcal{C}_{(0)}$ for both operators and~$\mathcal{C}_{(1+0)}$ for the operator $C^{ij}_{3,2}$ only. Indeed, the whole operator $\mathcal{C}_{(1+0)}$ for~$C^{ij}_{3,5}$ the resulting Casimir is trivial.  
\begin{enumerate}
{
    \item {Case $f(v,w)=0$}\\ 
    If $f(v,w)=0,$ $g(v,w) \neq 0$ and $h(v,w)\neq 0,$ the closure condition becomes
    \begin{equation}
      h(v,w) \,\dfrac{\partial}{\partial w} g(v,w)-   g(v,w) \,\dfrac{\partial}{\partial w} h(v,w) =h^2(v,w) \, \dfrac{\partial}{\partial w} \left(\dfrac{g(v,w)}{h(v,w)} \right)=0 \,.
    \end{equation}
    Hence,  ${g(v,w)}/{h(v,w)} \equiv \ell(v),$ with $\ell$ an arbitrary function.  
   Furthermore, from~\eqref{sys43} it immediately follows that $F_0(u,v,w)=F_0(u,v),$ so that the first two equations are verified. Substituting in the third one, we get
   
   \vspace*{-2ex}
   
    \begin{equation}
      \dfrac{\partial }{\partial u}F_0(u,v)+    \dfrac{h(v,w)}{g(v,w)} \dfrac{\partial }{\partial v}F_0(u,v)=0 \,.
    \end{equation}
    By integration, through the method of characteristics we obtain 
    \begin{equation}
        F_0(u,v)= f_0 \left(u- \int \! \dfrac{g(v,w)}{h(v,w)}\,dv \right)=f_0 \left(u-\int \ell(v) dv \right),
    \end{equation}
    where $f_0$ is an arbitrary function in its argument. By comparison with \eqref{cas1}, we get that the Casimir for the complete operator $\mathcal{C}_{(1+0)}$ is 
    \vspace*{-1ex}
     \begin{equation}
        F(u,v)= u- \int \ell(v) dv.
    \end{equation}
}
    \vspace*{-3ex}
    
    \item {Case $f(v,w)=g(v,w)=0$ }
    
  The closure is trivially satisfied and from \eqref{sys43} we get
  \vspace*{-1ex}
\begin{equation}
    \dfrac{\partial}{\partial w}F_0(u,v,w)= \dfrac{\partial}{\partial v}F_0(u,v,w)=0,
\end{equation}
so that the Casimir of the operator reduces to an arbitrary function of $u$. The Casimir for the full operator is $F=u.$
    \item {Case $f(v,w)=h(v,w)=0$}
    
       The closure is guaranteed and from \eqref{sys43} we have
       \vspace*{-1ex}
    \begin{equation}
         \dfrac{\partial}{\partial u}F_0(u,v,w)= \dfrac{\partial}{\partial w}F_0(u,v,w)=0,
    \end{equation}
    so that the Casimir of the operator reduces to an arbitrary function of the variable $v$ and it coincides with the Casimir of the operator $\mathcal{C}_{(1+0)}$.  

   \item {Case $g(v,w)=0,$ $f(v,w) \neq 0$ and $h(v,w)\neq 0$}  
    
    The corresponding Casimir is given by
    \begin{equation}
        F_0(u,w)= f_0 \left(u- \int \! \dfrac{f(v,w)}{h(v,w)}\,dw \right)= f_0 \left(u- \int \ell(w) dw \right),
    \end{equation}
      where $f_0$ is an arbitrary function in its argument. In this case,
      \vspace*{-1ex}
       \begin{equation}\label{eq:casimir_kdv}
        F= u- \int \ell(w) dw.
    \end{equation}

    \vspace*{-3ex}
    
         \item {Cases $g(v,w)=f(v,w)=0$ and $g(v,w)=h(v,w)=0$}
         
         The corresponding Casimir are given by an arbitrary function of the variable $u$ and an arbitrary function in $w$ respectively. Hence, the Casimir of the full operator is either given by $F=u$ or by $F=f(w)$, with $f$ arbitrary. 
       
\item {Case $h(v,w)=0,$ $f(v,w) \neq 0$ and $g(v,w)\neq 0$}\\
 The closure condition is trivially satisfied. The corresponding Casimir $F_0=f_0(v,w)$ and must satisfy the remaining equation $\eqref{sys43a}$, that gives also the Casimir of the full operator. We stress that the closed solution of this equation highly depend on the choice of the non-zero functions in the ultralocal term.
\end{enumerate}

\begin{table}
\centering
{\small
    \begin{tabular}{cccc}
    \hline \\[-1ex]
Operator&$\mathcal{C}_{(0)}$&$\mathcal{C}_{(1)}$&$\mathcal{C}_{(1+0)}$\\[1.8ex]\hline \\[-1ex]
      $C^{ij}_{2,1}$    &$c_1$&$c_1u+c_2\,h(v)$&$c_1$\\[1.8ex]\hline \\[-1ex]
       $C^{ij}_{2,2}$    &$c_1$&$c_1$&$c_1$\\[1.8ex]\hline \\[-1ex]
       \end{tabular}}
    \caption{Casimir classification of $(1+0)$ type operators in $n=2$ components}
    \label{tab:casimir_2}
\end{table}

\begin{example}[Generalised KdV equation]\label{ex3} Let us consider the generalised KdV equation
\begin{equation}\label{mkdv}
    u_t+3(n+1)u^nu_x+u_{xxx}=0
\end{equation}
where $n$ is a positive integer. We introduce the new variables $u=u, v=u_{x}$ and $w=u_{xx}$ so that after applying the inversion procedure, the equation reads as the quasilinear system 
\begin{equation}\label{sysmkdv}
    \begin{cases}
    u_t=v\\
    v_t=w\\
    w_t=-u_x-3(n+1)u^nv
    \end{cases}\,. 
\end{equation}
 As proved in \cite{DellAVer1}, the present system has Hamiltonian structure with the operator 
\begin{equation}\label{opmkdv}  
    \mathcal{C}^{ij}_{(1+0)}=\begin{pmatrix}
    0&0&0\\0&0&0\\0&0&1
    \end{pmatrix}\partial_x+\begin{pmatrix}
    0&1&0\\-1&0&-3(n+1)(u)^{n-1}\\
    0&3(n+1)(u)^{n-1}&0
    \end{pmatrix}
\end{equation} 
with Hamiltonian functional 
\begin{equation}
    H(u,v,w)=\int{\left(3(u)^{n+1}-uw+\frac{v^2}{2}\right) dx}.
\end{equation}
The operator \eqref{opmkdv} belongs to the class of $C^{ij}_{3,2}$ in \eqref{eq:operator_C32} for a particular choice of the arbitrary functions. The Casimir function associated with $\mathcal{C}_{(1+0)}$ in \eqref{opmkdv} is 
\begin{equation}
F(u,w)=c_1\!\left(w-3 \frac{u^n}{n}\right)+c_2,     
\end{equation}
with $c_1$, $c_2$ constants. It is easy to check that $F(u,w)$ is exactly given by expression \eqref{eq:casimir_kdv} for $\ell(w)=3w^{n-1}$, under the exchange of $u$ and $w$. Note that for $n=1$ the generalized KdV reduces to the KdV equation, and the Hamiltonian operator $\mathcal{C}_{(1+0)}$ in \eqref{opmkdv} reduces to $\mathcal{B}_{(1+0)}$ in \eqref{eq:operator_B_kdv}.   
\end{example}
Applying a similar procedure to the whole list of the previously classified operators $\mathcal{C}_{(1+0)}$, we obtain the general integral for all the cases (and subcases) arising. 
In Table~\ref{tab:casimir_2} and Table~\ref{tab:casimir_3} we list the Casimir computed for $n=2$ (i.e.\ $(u,v)$) and the remaining for $n=3$ (i.e.\ $(u,v,w)$) respectively. In both tables, 
$c_1,c_2,c_3$ are $c_4$ are arbitrary constants, $\varphi,\phi$ are arbitrary functions in their arguments, and $f, g, h$ are functions fixed and uniquely determined by the corresponding operator.

\newgeometry{left=2cm,right=3cm,top=2cm}
\begin{table}[]
    \centering
    {\small
    \begin{tabular}{ccccc}
                \hline \\[-1ex]
Operator&Case&$\mathcal{C}_{(0)}$&$\mathcal{C}_{(1)}$&$\mathcal{C}_{(1+0)}$\\[2ex]\hline \\[-1ex]
      $C^{ij}_{3,1}$    &&$\varphi(w)$&$\varphi(w)$&$\varphi(w)$\\[2ex]\hline \\[-1ex]
     $C^{ij}_{3,3}$&&     $\varphi(w)$&$\varphi(w)$&$\varphi(w)$\\[2ex]\hline \\[-1ex]
        $C^{ij}_{3,4}$   &&$\varphi(v)$&$\varphi(v)$&$\varphi(v)$\\[2ex]\hline \\[-1ex]
                \multirow{2}{*}{$C^{ij}_{3,6}$}&$g(w)=0$&$\varphi(w)$&$c_1u+c_2v+\phi(w)$&$\varphi(w)$\\[2ex]\cline{2-5} \\[-.5ex]
            &otherwise&$ \phi\!\left(c v-{\displaystyle \int \! \dfrac{f(w)}{g(w)}dw} -u \right)$    &$c_1u+c_2v+\phi(w)$&$c v-{\displaystyle \int \! \dfrac{f(w)}{g(w)}dw} -u $\\[3ex]\hline \\[-4ex]
                   \rule[-0.45cm]{0mm}{1.2cm}
           \multirow{2}{*}{$C^{ij}_{3,7}$}&$c=0$&$\varphi(u)$&$ u$ & $ u$
\\[2ex]\cline{2-5} \rule[-0.45cm]{0mm}{1.2cm}
                                &$c\neq 0$&$\varphi\!\left(\dfrac{c(u^2+v^2)-2u}{c}\right)$&$u $&$c_2$\\[2ex]\hline \\[-2ex]
 \rule[-0.45cm]{0mm}{1.2cm}
           \multirow{2}{*}{$C^{ij}_{3,8}$}&$c=0$&$\varphi\!\left(\dfrac{ u+ vw }{\sqrt{w^{2}+1}}
\right)$&$\dfrac{c_{1} \left(u+ vw\right) }{\sqrt{w^{2}+1}}
$&$\dfrac{ c_{1}\left(u+ vw\right) }{\sqrt{w^{2}+1}}$\\[2ex] \cline{2-5} \\[-2ex] \rule[-0.45cm]{0mm}{1.2cm}
                                &$c\neq 0$&$\varphi\!\left(\dfrac{2\left( u+ vw \right)}{\sqrt{w^{2}+1}} -{c}(u^2+v^2)
\right)$&$\dfrac{c_{1}\left(u+ vw\right)  }{\sqrt{w^{2}+1}}$&$c_3$\\[3ex]\hline \\[-1ex]
\multirow{2}{*}{$C^{ij}_{3,9}$}&$g(w)=0$&$\varphi(w)$&$c_1u+c_2v+\phi(w)$&$\varphi(w)$\\[2ex]  \cline{2-5}  \\[-1ex]                            &otherwise&$\phi\!\left(c v-{\displaystyle \int \! \dfrac{f(w)}{g(w)}dw} -u \right)$&$c_1u+c_2v+\phi(w)$&$c v-{\displaystyle \int \! \dfrac{f(w)}{g(w)}dw} -u $\\[3ex]\hline \\[-1ex]  \multirow{5}{*}{$C^{ij}_{3,10}$}&$f(w)=g(w)=0$&$\varphi\!\left(v\right)$&$\phi(w)v$&$v$\\[2ex]  \cline{2-5}  \\[-3ex]         
\rule[-0.45cm]{0mm}{1.2cm}&$f(w)=h(w)=0$&$\varphi(uv)$&$\phi(w)v$&$c_2$\\[2ex]  \cline{2-5}  \\[-1ex]         
                  & $g(w)=h(w)=0$& $\varphi\!\left(w\right)$ &$\phi(w)v$&$c_2$\\[2ex]  \cline{2-5}  \\[-3ex]
                    \rule[-0.45cm]{0mm}{1.2cm}&$g(w)=0$&$\varphi\!\left(v\,\text{exp}\left({- {\displaystyle\int{\!\dfrac{f(w)}{h(w)}\, dw}}}\right)\right)$&$\phi(w)v$&$c_2$\\[3ex]  \cline{2-5}  \\[-3ex]         \rule[-0.45cm]{0mm}{1.2cm} &otherwise &$\varphi\!\left(v\,\dfrac{ug(w)-h(w)}{g(w)}\right)$&$\phi(w)v$&$c_2$\\[3ex]\hline \\[-3ex]\rule[-0.45cm]{0mm}{1.2cm} $C^{ij}_{3,11}$
&&$\varphi\!\left(2cu\sqrt{w}+\dfrac{2vc}{\sqrt{w}}-uv\right)$&$\dfrac{ c_1( u+v w)}{\sqrt{w}}
$&$c_3$\\[3ex]\hline \\[-1ex]
    \end{tabular}}
    \caption{Casimir classification of $(1+0)$ type operators in $n=3$ components}
    \label{tab:casimir_3}
\end{table}

\restoregeometry

\section{Bi-Hamiltonian structures with non-homogeneous operators}\label{sec4}

The bi-Hamiltonian formalism plays a key role in integrable systems, as firstly shown by Magri in~\cite{Magri:SMInHEq} and then further investigated in several papers (see e.g.\ \cite{bla,Li,LSV:bi_hamil_kdv,lorvit,PavVerVit1}). We refer to~\cite{Olver:ApLGDEq} and \cite{DubZha} for an extensive treatment of this formal approach. Here we briefly recall that an evolutionary system is said to be bi-Hamiltonian if there exist two compatible Hamiltonian operators $\mathcal{A}$ and $\mathcal{B}$ such that the system can be written as 
\begin{equation}\label{eq:bi_ham_system}
    u^i_t=\mathcal{A}^{ij}\,\frac{\delta H_0}{\delta u^j}=\mathcal{B}^{ij}\,\frac{\delta H_1}{\delta u^j}, \qquad i=1,2,\dots , n,
\end{equation}
with two Hamiltonian functionals $H_0$, $H_1$. The two operators are compatible if any linear combination $\mu\,\mathcal{A}+\lambda\, \mathcal{B}$ is still a Hamiltonian operator. 

For bi-Hamiltonian systems, any functional $F$ representing a conserved quantity of the system gives rise to two Hamiltonian vector fields (i.e.\ $\{\,\cdot\,\,,\,F\,\}_{\mathcal{A}}$ and $\{\,\cdot\,\,,\,F\,\}_{\mathcal{B}}$). 
Both the Hamiltonian functionals $H_0$ and $H_1$ are conserved quantities, hence the vector field for $H_1$ is Hamiltonian also with respect to the operator $\mathcal{A}$. Hence, given \eqref{eq:bi_ham_system}, there exists a further functional $H_2$ such that 
\begin{equation}
    \mathcal{A}^{ij}\,\frac{\delta H_1}{\delta u^j}=\mathcal{B}^{ij}\,\frac{\delta H_2}{\delta u^j}.
\end{equation}
Iterating this procedure, an infinite sequence $\{H_i\}_{i \ge 0}$ of Hamiltonian functionals  emerges and to each symmetry $\{\,\cdot\,\,,\,H_j\}$ it can be associated a Hamiltonian functional $H_i$ such that 
\begin{equation}
    \{H_i,H_j\}_\mathcal{A}=\{H_i,H_j\}_\mathcal{B}=0\,, \qquad i \neq j\,. 
\end{equation}

With the additional requirement that the Hamiltonians are independent, 
 the evolutionary bi-Hamiltonian system~\eqref{eq:bi_ham_system} admits an infinite number of conserved quantities in involution, i.e.\ it is integrable. 

For our purposes, as described in the previous sections, the most natural structure for quasilinear systems is defined by $(1+0)$ operators. Therefore, here we focus on pairs of Hamiltonian operators\footnote{We refer to Theorem \ref{thm1} for the Hamiltonian property of $(1+0)$ operators.} $\mathcal{A}$ and $\mathcal{B}$ both of non-homogeneous hydrodynamic type, i.e.\ 
\begin{equation}\label{pairs}
\mathcal{A}^{ij}=g^{ij}_\mathcal{A}\,\partial_x+b^{ij}_{\mathcal{A},k}\,u^k_x+\omega^{ij}_\mathcal{A}\,,\qquad \mathcal{B}^{ij}=g^{ij}_{\mathcal{B}}\,\partial_x+b^{ij}_{\mathcal{B},k}\,u^k_x+\omega^{ij}_\mathcal{B},
\end{equation}
where the coefficients $g^{ij}_\mathcal{I},b^{ij}_{\mathcal{I},k}$ and $\omega^{ij}_\mathcal{I}$ (with $\mathcal{I}=\{\mathcal{A},\mathcal{B}\}$) depend on the field variables $u^k$ only~($k =1,\dots, n$). In a linear combination $\mu \, \mathcal{A}+\lambda \, \mathcal{B}$, we can assume one between $\lambda$ and $\mu$ to be nonzero. This allows us to study the expression
\begin{equation}\label{eq:non_homog_operators_AB}
\mathcal{A}^{ij}+\lambda \,\mathcal{B}^{ij}=\left(g^{ij}_\mathcal{A}+\lambda \,g^{ij}_\mathcal{B}\right)\partial_x+\left(b^{ij}_{\mathcal{A},k}+\lambda \,b^{ij}_{\mathcal{B},k}\right)u^k_x+\left(\omega^{ij}_\mathcal{A}+\lambda \, \omega^{ij}_\mathcal{B}\right),
\end{equation} 
that is again a non-homogeneous operator of type $(1+0)$. The fact that the resulting linear combination preserves the type of the composing operators will be a crucial point in our investigation. For the operator~\eqref{eq:non_homog_operators_AB} to be Hamiltonian, 
the following theorem holds true.
 \begin{theorem}\label{darv}
    Let $\mathcal{A}$ and $\mathcal{B}$ be two non-homogeneous Hamiltonian operators of type $(1+0)$ as in~\eqref{pairs}.  
    The operators $\mathcal{A}$ and $\mathcal{B}$ are compatible if and only if $ \mathcal{A}_{(1)}$ and $\mathcal{B}_{(1)}$ are compatible, and the following tensors identically vanish:
    {\small
    \begin{subequations}
    \begin{align}
    \label{cond0a}  
     L^{ijk}&=\frac{\partial \omega^{ij} _{\mathcal{A}}}{\partial u^p}\,\omega^{pk}_\mathcal{B}+\frac{\partial\omega^{ij}_{\mathcal{B}}}{\partial u^p}\,\omega^{pk}_\mathcal{A}+\frac{\partial \omega^{jk}_{\mathcal{A}}}{\partial u^p}\,\omega^{pi}_\mathcal{B}+\frac{\partial \omega^{jk}_{\mathcal{B}}}{\partial u^p}\,\omega^{pi}_\mathcal{A}+\frac{\partial \omega^{ki}_{\mathcal{A}}}{\partial u^p}\,\omega^{pj}_\mathcal{B}+\frac{\partial \omega^{ki}_{\mathcal{B}}}{\partial u^p}\,\omega^{pj}_\mathcal{A}\,,  \\[1ex]
    \begin{split}
     \label{cond1a}   
     P^{i j k} & = {g}^{i s}_\mathcal{A}\,  \frac{\partial \omega^{j k}_{\mathcal{B}}}{\partial u^s}- \frac{\partial g^{i j}_{\mathcal{A}}}{\partial u^s}\, {\omega}^{s k}_\mathcal{B} -{b}^{i k}_{\mathcal{A},s} \, {\omega}^{j s}_\mathcal{B} 
        +{g}^{i s}_\mathcal{B}\,  \frac{\partial \omega^{j k}_{\mathcal{A}}}{\partial u^s} - \frac{\partial g^{i j}_{\mathcal{B}}}{\partial u^s}\, {\omega}^{s k}_\mathcal{A} -{b}^{i k}_{\mathcal{B},s} \, {\omega}^{j s}_\mathcal{A}  \\[1ex]
        &~~+
         {g}^{j s}_\mathcal{A}\, \frac{\partial \omega^{i k}_{\mathcal{B}}}{\partial u^s} -{b}^{j k}_{\mathcal{A},s} \, {\omega}^{i s}_\mathcal{B} 
        +{g}^{j s}_\mathcal{B}\, \frac{\partial \omega^{i k}_{\mathcal{A}}}{\partial u^s} -{b}^{j k}_{\mathcal{B},s} \, {\omega}^{i s}_\mathcal{A} \, , \end{split}\\[1ex]
  \label{cond3a}  
  \begin{split} 
 S^{ijk}_r&=-g_{\mathcal{A}}^{is} \dfrac{\partial^2\omega_{\mathcal{B}}^{jk}}{\partial u^{s} \partial u^{r}}-g_{\mathcal{B}}^{is} \, \dfrac{\partial^2\omega_{\mathcal{A}}^{jk}}{\partial u^{s} \partial u^{r}}-\left(b_{\mathcal{A},r}^{is}+b_{\mathcal{A},r}^{si}\right) \dfrac{\partial \omega_{\mathcal{B}}^{jk}}{\partial u^{s}}-\left(b_{\mathcal{B},r}^{is}+b_{\mathcal{B},r}^{si}\right) \dfrac{\partial \omega_{\mathcal{A}}^{jk}}{\partial u^{s}}\\[1ex]
  &~~+ b_{\mathcal{A},s}^{ij} \,\dfrac{\partial \omega_{\mathcal{B}}^{sk}}{\partial u^{r}}
  + b_{\mathcal{A},s}^{ik} \,\dfrac{\partial \omega_{\mathcal{B}}^{js}}{\partial u^{r}}
  +b_{\mathcal{B},s}^{ij} \,\dfrac{\partial \omega_{\mathcal{A}}^{sk}}{\partial u^{r}}
  + b_{\mathcal{B},s}^{ik} \,\dfrac{\partial \omega_{\mathcal{A}}^{js}}{\partial u^{r}}\\[1ex]
  &~~+ \dfrac{\partial b_{\mathcal{A},s}^{ij}}{\partial u^{r}} \,\omega_{\mathcal{B}}^{sk} 
  +  \dfrac{\partial b_{\mathcal{A},s}^{ik}}{\partial u^{r}} \,\omega_{\mathcal{B}}^{js}
  +  \dfrac{\partial b_{\mathcal{B},s}^{ij}}{\partial u^{r}} \,\omega_{\mathcal{A}}^{sk}
  +  \dfrac{\partial b_{\mathcal{B},s}^{ik}}{\partial u^{r}} \,\omega_{\mathcal{A}}^{js}\\[1ex]
  &~~+\sum_{(i, \, j, \, k)} \left[ b_{\mathcal{A},r}^{si} \,\frac{\partial \omega_{\mathcal{B}}}{\partial u^s}^{jk}+b_{\mathcal{B},r}^{si} \,\frac{\partial \omega^{jk}_{\mathcal{A}}}{\partial u^s}+\left( \dfrac{\partial b_{\mathcal{A},r}^{ij}}{ \partial u^s} -\dfrac{\partial b_{\mathcal{A},s}^{ij}}{ \partial u^r} \right)w_{\mathcal{B}}^{sk}+\left( \dfrac{\partial b_{\mathcal{B},r}^{ij}}{ \partial u^s} -\dfrac{\partial b_{\mathcal{B},s}^{ij}}{ \partial u^r} \right)w_{\mathcal{A}}^{sk} \right],
  \end{split} 
\end{align}
\end{subequations} }

\noindent 
where $L=[[\omega_\mathcal{A},\omega_{\mathcal{B}}]]$ is the Schouten brackets for the ultralocal structures $\omega_\mathcal{A}$ and $\omega_\mathcal{B}$.
\end{theorem}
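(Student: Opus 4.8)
The plan is to deduce Theorem \ref{darv} from the already-established characterisation of when a single $(1+0)$ operator is Hamiltonian (Theorem \ref{thm1}), applied to the pencil $\mathcal{A}+\lambda\,\mathcal{B}$. Since the linear combination in \eqref{eq:non_homog_operators_AB} is again a genuine $(1+0)$ operator, with leading metric $g_{\mathcal{A}}^{ij}+\lambda\,g_{\mathcal{B}}^{ij}$, torsion-type coefficients $b_{\mathcal{A},k}^{ij}+\lambda\,b_{\mathcal{B},k}^{ij}$, and ultralocal part $\omega_{\mathcal{A}}^{ij}+\lambda\,\omega_{\mathcal{B}}^{ij}$, compatibility of $\mathcal{A}$ and $\mathcal{B}$ is by definition the requirement that all three conditions (i), (ii), (iii) of Theorem \ref{thm1} hold for $\mathcal{A}+\lambda\,\mathcal{B}$ for every $\lambda$. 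The strategy is then to substitute these pencil coefficients into each of those conditions, expand in powers of $\lambda$, and collect terms: the $\lambda^0$ and $\lambda^{\deg}$ coefficients reproduce the statements that $\mathcal{A}$ and $\mathcal{B}$ are individually Hamiltonian (which we assume), and the remaining cross-terms are exactly the new tensorial conditions $L^{ijk}$, $P^{ijk}$, $S^{ijk}_r$ together with the statement ``$\mathcal{A}_{(1)}$ and $\mathcal{B}_{(1)}$ are compatible''.

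Concretely, I would proceed condition by condition. \textbf{Condition (i) applied to the pencil:} this is the statement that $\mathcal{A}_{(1)}+\lambda\,\mathcal{B}_{(1)}$ is Hamiltonian for all $\lambda$, i.e.\ exactly the classical notion of compatibility of the two first-order Dubrovin--Novikov operators — so this term contributes the hypothesis ``$\mathcal{A}_{(1)}$ and $\mathcal{B}_{(1)}$ compatible'' and nothing new. \textbf{Condition (ii) applied to the pencil:} $\omega_{\mathcal{A}}+\lambda\,\omega_{\mathcal{B}}$ must be a Poisson tensor for all $\lambda$. Skew-symmetry \eqref{cond1om} is linear in $\lambda$ and follows from skew-symmetry of each $\omega_{\mathcal{I}}$. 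The Jacobi identity \eqref{cond2om} is quadratic in $\omega$, so expanding $(\omega_{\mathcal{A}}+\lambda\,\omega_{\mathcal{B}})$ into \eqref{cond2om} gives a quadratic polynomial in $\lambda$: the $\lambda^0$ and $\lambda^2$ coefficients are the Jacobi identities for $\omega_{\mathcal{A}}$ and $\omega_{\mathcal{B}}$ respectively (already assumed), and the $\lambda^1$ coefficient is precisely $L^{ijk}=[[\omega_{\mathcal{A}},\omega_{\mathcal{B}}]]^{ijk}$ as written in \eqref{cond0a}. \textbf{Condition (iii) applied to the pencil:} here one substitutes the pencil coefficients into the compatibility relations \eqref{cond1}--\eqref{eq:phi} of Theorem \ref{thm1}, noting that the tensor $\Phi^{ijk}$ is bilinear in $(g,b)$ and $\omega$, hence $\Phi_{\mathcal{A}+\lambda\mathcal{B}}^{ijk}=\Phi_{\mathcal{A}}^{ijk}+\lambda\,\Phi_{\mathrm{cross}}^{ijk}+\lambda^2\,\Phi_{\mathcal{B}}^{ijk}$, where $\Phi_{\mathrm{cross}}$ collects the mixed $\mathcal{A}$--$\mathcal{B}$ terms. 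The $\lambda^0$ and $\lambda^2$ parts of \eqref{cond1} and \eqref{eq:phi} are the conditions already guaranteed by $\mathcal{A}$ and $\mathcal{B}$ being Hamiltonian. The $\lambda^1$ part of the symmetry condition \eqref{cond1} is $\Phi_{\mathrm{cross}}^{ijk}=\Phi_{\mathrm{cross}}^{kij}$; symmetrising suitably this is $P^{ijk}=0$ as in \eqref{cond1a}. The $\lambda^1$ part of the differential condition \eqref{eq:phi} — after substituting the already-known value of $\partial_r\Phi_{\mathrm{cross}}$ coming from its definition, so as to eliminate the derivative of the auxiliary tensor — is precisely $S^{ijk}_r=0$ as in \eqref{cond3a}. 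One should double-check that the cyclic-sum conventions and the index placements in \eqref{cond1a} and \eqref{cond3a} match this expansion, using the individual Hamiltonianity relations of Theorem \ref{th:ham_A} (in particular $\partial_k g^{ij}=b^{ij}_k+b^{ji}_k$) to rewrite the raw cross-terms into the symmetric form displayed.

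The main obstacle is purely bookkeeping: correctly isolating the $\lambda^1$ coefficient of \eqref{eq:phi}, because that equation is not polynomial ``as written'' — it contains $\partial_r\Phi^{ijk}$ on the left, and $\Phi$ itself depends on the operator. One has to expand $\Phi_{\mathcal{A}+\lambda\mathcal{B}}$ first, differentiate, and then match against the right-hand side; the cross term $\partial_r\Phi_{\mathrm{cross}}^{ijk}$ must be substituted by its explicit expression in $g,b,\omega$ so that the resulting identity $S^{ijk}_r=0$ is a closed condition on the coefficients alone, not involving $\Phi$. Verifying that this substitution produces exactly the (rather long) expression \eqref{cond3a}, with the correct symmetrisations and the correct cyclic sum over $(i,j,k)$, is the delicate computational heart of the proof; everything else is a routine linear/quadratic expansion in $\lambda$ plus an appeal to Theorems \ref{th:ham_A}, \ref{th:ham_omega} and \ref{thm1}. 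Finally I would remark that the converse is immediate: if all the displayed tensors vanish and the first-order parts are compatible, then every $\lambda$-slice of \eqref{eq:non_homog_operators_AB} satisfies (i)--(iii) of Theorem \ref{thm1}, hence is Hamiltonian, which is the definition of compatibility of $\mathcal{A}$ and $\mathcal{B}$.
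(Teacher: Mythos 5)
Your proposal is correct and follows essentially the same route as the paper: substitute the pencil coefficients \eqref{sorp} into Theorem \ref{thm1}, expand each condition in powers of $\lambda$, observe that the $\lambda^0$ and top-degree coefficients vanish by the Hamiltonianity of $\mathcal{A}$ and $\mathcal{B}$, and identify the $\lambda$-linear cross terms of conditions (ii) and (iii) with $L^{ijk}$, $P^{ijk}$ and $S^{ijk}_r$, with condition (i) giving the compatibility of $\mathcal{A}_{(1)}$ and $\mathcal{B}_{(1)}$. Your extra care about rewriting the cross term of $\Phi$ via $\partial_k g^{ij}=b^{ij}_k+b^{ji}_k$ and eliminating $\partial_r\Phi$ before reading off $S^{ijk}_r$ is exactly the bookkeeping the paper performs implicitly.
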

\begin{proof}
    The linear combination \eqref{eq:non_homog_operators_AB} 
    is a non-homogeneous operator of type $(1+0)$. Therefore, the coefficients defined as 
    \begin{align}\label{sorp}
&\tilde{g}^{ij}=g_\mathcal{A}^{ij}+ \lambda \,g^{ij}_\mathcal{B},\qquad \tilde{b}^{ij}_k= b^{ij}_{\mathcal{A},k}+\lambda \, b^{ij}_{\mathcal{B},k},\qquad \tilde{\omega}^{ij}=\omega^{ij}_\mathcal{A}+\lambda \,\omega^{ij}_\mathcal{B},
\end{align}
must satisfy Theorem \ref{thm1}. First, we notice that $\tilde{g}$ and $\tilde{b}$ must be such that the operator $\tilde{g}^{ij}\,\partial_x + \tilde{b}^{ij}_k  u^k_x $ is a Hamiltonian homogeneous operator, hence $(\mathcal{A}_{(1)},\mathcal{B}_{(1)})$ must be a compatible pair of Dubrovin-Novikov operators (part $i.$ of Theorem~\ref{thm1}). Concerning $\tilde{\omega}$ (part $ii.$ of Theorem~\ref{thm1}), the skew-symmetry is ensured for any choice of $\lambda$, whilst the Jacobi identity is non-trivial. The Jacobi identity for $\tilde{\omega}$ results in a polynomial in $\lambda$, for which the only non-zero term is the one linear in $\lambda$ and given by $L^{ijk}$ in~\eqref{cond0a}.

Lastly, we consider part $iii.$ of Theorem~\ref{thm1}. 
The condition \eqref{cond1} for \eqref{sorp} gives a polynomial of degree 2 in $\lambda$. One can easily see that the coefficients of $\lambda^0$ and $\lambda^2$ annihilate under the Hamiltonian property of $\mathcal{A}$ and $\mathcal{B}$ respectively. The term linear in $\lambda$ is given by $P^{ijk}$ in~\eqref{cond1a}. The final condition~\eqref{eq:phi} for~\eqref{sorp} translates to the tensor~$S^{ijk}_r$ in~\eqref{cond3a}. 
\end{proof}

We can interpret this result as follows. Given two Hamiltonian operators of type $(1+0)$, they are compatible if and only if their first-order and zero-order are, and the tensors $P^{ijk}$ and $S^{ijk}_r$ identically vanish. This implies that not all the pencils $(\mathcal{A}_{(1)},\mathcal{B}_{(1)})$ and $(\mathcal{A}_{(0)},\mathcal{B}_{(0)})$ generate the pencil~$(\mathcal{A},\mathcal{B})$, but they must form a more complicated geometric object that we call a bi-pencil, as we will deepen in section~\ref{bipen}. 

In the following, we present the general results on the pairs~\eqref{eq:non_homog_operators_AB} under the hypothesis that the operator $\mathcal{A}$ is non-degenerate. This assumption does not infer  any degeneracy property on~$\mathcal{B}$, hence the latter is completely general and we determine the form of its coefficients by requiring that~$(\mathcal{A},\mathcal{B})$ is a pair of compatible Hamiltonian operators.

\subsubsection*{The operator $\mathcal{A}$} 

Let us assume $\mathcal{A}$ to be Hamiltonian and non-degenerate. 
Then, there exists a local change of the dependent variables $u^m$ such that the flat metric $g^{ij}_\mathcal{A}$ is constant and diagonal, i.e. 
 \begin{equation}
     g^{ij}_\mathcal{A}=\eta^{ij}\,\delta_j^i,
 \end{equation}
 with $\eta^{ij}\in\mathbb{R}$. In the chosen set of Darboux coordinates, 
 the related Christoffel symbols identically vanish ($b^{ij}_{\mathcal{A},k}=0$), and the $(1+0)$ operator reduces to 
 \begin{equation}\label{eq:A_flat}
    \mathcal{A}^{ij}=\eta^{ij}\,\delta^i_j\,\partial_x + \omega^{ij}_\mathcal{A}\,.
 \end{equation}
With $\mathcal{A}$ non-degenerate, the Theorem \ref{thm1} gives rise to the following
 \begin{corollary}[\cite{FerMok1}]
     A non-degenerate non-homogeneous operator $\mathcal{A}$ of type $(1+0)$ is Hamiltonian if and only if $\mathcal{A}_{(1)}$ is Hamiltonian, $\omega_\mathcal{A}$ is a Poisson tensor and
     \begin{equation}
         \label{nondegcom}\nabla^i_\mathcal{A}\, \omega^{jk}_\mathcal{A}+\nabla_\mathcal{A}^{j}\, \omega^{ik}_\mathcal{A}=0\,, \qquad i,j,k=1,2,\dots n\,,
     \end{equation}
     where $\nabla_\mathcal{A}^i=g^{is}_\mathcal{A}\,\,\nabla_{\!\!\mathcal{A},s}^{\,}$ and $\nabla_{\!\!\mathcal{A},s}^{\,}$ is the covariant derivative with respect to the metric $g_{\mathcal{A}}$.
 \end{corollary}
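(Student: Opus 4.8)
The plan is to specialize Theorem~\ref{thm1} to the case of a non-degenerate, non-homogeneous operator $\mathcal{A}$ of type $(1+0)$ already written in Darboux form~\eqref{eq:A_flat}, i.e.\ with $g^{ij}_\mathcal{A} = \eta^{ij}\delta^i_j$ constant and $b^{ij}_{\mathcal{A},k} = 0$. Parts $i.$ and $ii.$ of Theorem~\ref{thm1} are exactly the requirements that $\mathcal{A}_{(1)}$ be Hamiltonian (automatic here, as $\eta$ is flat) and that $\omega_\mathcal{A}$ be a Poisson tensor, so the only work is to rewrite condition $iii.$ — equations~\eqref{cond1} and~\eqref{eq:phi} — in covariant form and show it collapses to~\eqref{nondegcom}.

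\textbf{Key steps.} First I would compute the auxiliary tensor $\Phi^{ijk}$ of Theorem~\ref{thm1} in Darboux coordinates: since $b^{ij}_{\mathcal{A},s} = 0$ there, $\Phi^{ijk} = \eta^{is}\,\partial\omega^{jk}_\mathcal{A}/\partial u^s = \eta^{is}\,\nabla_{\!\!\mathcal{A},s}\,\omega^{jk}_\mathcal{A} = \nabla^i_\mathcal{A}\,\omega^{jk}_\mathcal{A}$, because all Christoffel symbols vanish in these coordinates and the covariant derivative reduces to the partial derivative. Second, I would observe that in Darboux coordinates equation~\eqref{eq:phi} becomes $\partial\Phi^{ijk}/\partial u^r = 0$ (every term on its right-hand side contains a factor $b^{ij}_{\mathcal{A},s}$ or a derivative thereof, hence vanishes), which is automatically implied by~\eqref{cond1} together with the Jacobi identity once $\Phi$ is re-expressed through $\omega_\mathcal{A}$; alternatively, I would note directly that the flat covariant-derivative form of the conditions is coordinate-independent, so once verified in Darboux coordinates it holds in any coordinates, and~\eqref{eq:phi} is then subsumed. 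Third, I would translate the cyclic condition~\eqref{cond1}, $\Phi^{ijk} = \Phi^{kij}$: writing $\Phi^{ijk} = \nabla^i_\mathcal{A}\omega^{jk}_\mathcal{A}$ and using the skew-symmetry $\omega^{jk}_\mathcal{A} = -\omega^{kj}_\mathcal{A}$, the identity $\Phi^{ijk} = \Phi^{kij}$ together with cyclicity in $(i,j,k)$ is equivalent to the symmetrized vanishing statement $\nabla^i_\mathcal{A}\omega^{jk}_\mathcal{A} + \nabla^j_\mathcal{A}\omega^{ik}_\mathcal{A} = 0$, which is precisely~\eqref{nondegcom}. This is the standard ``Killing-type'' repackaging: $\Phi$ is totally determined by its symmetric part in the first two indices once the cyclic relation is imposed.

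\textbf{Main obstacle.} The routine but slightly delicate step is the index bookkeeping in the third step: showing that the single tensorial relation $\Phi^{ijk} = \Phi^{kij}$ (with $\Phi^{ijk}$ already skew in $j,k$ by construction, since $\omega_\mathcal{A}$ is) is genuinely equivalent to the symmetric-in-$(i,j)$ vanishing condition~\eqref{nondegcom}, and not merely implied by it. One direction is immediate; for the converse one expands the cyclic sum, uses skew-symmetry of $\omega_\mathcal{A}$ to pair terms, and checks that no information is lost. I would also remark that once this is done in Darboux coordinates, covariance of~\eqref{nondegcom} (it is a tensor equation built from the Levi-Civita connection $\nabla_\mathcal{A}$) guarantees it holds in arbitrary coordinates, so that the hypothesis ``$\mathcal{A}$ in Darboux form'' in~\eqref{eq:A_flat} is used only for convenience of computation, not as a genuine restriction. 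The statement attributed to~\cite{FerMok1} is then recovered.
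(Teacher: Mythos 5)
Your overall route is the one the paper intends: specialise Theorem~\ref{thm1} to the non-degenerate case, pass to the flat coordinates~\eqref{eq:A_flat} where $b_{\mathcal{A}}=0$, identify $\Phi^{ijk}=\nabla^i_\mathcal{A}\,\omega^{jk}_\mathcal{A}$, and note that, since $\Phi$ is skew in $(j,k)$, the cyclic condition~\eqref{cond1} is equivalent to skewness in $(i,j)$, i.e.\ to~\eqref{nondegcom} (skew in the last pair plus cyclic invariance is total antisymmetry, which is skewness in the first pair); that part of your plan is correct, and the paper itself gives no more detail, attributing the statement to~\cite{FerMok1}.

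The genuine gap is your treatment of~\eqref{eq:phi}. In Darboux coordinates it reduces, as you say, to $\partial_r\Phi^{ijk}=\eta^{is}\,\partial_r\partial_s\,\omega^{jk}_\mathcal{A}=0$, and since the corollary retains only~\eqref{nondegcom} from part \emph{iii.}\ of Theorem~\ref{thm1}, the whole content of the ``if'' direction is that this relation is a consequence of the Killing--Yano condition. You assert it is ``automatically implied by~\eqref{cond1} together with the Jacobi identity'', and your fallback remark on coordinate-independence only says the conditions are tensorial, not that~\eqref{eq:phi} follows from the others; neither is a proof. The missing argument is short but must be given: set $U^{rijk}=\eta^{rp}\eta^{is}\,\partial_p\partial_s\,\omega^{jk}_\mathcal{A}$, which is symmetric in $(r,i)$ (commuting partials, constant $\eta$), antisymmetric in $(j,k)$, and, by differentiating~\eqref{nondegcom}, antisymmetric in $(i,j)$. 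The standard chain alternating the symmetry in positions $(1,2)$ with the antisymmetry in positions $(2,3)$,
\begin{equation*}
U^{rijk}=U^{irjk}=-U^{ijrk}=-U^{jirk}=U^{jrik}=U^{rjik}=-U^{rijk},
\end{equation*}
forces $U\equiv 0$; since $\eta$ is invertible this is exactly~\eqref{eq:phi} (so $\omega_\mathcal{A}$ is affine in flat coordinates), and the Jacobi identity for $\omega_\mathcal{A}$ plays no role here. This is precisely the step the paper defers to~\cite{FerMok1} (compare the remark in the proof of Theorem~\ref{thm:bi_pencil} that the analogous condition on $S^{ijk}_r$ is ``a formal corollary of the Killing--Yano property''), so your proposal is incomplete exactly where the corollary has its only nontrivial content.
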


 In particular, in flat coordinates the condition~\eqref{nondegcom} reads as
 \begin{equation}
    \eta^{is}\,\frac{\partial \omega^{jk}_{\mathcal{A}}}{\partial u^s}
    +\eta^{js}\,\frac{\partial \omega^{ik}_{\mathcal{A}}}{\partial u^s}=0\,, \quad \implies \quad \begin{cases}
        2\eta^{is}\,\dfrac{\partial \omega^{ik}_{\mathcal{A}}}{\partial u^s}=0 \,, \quad &i = j \\[1.5ex]
        \omega^{ij}_\mathcal{A}=c^{ij}_k\,u^k+f^{ij} \,, \quad &i \neq j
    \end{cases}
 \end{equation}
 with $c^{ij}_k,f^{ij}\in\mathbb{R}$ constants. Finally, requiring $\omega_\mathcal{A}$ to be a Poisson tensor we generate the further conditions
 \begin{subequations} 
 \begin{align}
    c^{ij}_sc^{sk}_l+c^{jk}_sc^{si}_l+c^{ki}_sc^{sj}_l&=0\,, \label{eq:c_constraints}\\
    f^{is}c^{jk}_s+f^{js}c^{ki}_s+f^{ks}c^{ij}_s&=0\,.
    \label{eq:c_f_constraints}
\end{align} 
\end{subequations} 
The operator $\mathcal{A}$ in Darboux coordinates is then given by the following 
\begin{equation}\label{eq:A_darboux}
\mathcal{A}^{ij}=\eta^{is}\delta_s^j \, \partial_x + (c^{ij}_k u^k+f^{ij})\,,
\end{equation}
and the constraints in~\eqref{eq:c_constraints} and~\eqref{eq:c_f_constraints}.

A complete description of non-homogeneous Hamiltonian operators in Darboux form has recently been developed by G. Gubbiotti, F. Oliveri, E. Sgroi and one of the present authors in \cite{GubOliSgrVer}. We refer to this paper for the general structure of such operators in flat coordinates up to $n=6$ number of components, and for further generalisations. In addition, we emphasise that formulas \eqref{eq:c_constraints} and \eqref{eq:c_f_constraints} have an interesting geometric interpretation~(e.g.\ \cite{mokhov98:_sympl_poiss,GubOliSgrVer}): $c^{ij}_k$ are the structure constants of a real Lie algebra with respect to which $f^{ij}$ is a $2$-cocycle. As described in \cite{GubOliSgrVer}, using Darboux coordinates for both the operators is equivalent to choosing an abelian Lie algebra endowed with a compatible scalar product. 

\begin{remark}[Darboux coordinates for $\mathcal{A}$]\label{darboux}
From the geometric point of view, the problem of identifying a change of variables for $\mathcal{A}$ that brings the operator into total constant form is equivalent to finding a Darboux transformation for both the metric $g$ and the Poisson tensor $\omega$ simultaneously. If additionally $g$ and $\omega$ are both non-degenerate, this is equivalent to finding the Darboux coordinates for $g$, and the symplectic form $\omega^{-1}$. In \cite{bdmt}, this problem was investigated for tensors of type $g+\omega$ with lower indices (i.e.\ with a covariant metric and a symplectic structure). In particular, the authors established that such a solution exists if and only if both the metric and the symplectic form are parallel via the existence of a covariant connection $\nabla$. Note that this type of transformation rule always exists for Dubrovin-Novikov operators,  being $g^{ij}$ the inverse of a covariant metric. 
\end{remark}

\subsubsection*{The operator $\mathcal{B}$} 

Once the operator $\mathcal{A}$ is set in Darboux form, we investigate the structure of the operator $\mathcal{B}$ by requiring compatibility. Following Theorem~\ref{darv}, this problem relates to the topic of compatible pairs of first-order homogeneous operators, that has been extensively studied (e.g.\ in \cite{Mok11,Mok12,Mok13,Pav1,Fer11}).  As far as the authors know the latest paper on this subject is the seminal work \cite{Mok13} by Mokhov, where the author mainly focuses on pairs of first-order operators whose leading coefficients are non-degenerate. We will discuss this aspect in Section \ref{bipen}.

In \cite{omokh}, Mokhov presented a necessary condition for the operator $\mathcal{B}_{(1)}$ to be compatible with the first-order operator $\mathcal{A}_{(1)}=\eta^{ij}\,\partial_x$. 

\begin{theorem}[Lemma 2, \cite{omokh}]\label{thm_mok}
    Any local Poisson structure of Dubrovin-Novikov type $\mathcal{B}_{(1)}$ is compatible with the operator $\mathcal{A}_{(1)}$ if and only if there exist  $h^1(\textbf{u}),\dots, h^n(\textbf{u})$ locally such that 
    \begin{equation}
        \label{opmok}
   \mathcal{B}_{(1)}= \left(\eta^{is}\,\frac{\partial h^j}{\partial u^s}+\eta^{js}\,\frac{\partial h^i}{\partial u^s}\right)\partial_x+\eta^{is}\,\frac{\partial^2h^j}{\partial u^s\,\partial u^k}\,u^k_x\,.
\end{equation}
\end{theorem}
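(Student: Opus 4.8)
The plan is to establish the equivalence by first turning ``$\mathcal{B}_{(1)}$ compatible with $\mathcal{A}_{(1)}$'' into a small, tractable system of tensorial PDEs, and then integrating that system locally by two applications of the Poincaré lemma. Write $\mathcal{B}_{(1)}=g^{ij}\,\partial_x+b^{ij}_k\,u^k_x$, and recall that for $\mathcal{A}_{(1)}=\eta^{ij}\,\partial_x$ the leading metric is constant and invertible (non-degeneracy of $\mathcal{A}$) and all its Christoffel coefficients vanish, so $b^{ij}_{\mathcal{A},k}\equiv 0$. Compatibility means that the pencil $\mathcal{A}_{(1)}+\lambda\,\mathcal{B}_{(1)}$, with coefficients $\eta^{ij}+\lambda g^{ij}$ and $\lambda b^{ij}_k$, satisfies the conditions of Theorem~\ref{th:ham_A} for every $\lambda$. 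Substituting and collecting powers of $\lambda$, the $\lambda^0$-parts hold because $\mathcal{A}_{(1)}$ is Hamiltonian and the $\lambda^2$-parts because $\mathcal{B}_{(1)}$ is; in \eqref{eq1thm} the $\lambda^1$-part merely reproduces \eqref{eq1thm} for $\mathcal{B}_{(1)}$ and is not an independent constraint, while \eqref{eq5thm} is quadratic in the $b$'s so, with $b_{\mathcal{A}}\equiv 0$, it contributes nothing linear in $\lambda$. Hence compatibility is equivalent to the $\lambda^1$-coefficients of \eqref{eq2thm} and \eqref{eq4thm}, namely
\begin{subequations}
\begin{align}
\eta^{is}\,b^{jk}_s-\eta^{js}\,b^{ik}_s&=0\,,\label{plan:C1}\\
\eta^{is}\!\left(\frac{\partial b^{jr}_s}{\partial u^k}-\frac{\partial b^{jr}_k}{\partial u^s}\right)&=0\,.\label{plan:C2}
\end{align}
\end{subequations}

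Next I would integrate \eqref{plan:C1}--\eqref{plan:C2}. Since $\eta^{ij}$ is invertible, contracting \eqref{plan:C2} with $\eta_{li}$ shows it is equivalent to $\partial b^{jr}_l/\partial u^k=\partial b^{jr}_k/\partial u^l$ for all indices; thus for each fixed pair $(j,r)$ the $1$-form $b^{jr}_k\,du^k$ is closed, so on a contractible coordinate domain there exist $\phi^{jr}(u)$ with $b^{jr}_k=\partial\phi^{jr}/\partial u^k$. Feeding this into \eqref{eq1thm} for $\mathcal{B}_{(1)}$ gives $\partial\!\left(g^{ij}-\phi^{ij}-\phi^{ji}\right)/\partial u^k=0$; adding a constant symmetric matrix to $\phi^{ij}$ (which leaves $b^{ij}_k$ unchanged, and is possible because $g^{ij}$ is symmetric) we may assume $g^{ij}=\phi^{ij}+\phi^{ji}$. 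Finally, substituting $b^{jk}_s=\partial_s\phi^{jk}$ into \eqref{plan:C1} and contracting twice with $\eta$ yields $\eta_{im}\,\partial_k\phi^{mj}=\eta_{km}\,\partial_i\phi^{mj}$, i.e.\ for each $j$ the $1$-form $\eta_{im}\phi^{mj}\,du^i$ is closed; a second use of the Poincaré lemma produces $h^j(u)$ with $\eta_{im}\phi^{mj}=\partial h^j/\partial u^i$, equivalently $\phi^{mj}=\eta^{ms}\,\partial h^j/\partial u^s$. Reinserting this into $g^{ij}=\phi^{ij}+\phi^{ji}$ and $b^{ij}_k=\partial_k\phi^{ij}$ reproduces precisely the expressions in \eqref{opmok}.

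For the converse one can argue either directly or conceptually. Directly: if $\mathcal{B}_{(1)}$ has the form \eqref{opmok}, then $b^{ij}_k=\eta^{is}\partial_s\partial_k h^j$ and $g^{ij}=\eta^{is}\partial_s h^j+\eta^{js}\partial_s h^i$, and substitution into \eqref{plan:C1}--\eqref{plan:C2} shows both reduce, after cancelling the constant matrices $\eta$, to the symmetry of the mixed third partials of the $h^j$; together with the hypothesis that $\mathcal{B}_{(1)}$ is a Poisson structure this makes the pencil Hamiltonian for all $\lambda$, i.e.\ the operators are compatible. Conceptually: \eqref{opmok} says exactly that $\mathcal{B}_{(1)}$ is the Lie derivative $[[P_h,\mathcal{A}_{(1)}]]$ of $\mathcal{A}_{(1)}$ along the evolutionary vector field with characteristic $h^j(u)$, and the graded Jacobi identity for the Schouten bracket together with $[[\mathcal{A}_{(1)},\mathcal{A}_{(1)}]]=0$ forces $[[\mathcal{A}_{(1)},\mathcal{B}_{(1)}]]=0$.

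The main obstacle is the first reduction: one must linearise all of \eqref{eq1thm}--\eqref{eq5thm} on the pencil carefully enough to be confident that only \eqref{plan:C1} and \eqref{plan:C2} survive as genuine compatibility constraints — in particular keeping track of the cyclic sum in \eqref{eq5thm} and recognising that the $\lambda^1$-part of \eqref{eq1thm} is not independent. A secondary subtlety is that the two Poincaré-lemma steps fix $\phi^{ij}$ and $h^j$ only up to constants; one has to verify that these constants can be absorbed consistently (using the symmetry of $g^{ij}$ and the gauge freedom $\phi^{ij}\mapsto\phi^{ij}+\text{const}$), and that, accordingly, the construction is guaranteed only locally — exactly as the statement asserts.
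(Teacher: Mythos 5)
The paper does not actually prove this statement — it is quoted from Mokhov \cite{omokh} (Lemma 2) — so there is no internal proof to compare against; your blind argument is correct and is essentially the standard derivation. Expanding the Grinberg conditions of Theorem~\ref{th:ham_A} on the pencil $\eta^{ij}+\lambda g^{ij}$, $\lambda b^{ij}_k$ does isolate exactly your two linear-in-$\lambda$ constraints (the $\lambda$-part of \eqref{eq1thm} being the corresponding condition for $\mathcal{B}_{(1)}$ itself and \eqref{eq5thm} contributing only at order $\lambda^2$), the contraction of the second constraint with $\eta_{li}$ gives closedness of $b^{jr}_k\,du^k$, and contracting the first with $\eta_{ia}\eta_{kb}$ after setting $b^{jr}_k=\partial_k\phi^{jr}$ gives closedness of $\eta_{im}\phi^{mj}\,du^i$, so the two Poincaré-lemma steps and the absorption of the constant symmetric matrix into $\phi^{ij}$ reproduce \eqref{opmok} exactly; the converse verification and the alternative Lie-derivative argument via the graded Jacobi identity are likewise sound and consistent with the paper's remark that \eqref{opmok} encodes only compatibility, not the Hamiltonianity of $\mathcal{B}_{(1)}$ itself.
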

This result is particularly useful for our purposes, since it allows a simplification of the problem by reducing the number of unknown functions in the metric and in the ultralocal term. From $n^2$ a priori independent entries ($(n^2+n)/2$ for the metric and $(n^2-n)/2$ for the ultralocal term) to $(n^2+n)/2$. We stress again the fact that there are no assumptions on the degeneracy property of $\mathcal{B}_{(1)}$ (and hence on $\mathcal{B}$).

It is important to notice that Mokhov's result in Theorem \ref{thm_mok} does not imply that $\mathcal{B}_{(1)}$ is automatically Hamiltonian. Therefore, we must impose that Theorem \ref{th:ham_A} is satisfied also for this operator.

With the further specification on the metric, Theorem \ref{darv} becomes:

\begin{corollary}\label{cormok}
If $\mathcal{A}$ is in flat coordinates, then $\mathcal{A}$ and $\mathcal{B}$ are compatible if and only if the following tensors annihilate: 
{\small
\begin{subequations}
\begin{align}
\label{eq:cond_G}
G^{ijk}&=\left(\eta^{ip}\dfrac{\partial h^\ell}{\partial u^p}+\,\eta^{\ell p}\, \dfrac{\partial h^i}{\partial u^p} \right)\eta^{js}\dfrac{\partial^2 h^k}{\partial u^s \partial u^\ell}-\left(\eta^{jp}\dfrac{\partial h^\ell}{\partial u^p}+\,\eta^{\ell p}\, \dfrac{\partial h^j}{\partial u^p} \right)\eta^{is}\dfrac{\partial^2 h^k}{\partial u^s \partial u^\ell}\,,\\[2ex]
R^{jr}_{sk}&=\,\dfrac{\partial^2 h^j}{\partial u^s \partial u^\ell}\,\eta^{\ell m}\, \dfrac{\partial^2 h^r}{\partial u^m \partial u^k}- \dfrac{\partial^2 h^r}{\partial u^s \partial u^\ell} \,\eta^{\ell m}\,\dfrac{\partial^2 h^j}{\partial u^m \partial u^k}\,,
\label{r1}\\[2ex]
\label{eq:cond_L}
\begin{split} 
L^{ijk}&=\,c_p^{ij}\,\omega^{pk}_\mathcal{B}+\frac{\partial \omega^{ij}_{\mathcal{B}}}{\partial u^p}(c^{pk}_su^s+f^{pk})
+c^{jk}_p\,\omega^{pi}_\mathcal{B}
+\frac{\partial \omega^{jk}_{\mathcal{B}}}{\partial u^p} 
(c^{pi}_su^s+f^{pi})    \\[1ex]&~~
+c^{ki}_p\,\omega^{pj}_\mathcal{B}  +\frac{\partial \omega^{ki}_{\mathcal{B}}}{\partial u^p}(c^{pj}_su^s+f^{pj}),
\end{split} 
\\[2ex] 
\label{eq:cond_P}
 \begin{split}
    {P^{ijk}}&=\,\eta^{is} \,\frac{\partial \omega^{jk}_{\mathcal{B}}}{\partial u^s}
    +\left(\eta^{i\ell}\,\frac{\partial h^s}{\partial u^\ell}+\eta^{s\ell}\,\frac{\partial h^i}{\partial u^\ell}\right) c^{jk}_s 
    -\eta^{i\ell}\,\frac{\partial^2 h^j}{\partial u^\ell \partial u^s} (c^{sk}_mu^m+f^{sk}) \\[1ex]
    &~~-\eta^{i\ell}\,\frac{\partial^2 h^k}{\partial u^\ell \partial u^s}(c^{js}_mu^m+f^{js})+\eta^{js} \dfrac{\partial \omega_{\mathcal{B}}}{\partial u^s}^{ik}+\left(\eta^{j\ell}\dfrac{\partial h^s}{\partial u^\ell}+\eta^{s\ell}\dfrac{\partial h^j}{\partial u^\ell}\right)c^{ik}_s \\[1ex]
    &~~-\eta^{j\ell}\dfrac{\partial^2 h^i}{\partial u^\ell \partial u^s}(c^{sk}_mu^m+f^{sk})-\eta^{j\ell}\dfrac{\partial^2 h^k}{\partial u^\ell \partial u^s}(c^{is}_mu^m+f^{is})\,,
    \end{split}\\[2ex]
    \label{eq:cond_S}
    \begin{split} 
     S^{jk}_{ir}&= 
      \dfrac{\partial^{2} \omega_{\mathcal{B}}^{jk}}{\partial u^{i} \partial u^{r}} 
     +  \dfrac{\partial^{2} h^{s}}{\partial u^{i } \partial u^{r}}
     c_{s}^{jk}
     +  \dfrac{\partial^{2} h^{k}}{\partial u^{i } \partial u^{s}} c_{r}^{js} +  \frac{\partial ^2 h^{k}}{\partial u^{s} \partial u^r}c_{i}^{s j}- \frac{\partial ^2 h^{j}}{\partial u^{s} \partial u^r}c_{i}^{s k} - \dfrac{\partial^{2} h^{j}}{\partial u^{i } \partial u^{s}} c_{r}^{sk}\\[1ex]
     &~~~~ -\dfrac{\partial^3 h^{j}}{\partial u^{i } \partial u^{s} \partial u^{r}} (c_{m}^{sk} u^{m}+ f^{sk})- \dfrac{\partial^3 h^{k}}{\partial u^{i } \partial u^{s} \partial u^{r} }(c_{m}^{js} u^{m}+ f^{js})\,.
     \end{split}
 \end{align}
 \end{subequations}}
\end{corollary}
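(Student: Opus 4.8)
The plan is to specialise Theorem~\ref{darv} to the case at hand, in which $\mathcal{A}$ is already in the Darboux form \eqref{eq:A_darboux}, so that $g^{ij}_\mathcal{A}=\eta^{is}\delta^j_s$ with $\eta^{ij}$ constant and invertible, $b^{ij}_{\mathcal{A},k}=0$, and $\omega^{ij}_\mathcal{A}=c^{ij}_k u^k+f^{ij}$ with $c,f$ subject to \eqref{eq:c_constraints}--\eqref{eq:c_f_constraints}; and in which $\mathcal{B}_{(1)}$ is written in the form \eqref{opmok} furnished by Theorem~\ref{thm_mok}, so that $g^{ij}_\mathcal{B}$ and $b^{ij}_{\mathcal{B},k}$ are expressed through $n$ functions $h^1,\dots,h^n$. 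By Theorem~\ref{darv}, the compatibility of $\mathcal{A}$ and $\mathcal{B}$ reduces to (a) the compatibility of the first-order truncations $\mathcal{A}_{(1)}$ and $\mathcal{B}_{(1)}$, and (b) the identical vanishing of the tensors $L^{ijk}$, $P^{ijk}$, $S^{ijk}_r$ of \eqref{cond0a}--\eqref{cond3a}. The body of the proof is then to translate each of (a) and (b) into the system \eqref{eq:cond_G}--\eqref{eq:cond_S}.

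For part (a): with $\mathcal{A}_{(1)}=\eta^{is}\delta^j_s\,\partial_x$ and $\mathcal{B}_{(1)}$ already presented in the Mokhov form \eqref{opmok}, Theorem~\ref{thm_mok} guarantees that $\mathcal{B}_{(1)}$ is compatible with $\mathcal{A}_{(1)}$ precisely when it is itself Hamiltonian; hence (a) is equivalent to imposing Theorem~\ref{th:ham_A} on $g_\mathcal{B}$ and $b_\mathcal{B}$. The key computational remark is that, for the form \eqref{opmok}, one has $\partial b^{jr}_{\mathcal{B},k}/\partial u^s=\partial b^{jr}_{\mathcal{B},s}/\partial u^k$ by equality of mixed partials of the $h$'s, so that every antisymmetrised derivative of $b_\mathcal{B}$ appearing in Theorem~\ref{th:ham_A} vanishes. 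Consequently $g^{ij}_\mathcal{B}=g^{ji}_\mathcal{B}$ is immediate, \eqref{eq1thm} becomes the identity $\partial g^{ij}_\mathcal{B}/\partial u^k=b^{ij}_{\mathcal{B},k}+b^{ji}_{\mathcal{B},k}$, condition \eqref{eq5thm} reduces to $0=0$, condition \eqref{eq2thm} becomes exactly $G^{ijk}=0$, and condition \eqref{eq4thm} collapses to its purely quadratic part $b^{ij}_{\mathcal{B},s}b^{sr}_{\mathcal{B},k}-b^{ir}_{\mathcal{B},s}b^{sj}_{\mathcal{B},k}=\eta^{im}R^{jr}_{mk}$, which by invertibility of $\eta$ is equivalent to $R^{jr}_{sk}=0$. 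This produces \eqref{eq:cond_G} and \eqref{r1}.

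For part (b): it remains to expand the tensors of Theorem~\ref{darv} after substituting $g^{ij}_\mathcal{A}=\eta^{ij}$ (whence all its $u$-derivatives vanish), $b^{ij}_{\mathcal{A},k}=0$, $\omega^{ij}_\mathcal{A}=c^{ij}_k u^k+f^{ij}$ (whence $\partial\omega^{ij}_\mathcal{A}/\partial u^s=c^{ij}_s$), together with $g_\mathcal{B}$ and $b_\mathcal{B}$ in the form \eqref{opmok}. Substituting into \eqref{cond0a} reproduces \eqref{eq:cond_L} term by term; substituting into \eqref{cond1a}, the summands carrying $\partial g^{ij}_\mathcal{A}/\partial u^s$ or $b_\mathcal{A}$ drop out and the rest regroup into \eqref{eq:cond_P}; and substituting into \eqref{cond3a}, then collecting the second- and third-order derivatives of the $h$'s, yields \eqref{eq:cond_S}. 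I expect the $S^{ijk}_r$ reduction to be the main obstacle: it is the bulkiest computation, it again relies on the cancellation of the antisymmetrised $\partial b_\mathcal{B}$ blocks (in particular to kill the cyclic sum on the last line of \eqref{cond3a}), and it requires care in matching the free indices of $S^{ijk}_r$ in \eqref{cond3a} with those of $S^{jk}_{ir}$ in \eqref{eq:cond_S}. Collecting the equivalences obtained in (a) and (b) then completes the proof.
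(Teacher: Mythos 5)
Your proposal is correct and follows essentially the same route as the paper's proof: invoke Theorem~\ref{thm_mok} so that compatibility of the first-order parts reduces to imposing the Hamiltonian conditions of Theorem~\ref{th:ham_A} on $\mathcal{B}_{(1)}$ in Mokhov form (with \eqref{eq1thm}, \eqref{eq5thm} trivially satisfied and \eqref{eq2thm}, \eqref{eq4thm} giving $G^{ijk}=0$, $R^{jr}_{sk}=0$), and then specialise the tensors \eqref{cond0a}--\eqref{cond3a} of Theorem~\ref{darv} to flat coordinates with $\omega_{\mathcal{A}}^{ij}=c^{ij}_s u^s+f^{ij}$ to obtain \eqref{eq:cond_L}, \eqref{eq:cond_P}, \eqref{eq:cond_S}. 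The only difference is that you spell out the quadratic reduction of \eqref{eq4thm} to $R$ more explicitly than the paper does, which is a harmless elaboration.
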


\begin{proof}
We first observe that by Mokhov's Theorem it follows that the first-order operators $\mathcal{A}_{(1)}$ and~$\mathcal{B}_{(1)}$ form a compatible pair, i.e.\ no additional conditions need to be satisfied.

We now consider Theorem \ref{thm1} on $\mathcal{B}_{(1)}$. Conditions~\eqref{eq1thm} and \eqref{eq5thm} are trivially satisfied, 
since we have
\begin{subequations}
\begin{align}
\frac{\partial}{\partial u^k}\left(\eta^{is}\,\frac{\partial h^j}{\partial u^s}+\eta^{js}\,\frac{\partial h^i}{\partial u^s}\right)&=\eta^{is}\frac{\partial ^2 h^j}{\partial u^s\partial u^k}+\eta^{js}\frac{\partial^2 h^i}{\partial u^s\partial u^k}, \\[1ex]
        \frac{\partial b^{jr}_k}{\partial u^s}- \frac{\partial b^{jr}_s}{\partial u^k}&=\eta^{ja}\frac{\partial^3h^j}{\partial u^a\partial u^k \partial u^s}-\eta^{ja}\frac{\partial^3h^j}{\partial u^a\partial u^s \partial u^k}=0. 
    \end{align}
\end{subequations} 
We have to impose the further conditions \eqref{eq2thm} and \eqref{eq4thm} to prove the Hamiltonian property for $\mathcal{B}_{(1)}$. They give $G^{ijk}=0$ and $R^{jr}_{sk}=0$ respectively, with $G^{ijk}$ in \eqref{eq:cond_G} and $R^{jr}_{sk}$ in \eqref{r1}.

Now, we consider the conditions of Theorem~\ref{darv} in flat coordinates, i.e.\ conditions  \eqref{cond0a}, \eqref{cond1a} and~\eqref{cond3a}. E.g.\ the condition~\eqref{cond1a} becomes: 
\begin{align}\begin{split}
    P^{ijk}&=\eta^{is} \dfrac{\partial \omega_{\mathcal{B}}^{jk}}{ \partial u^s}+ \left(\eta^{il} \dfrac{\partial h^s}{\partial u^{l}}+\eta^{sl}\dfrac{\partial h^i}{\partial u^{l}} \right) \dfrac{\partial w_{\mathcal{A}}^{jk}}{\partial u^s}-\eta^{il} \dfrac{ \partial h^j}{\partial u^l \partial u^s}w_{\mathcal{A}}^{sk} -\eta^{il}  \dfrac{ \partial h^k}{\partial u^l \partial u^s}w_{\mathcal{A}}^{js}\\
    &~~+\eta^{js}  \dfrac{\partial w_{\mathcal{B}}^{ik}}{ \partial u^s} +\left(\eta^{jl}\dfrac{\partial h^s}{\partial u^{l}}+\eta^{sl}\dfrac{\partial h^j}{\partial u^{l}} \right)\dfrac{\partial w_{\mathcal{A}}^{ik}}{\partial u^s}-\eta^{jl}\dfrac{\partial h^i}{\partial u^{l} \partial u^s} w_{\mathcal{A}}^{sk} -\eta^{jl}\dfrac{\partial h^k}{\partial u^{l} \partial u^s} w_{\mathcal{A}}^{is}\,,
    \end{split}
\end{align}
and making $\omega_{\mathcal{A}}$ explicit as in~\eqref{eq:A_darboux}, i.e.\ $\omega^{ij}_\mathcal{A}=c^{ij}_su^s+f^{ij}$, we obtain \eqref{eq:cond_P}.
\end{proof}

\begin{remark}[On compatible first-order operators]
    We note that for operators in two components, the annihilation of tensors $G^{ijk}$ in \eqref{eq:cond_G} and $R^{jr}_{sk}$ in \eqref{r1} reduces to a system of hydrodynamic type in Liouville variables
    \begin{equation}\label{eq:liouv_var} 
        r^{ij}(u)=\eta^{ii}\,\dfrac{\partial h^j}{\partial u^i}\,,
    \end{equation}
    where the metric $\eta^{ij}$ is taken in its diagonal form.  This was shown and studied by Mokhov in~\cite{omokh}. For a compatible pair of homogeneous operators of first order $(\mathcal{A}_{(1)},\mathcal{B}_{(1)})$, the conditions for $\mathcal{B}_{(1)}$ in~\eqref{opmok} to be Hamiltonian are given by the vanishing of tensors $G^{ijk}$ in \eqref{eq:cond_G} and~$R^{jr}_{sk}$ in \eqref{r1} only, reproducing the result in~\cite[eq.\ (4.3),(4.4)]{Mok11}.
\end{remark}

 \begin{remark}[General coordinates expressions]\label{rmk:general_coords}
     It is easy to see that tensors $P^{ijk}$ in \eqref{eq:cond_P} and $S^{jk}_{ir}$ in \eqref{eq:cond_S} can be  written in general coordinates (non necessarily flat) as
    \begin{subequations} 
    \begin{align}
        P^{ijk}&=\nabla^i_\mathcal{A}\, \omega^{jk}_\mathcal{B}+\nabla_\mathcal{A}^{j}\, \omega^{ik}_\mathcal{B}+\nabla^i_\mathcal{B}\, \omega^{jk}_\mathcal{A}+\nabla_\mathcal{B}^{j}\, \omega^{ik}_\mathcal{A}\,, \label{newP}  \\[1ex]
        S^{jk}_{ir}&=\left(\nabla_{\mathcal{A}}\right)_{i}\left(\nabla_{\mathcal{A}}\right)_{r}\, \omega^{jk}_\mathcal{B}+\left(\nabla_{\mathcal{B}}\right)_{i}\left(\nabla_{\mathcal{B}}\right)_{r}\, \omega^{jk}_\mathcal{A}\,. \label{eq:S_geometrica}
    \end{align}
    \end{subequations} 
    \end{remark}

We consider now Corollary~\ref{cormok} for operators in $n=2$ and $n=3$ components and analyse the structure of $\mathcal{B}$.

\subsection{\texorpdfstring{Operator $\mathcal{B}$ in $2$ components}{n2}}
We consider the operator in the field variables $(u,v)\equiv(u^1,u^2)$. The non-degeneracy condition on the operator $\mathcal{A}$ as in~\eqref{eq:A_darboux} implies that the $(1+0)$ type operator is of the form 
\begin{equation}  \label{eq:A_nondeg_2comp}
    \mathcal{A}=\begin{pmatrix}
        a&0\\0&b
    \end{pmatrix}\partial_x+\begin{pmatrix}
        0&c\\-c&0
    \end{pmatrix},
\end{equation}
with $a,b,c \in \mathbb{R}$ constants. Here, we note that the only 2-dimensional Lie algebra compatible with a non-degenerate scalar product is the abelian one ($c^{ij}_k=0$), so that we can choose any arbitrary bi-vector $f^{ij}=c \, \partial_u \wedge \partial_v$.  In Darboux coordinates we only have four options $$(a,b)\in\{(1,1),(-1,1),(1,-1),(-1,-1)\},$$ 
from which we can distinguish two cases, i.e.\ $ab>0$ and $ab<0$. 

The form of the operator $\mathcal{B}$ is given by 
\begin{equation}
    \mathcal{B} = \mathcal{B}_{(1)} + \begin{pmatrix}
        0 & \omega(u,v) \\[1ex]
        -\omega(u,v) & 0
    \end{pmatrix}\,,
\end{equation}
with $\mathcal{B}_{(1)}$ dependent on the field variables locally via the functions $h^1(u,v)$, $h^2(u,v)$ as in~\eqref{opmok} for $2$ components, and $\omega_{\mathcal{B}}^{12}=-\omega_{\mathcal{B}}^{21}\equiv \omega(u,v)$. 

The following result holds true:
\begin{lemma}\label{lemma12}
    The ultralocal term $\omega_\mathcal{B}$ has its only non-zero entry given by
    \begin{equation}
        \omega(u,v)=c\left(\frac{\partial h^1}{\partial u}+\frac{\partial h^2}{\partial v}\right)+c_1\,,
            \label{formadiomega}
    \end{equation}
    with $c,c_1 \in \mathbb{R}$ constants. 
\end{lemma}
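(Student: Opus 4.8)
The plan is to exploit Corollary~\ref{cormok} with $n=2$, $\mathcal{A}$ in the form \eqref{eq:A_nondeg_2comp}, i.e.\ $\eta^{11}=a$, $\eta^{22}=b$, $c^{ij}_k=0$, $f^{12}=-f^{21}=c$, and to extract from the vanishing of the tensors $P^{ijk}$ in \eqref{eq:cond_P} and $S^{jk}_{ir}$ in \eqref{eq:cond_S} a determined PDE system for the single unknown $\omega(u,v)$. First I would simplify $P^{ijk}$: since all $c^{ij}_k$ vanish, most terms of \eqref{eq:cond_P} drop out and one is left (for the independent choice of free indices, say $i=1$, $j=2$, $k=1$ and $i=1$, $j=2$, $k=2$, using skew-symmetry in the first two slots of $\omega_\mathcal{B}$) with expressions linear in first derivatives of $\omega$ and second derivatives of the potentials $h^1,h^2$ contracted with $f^{ij}$. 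Concretely, the surviving terms give, up to relabelling, equations of the shape
\begin{equation}
a\,\frac{\partial \omega}{\partial u} = a\,c\,\frac{\partial^2 h^2}{\partial u^2} + \text{(terms in $h$)}\,, \qquad
b\,\frac{\partial \omega}{\partial v} = b\,c\,\frac{\partial^2 h^1}{\partial v^2} + \text{(terms in $h$)}\,,
\end{equation}
which I would massage, using the symmetry of mixed partials, into
\begin{equation}
\frac{\partial \omega}{\partial u} = c\,\frac{\partial}{\partial u}\!\left(\frac{\partial h^1}{\partial u}+\frac{\partial h^2}{\partial v}\right), \qquad
\frac{\partial \omega}{\partial v} = c\,\frac{\partial}{\partial v}\!\left(\frac{\partial h^1}{\partial u}+\frac{\partial h^2}{\partial v}\right).
\end{equation}
Integrating this closed pair of equations immediately yields $\omega = c\big(\partial_u h^1 + \partial_v h^2\big) + c_1$ with $c_1\in\mathbb{R}$, which is \eqref{formadiomega}.

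The remaining work is to check consistency, namely that the $S^{jk}_{ir}=0$ conditions do not over-constrain or contradict this solution. Here I would substitute the expression just found for $\omega$ into \eqref{eq:cond_S}; because $c^{ij}_k=0$, the tensor $S^{jk}_{ir}$ reduces to $\partial^2_{u^iu^r}\omega_\mathcal{B}^{jk}$ plus terms of the form $\partial^3 h \cdot f$, and one verifies that differentiating the gradient equations for $\omega$ once more exactly reproduces these $f^{sk},f^{js}$ terms, so $S^{jk}_{ir}\equiv 0$ is automatic. I would also note that the geometric reformulation in Remark~\ref{rmk:general_coords} makes this transparent: $P^{ijk}=\nabla^i_\mathcal{A}\omega^{jk}_\mathcal{B}+\nabla^j_\mathcal{A}\omega^{ik}_\mathcal{B}+\nabla^i_\mathcal{B}\omega^{jk}_\mathcal{A}+\nabla^j_\mathcal{B}\omega^{ik}_\mathcal{A}$, and since $\omega_\mathcal{A}$ is covariantly constant for $\nabla_\mathcal{A}$ in Darboux coordinates while $\omega_\mathcal{A}^{ij}$ is constant, the last two terms combine with the $\mathcal{B}_{(1)}$-structure to force $\omega_\mathcal{B}$ to differ from a multiple of $\operatorname{div}$ of the potential by a constant; $S^{jk}_{ir}=0$ is then the first derivative of $P^{ijk}=0$ and carries no new information.

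The main obstacle I anticipate is purely bookkeeping: in $P^{ijk}=0$ with $n=2$ one must carefully track which index combinations are independent (there are only a few, given the skew-symmetry of $\omega_\mathcal{B}$ and the antisymmetrisation in $i,j$), and in particular handle the cases $ab>0$ and $ab<0$ uniformly — the point being that the factors $a,b$ appear multiplicatively on both sides of each scalar equation and hence cancel regardless of sign, so no case distinction on the signature of $g_\mathcal{A}$ is actually needed for this lemma. A secondary subtlety is that Mokhov's Theorem~\ref{thm_mok} only guarantees $(\mathcal{A}_{(1)},\mathcal{B}_{(1)})$ compatible, not that $\mathcal{B}_{(1)}$ is Hamiltonian; but the Hamiltonianity of $\mathcal{B}_{(1)}$ (the vanishing of $G^{ijk}$ and $R^{jr}_{sk}$) involves only the $h^i$ and is independent of $\omega$, so it does not interfere with the determination of $\omega$ carried out above.
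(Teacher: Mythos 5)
Your proposal is correct and follows essentially the same route as the paper: the relevant components of $P^{ijk}=0$ from Corollary~\ref{cormok} (the paper uses $P^{112}$ and $P^{221}$, equivalent by the symmetry of $P$ to your index choices) yield precisely the two gradient equations $\partial_u\omega = c\,\partial_u\bigl(\partial_u h^1+\partial_v h^2\bigr)$ and $\partial_v\omega = c\,\partial_v\bigl(\partial_u h^1+\partial_v h^2\bigr)$, which integrate to \eqref{formadiomega}. Your extra check that $S^{jk}_{ir}=0$ is then automatic (with $c^{ij}_k=0$ it reduces to $\partial^2_{ir}\bigl(\omega-c(\partial_u h^1+\partial_v h^2)\bigr)$) is correct but not required for the lemma, which only records the necessary form of $\omega_\mathcal{B}$; the paper's proof omits it.
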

\begin{proof}
Exploiting Corollary \ref{cormok}, we consider the necessary condition $P^{112}=0$, i.e.
\begin{equation}
    \begin{split}\frac{1}{2}\,P^{112}&=\eta^{1s} \dfrac{\partial \omega^{12}_{\mathcal{B}}}{\partial u^s}+\left(\eta^{1\ell}\dfrac{\partial h^s}{\partial u^{\ell}}+\eta^{s\ell}\dfrac{\partial h^1}{\partial u^{\ell}}\right)\!\frac{\partial\omega_{\mathcal{A}}^{12}}{\partial u^{s}}-\eta^{1\ell} 
 \dfrac{\partial h^1}{\partial u^{\ell} \partial u^s}\omega^{s2}_{\mathcal{A}}-\eta^{1\ell}\dfrac{\partial h^2}{\partial u^{\ell} \partial u^s}\omega^{1s}_{\mathcal{A}}\\[.5ex]
   &=a \dfrac{\omega^{12}_{\mathcal{B}}}{\partial u}-a \dfrac{\partial^2 h^1}{\partial u^2}c-a 
 \dfrac{\partial^2 h^2}{\partial u \,\partial v}c = 0\,,
 \end{split}
\end{equation}
or equivalently
\begin{equation}
    \frac{\partial \omega}{\partial u}=c\left(\frac{\partial^2 h^1}{\partial u^2}+\frac{\partial h^{2}}{\partial u\,\partial v}\right)\,.
\end{equation}
Integrating with respect to $u$ we obtain
\begin{equation}\label{eq:omega1}
    \omega=c\left(\frac{\partial h^1}{\partial u}+\frac{\partial h^{2}}{\partial v}\right)+f_1(v),
\end{equation}
with $f_1(v)$ is arbitrary function in $v$. Analogously, for $P^{221}=0$ we have 
\begin{equation}\label{eq:omega2}
    \omega=c\left(\frac{\partial h^1}{\partial u}+\frac{\partial h^{2}}{\partial v}\right)+f_2(u),
\end{equation}
with $f_2(u)$ arbitrary function in $u$. For equations \eqref{eq:omega1} and \eqref{eq:omega2} to be both satisfied, we set $f_1(v)=f_2(u)=c_1$, with $c_1 \in \mathbb{R}$ constant, to give \eqref{formadiomega}. 
\end{proof}

We obtain the following classification of pairs $(\mathcal{A},\mathcal{B})$ of $(1+0)$ operators in $2$ components by making use of Mokhov's result (Theorem \ref{thm_mok}), 
the Corollary \ref{cormok} and the Lemma \ref{lemma12}.

\begin{theorem}\label{2x2thm}
    Any bi-Hamiltonian pair $(\mathcal{A},\mathcal{B})$ of non-homogeneous hydrodynamic operators with $\mathcal{A}$ non-degenerate, can be mapped into either $(\mathcal{A},\mathcal{B}^{\,1})$ or $(\mathcal{A},\mathcal{B}^{\,2})$, where $\mathcal{A}$ is of the form~\eqref{eq:A_nondeg_2comp} and 
    {\small
        \begin{equation} \label{eq:operator_B1}
            \mathcal{B}^{\,1}= \begin{pmatrix}
        2ak_1&bk_1+ak_2\\bk_1+ak_2&2bk_2
    \end{pmatrix}\partial_x
+\begin{pmatrix}
        0&c(k_1+k_2)+k_3\\
        -c(k_1+k_2)-k_3&0
    \end{pmatrix},
        \end{equation} }
        where $k_1,k_2,k_3 \in \mathbb{R}$ are arbitrary constants; 
        {\small
        \begin{equation}  \label{eq:operator_B2}
        \begin{split} 
            \mathcal{B}^{\,2}=&
            \begin{pmatrix}
                2a\,\dfrac{\partial h^1}{\partial u}&b\,\dfrac{\partial h^1}{\partial v}+a\,\dfrac{\partial h^2}{\partial u}\\[2ex]
                b\,\dfrac{\partial h^1}{\partial v}+a\,\dfrac{\partial h^2}{\partial u}&2b\,\dfrac{\partial h^2}{\partial v}
            \end{pmatrix}\partial_x
            +\begin{pmatrix}
                a\,\dfrac{\partial^2 h^1}{\partial u^2}&a\,\dfrac{\partial^2 h^2}{\partial u^2}\vspace{2mm}\\[2ex]
                b\,\dfrac{\partial^2 h^1}{\partial u\,\partial v}&b\,\dfrac{\partial^2 h^2}{\partial u\,\partial v}
            \end{pmatrix}u_x\\[1ex]
        &~+\begin{pmatrix}
                a\,\dfrac{\partial^2 h^1}{\partial u\,\partial v}&a\,\dfrac{\partial^2 h^2}{\partial u\,\partial v}\vspace{2mm}\\[1ex]
                b\,\dfrac{\partial^2 h^1}{\partial v^2}&b\,\dfrac{\partial^2 h^2}{\partial v^2}
            \end{pmatrix}v_x +\begin{pmatrix}
                0&c\left(\dfrac{\partial h^1}{\partial u}+\dfrac{\partial h^2}{\partial v}\right)\\[1ex]
                -c\left(\dfrac{\partial h^1}{\partial u}+\dfrac{\partial h^2}{\partial v}\right)&0
            \end{pmatrix},
            \end{split} 
        \end{equation}}
        where 
        \begin{enumerate}
            \item[\textup{1.}] if $ab>0$ in \eqref{eq:A_nondeg_2comp} then $h^1(u,v),h^2(u,v)$ are both solutions to the Laplace equation $\Delta h=0$
            
            \vspace*{-6ex}
            
            \begin{subequations}\label{eq:h1h2_laplace}
            \begin{align}
                h^{1}(u,v)&= \xi_1(u+ i v) + \xi_2(u-i v), \\ h^{2}(u,v)&= -i \xi_1(u+ i v) + i \xi_2\left(u-i v\right),
            \end{align}
            \end{subequations}
            with either $\xi_1^{''}=0$ or $\xi_2^{''}=0$;
            
            \item[\textup{2.}] if $ab<0$ then $h^1(u,v),h^2(u,v)$:
                       \begin{itemize}
                \item [{i.}]
            are solutions to the wave equation $\Box h=0$ with unitary velocity 
            \vspace*{-4ex}
            
            \begin{subequations}\label{eq:sol_waves}
            \begin{align}
                h^{1}(u,v)&=\xi_1(u+v)+\xi_2(u-v), \\ 
                h^{2}(u,v)&=\xi_1(u+v)-\xi_2(u-v)\,,
            \end{align}
            \end{subequations}
            with either $\xi_1''=0$ or $\xi_2''=0$, 
            \item[{ii.}] take the form
            
            \vspace*{-6ex}
            
            \begin{subequations}\label{eq:case2ii}
            \begin{align}
                h^1(u,v)&=\xi_1(u+v)+(v-u)\,\xi_2(u+v), \\ h^2(u,v)&=\xi_3(u+v)+(v-u)\,\xi_2(u+v)\,,
            \end{align}
            \end{subequations}
             \item[{iii.}]  take the form
             
            \vspace*{-6ex}
            
             \begin{subequations} \label{eq:case2iii}
            \begin{align}
               h^1(u,v)&=\xi_1(v-u)+(u+v)\, \xi_2(v-u), \\
               h^2(u,v)&=\xi_3(v-u)+(u+v)\, \xi_2(v-u)\,.
            \end{align} 
            \end{subequations} 
        \end{itemize}  
        \end{enumerate}
 
  \end{theorem}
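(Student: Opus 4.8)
The plan is to set $\mathcal{A}$ in Darboux form \eqref{eq:A_nondeg_2comp} and parametrise $\mathcal{B}$ via Mokhov's Theorem~\ref{thm_mok}, so that $\mathcal{B}_{(1)}$ is determined by two functions $h^1(u,v),h^2(u,v)$ as in~\eqref{opmok}, and $\omega_\mathcal{B}$ has single entry $\omega(u,v)$. By Lemma~\ref{lemma12} the ultralocal term is already fixed to \eqref{formadiomega}, so the only remaining freedom lies in $h^1,h^2$ and the constant $c_1$. The strategy is then to extract, from the vanishing of the tensors $G^{ijk}$, $R^{jr}_{sk}$, $L^{ijk}$, $P^{ijk}$, $S^{jk}_{ir}$ in Corollary~\ref{cormok}, a closed system of PDEs for $h^1,h^2$; solve it by splitting into the cases $ab>0$ and $ab<0$ dictated by the signature of $\eta$; and recognise the resulting linear constant-coefficient second-order PDE as the Laplace equation (elliptic case) or the wave equation (hyperbolic case), from which the general solutions \eqref{eq:h1h2_laplace}, \eqref{eq:sol_waves} follow by the classical d'Alembert/holomorphic factorisation. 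The branches \eqref{eq:case2ii}, \eqref{eq:case2iii} will emerge as the degenerate sub-cases where the factorisation is not a sum of two arbitrary one-variable functions but involves a linear prefactor.

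First I would compute $G^{ijk}$ and $R^{jr}_{sk}$ explicitly in two components. Since $b^{ij}_{\mathcal{B},k}=\eta^{is}\partial_s\partial_k h^j$, the tensor $R^{jr}_{sk}$ in \eqref{r1} is a commutator of Hessians of $h^1,h^2$ weighted by $\eta$; its vanishing forces algebraic relations among the second derivatives $\partial^2 h^a/\partial u^i\partial u^j$. In two components with $\eta=\mathrm{diag}(a,b)$ this typically collapses to a single scalar equation — after a linear change of the dependent functions $(h^1,h^2)\mapsto(\xi_1,\xi_2)$ adapted to the eigenvalues of $\eta$, each of $h^1,h^2$ satisfies $a\,h_{uu}+b\,h_{vv}=0$, i.e.\ Laplace when $ab>0$ and the wave equation when $ab<0$ (after rescaling to unit velocity). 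The conditions $G^{ijk}=0$ then cut down the pair $(\xi_1,\xi_2)$ so that at most one of them is nonlinear, which is the source of the stipulation ``either $\xi_1''=0$ or $\xi_2''=0$''. I would then verify that $L^{ijk}=0$, $P^{ijk}=0$, $S^{jk}_{ir}=0$ hold automatically: $L^{ijk}$ vanishes because $c^{ij}_k=0$ for the abelian Lie algebra forced by non-degeneracy (so $\omega_\mathcal{A}$ is constant), $P^{ijk}$ was already used in Lemma~\ref{lemma12}, and $S^{jk}_{ir}$ reduces, using \eqref{formadiomega} and $c^{ij}_k=0$, to $\partial_i\partial_r\omega_\mathcal{B}^{jk}$ expressed via third derivatives of $h$, which vanishes precisely by the PDE satisfied by $h^1,h^2$.

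The hyperbolic case $ab<0$ is the more delicate one, since the wave operator in characteristic coordinates $p=u+v$, $q=u-v$ has general solution $F(p)+H(q)$, but the extra compatibility tensors $G^{ijk}$ impose that $F$ and $H$ are not both free: either one is affine (giving \eqref{eq:sol_waves}), or the solution degenerates further and must be sought in the form $F(p)+q\,\Psi(p)$ or $F(q)+p\,\Psi(q)$, i.e.\ a solution of a higher-order consequence of the system that is no longer a pure d'Alembert sum — these are exactly \eqref{eq:case2ii} and \eqref{eq:case2iii}. So after obtaining the general solution of the second-order PDE I would substitute back into the full set of conditions of Corollary~\ref{cormok} and carefully enumerate which combinations of arbitrary functions survive; the branching into three sub-cases i., ii., iii.\ is the combinatorial heart of the argument. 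Finally, writing out $\mathcal{B}_{(1)}$ from \eqref{opmok} and adding $\omega_\mathcal{B}$ from \eqref{formadiomega} produces \eqref{eq:operator_B2}; the constant-coefficient specialisation $h^a$ linear in $u,v$ — equivalently $r^{ij}$ constant — yields the polynomial family \eqref{eq:operator_B1} with parameters $k_1=\partial_u h^1$, $k_2=\partial_v h^2$, $k_3=c_1$, and the off-diagonal $h$-derivatives absorbed by a relabelling.

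The main obstacle I anticipate is the case analysis in the hyperbolic setting: disentangling the overdetermined system $\{G=0,R=0,S=0\}$ into the disjoint families \eqref{eq:sol_waves}, \eqref{eq:case2ii}, \eqref{eq:case2iii} requires keeping track of when a putative arbitrary function is forced to be affine versus when a genuinely new (non-d'Alembert) solution branch opens up, and one must also check that the three families are exhaustive and that no further constraint (from the full bi-Hamiltonian compatibility, not just Hamiltonianity of $\mathcal{B}$) eliminates any of them. Verifying closure of the list — that every compatible $\mathcal{B}$ is diffeomorphic to one of $\mathcal{B}^{\,1}$ or $\mathcal{B}^{\,2}$ in one of the listed sub-forms — is where the bulk of the careful bookkeeping lies; the remaining Hamiltonicity checks for $\mathcal{B}_{(0)}$ and $\mathcal{B}$ itself, and the verification that $(\mathcal{A}_{(1)},\mathcal{B}_{(1)})$ is a compatible Dubrovin--Novikov pair, are immediate from Mokhov's theorem and the computations already recorded in the proof of Corollary~\ref{cormok}.
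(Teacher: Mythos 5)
Your setup (Darboux form for $\mathcal{A}$, Mokhov's parametrisation of $\mathcal{B}_{(1)}$ by $h^1,h^2$, Lemma~\ref{lemma12} for $\omega_\mathcal{B}$, split by the signature of $\eta$) matches the paper, but there is a decisive gap: you never impose that $\mathcal{B}$ itself is Hamiltonian as a $(1+0)$ operator, i.e.\ the mixed conditions \eqref{cond1}--\eqref{eq:phi} of Theorem~\ref{thm1} applied to the pair $(\mathcal{B}_{(1)},\omega_\mathcal{B})$. These are \emph{not} contained in Corollary~\ref{cormok} (whose tensors $G,R$ only encode Hamiltonianity of $\mathcal{B}_{(1)}$, and $L,P,S$ only the mutual compatibility), and they are not ``immediate from Mokhov's theorem'' as you claim; in the paper they are precisely the starting equations \eqref{siss1}--\eqref{siss2} of the proof, and they drive everything you need: the case analysis on $\omega_u,\omega_v$, the fact that nonlinear solutions require $c_1=0$ in \eqref{formadiomega}, the first-order Cauchy--Riemann/$p$-system coupling $h^1$ and $h^2$ (whence Laplace/wave, \emph{with} the conjugate relation between $h^1$ and $h^2$ appearing in \eqref{eq:h1h2_laplace} and \eqref{eq:sol_waves}, which uncoupled scalar PDEs cannot produce), and the dichotomy $h^2_v-h^1_u=0$ versus $a\omega_u^2+b\omega_v^2=0$ that generates the extra branches \eqref{eq:case2ii}--\eqref{eq:case2iii}. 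Relatedly, your claim that $R=0$ (with $G=0$) already forces $a\,h_{uu}+b\,h_{vv}=0$ for each $h^a$ is false, and the stipulation ``$\xi_1''=0$ or $\xi_2''=0$'' comes in the paper from $R^{12}_{12}=0$ applied \emph{after} the first-order system is solved, not from $G$.

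A concrete counterexample to your scheme: take $a=b=1$, $c\neq 0$, $h^1=e^u$, $h^2=0$, and $\omega_\mathcal{B}=c\,e^u+c_1$ as dictated by Lemma~\ref{lemma12}. One checks that all tensors of Corollary~\ref{cormok} vanish: $G^{ijk}=0$ and $R^{jr}_{sk}=0$ because the only nonzero Hessian entry is $h^1_{uu}$; $L=0$ identically in two components; $P=0$ by the very construction of $\omega_\mathcal{B}$; and $S^{12}_{ir}=c\,\partial_i\partial_r(h^1_u+h^2_v)-c\,h^1_{iur}-c\,h^2_{ivr}=0$. So your constraint set is satisfied, yet $\Phi^{112}_\mathcal{B}=2c\,e^{2u}-e^u(c\,e^u+c_1)\neq 0=\Phi^{211}_\mathcal{B}$, so $\mathcal{B}$ is not Hamiltonian and this $(h^1,h^2)$ rightly does not appear in the theorem's list ($e^u$ is not harmonic). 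Hence, following your plan, you would classify a strictly larger family than the bi-Hamiltonian pairs and the enumeration of the hyperbolic sub-branches would not be forced correctly. The fix is exactly the paper's route: add the conditions $\Phi^{ijk}_\mathcal{B}=\Phi^{kij}_\mathcal{B}$ (and the corresponding $\partial\Phi$ relation) to the system, analyse them according to the vanishing of $\omega_u,\omega_v$, and only then use $G$, $R$ (in particular $R^{12}_{12}$) to prune the arbitrary functions to the forms \eqref{eq:operator_B1}, \eqref{eq:h1h2_laplace}, \eqref{eq:sol_waves}, \eqref{eq:case2ii}, \eqref{eq:case2iii}.
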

  \begin{proof}
    We refer to Appendix \ref{class2x2}.  
  \end{proof}

  \begin{remark}[Compatibility and non-homogeneity]\label{rmk:compatibility_non_homog}
  We emphasise that the proposed classification is feasible thanks to the non-homogeneous nature of the operators in the compatibility expressed in terms of the Schouten bracket, i.e.
  \begin{equation}\label{eq:compatibility}
      [[\mathcal{A}+\lambda \,\mathcal{B},\mathcal{A}+\lambda \,\mathcal{B}]] = [[\mathcal{A},\mathcal{A}]] + \lambda ([[\mathcal{A},\mathcal{B}]] + [[\mathcal{B},\mathcal{A}]] ) + \lambda^2[[\mathcal{B},\mathcal{B}]]=0\,.
  \end{equation}
  The presence of the order $0$ terms produces a non-zero term linear in $\lambda$ in \eqref{eq:compatibility}, and hence further constraints (compare with \cite{DubFla,DubZha,Fer11,Mok12}). This technique was  used in a similar fashion in \cite{PavVerVit1}. 
  \end{remark}

Notice that $\mathcal{B}^{\,2}$ with $h^1(u,v),h^2(u,v)$ as in~\eqref{eq:h1h2_laplace} can be mapped into $\mathcal{B}^{\,2}$ with $h^1(u,v)$, $h^2(u,v)$ as in~\eqref{eq:sol_waves} with $v \mapsto iv$. The same happens for the cases \eqref{eq:case2ii}  and \eqref{eq:case2iii} with $v \mapsto -v$ (in this case the signature of the metric is unchanged). 
Therefore, all the admissible cases reduce either to the linear case ($\mathcal{B}^{\,1}$ in \eqref{eq:operator_B1}) or to the case where $h^1(u,v),h^2(u,v)$ are harmonic functions ($\mathcal{B}^{\,2}$ in \eqref{eq:operator_B2} with~\eqref{eq:h1h2_laplace}) when complex changes of variables are allowed.

As a by-product of the result, setting $c=0$ in~\eqref{eq:A_nondeg_2comp} and $c=0=k_3$ in~\eqref{eq:operator_B1} and~\eqref{eq:operator_B2}, the non-homogeneous pairs $(\mathcal{A},\mathcal{B}^{\,1})$ and $(\mathcal{A},\mathcal{B}^{\,2})$ reduce to the first-order pairs $(\mathcal{A}_{(1)},\mathcal{B}^{\,1}_{(1)})$ and $(\mathcal{A}_{(1)},\mathcal{B}^{\,2}_{(1)})$, for which we obtain a family of cases with no a priori assumptions on the degeneracy property of the metric~$g_{\mathcal{B}}$.

\begin{remark}[On Liouville variables]
    We observe that we recover the results obtained by Mokhov in~\cite{omokh} in terms of the Liouville variables $r^{ij}$~\eqref{eq:liouv_var} with some restrictions due to the presence of the additional structure of order $0$. In particular, the linear form of $h^1(u,v)$ and~$h^2(u,v)$ to get \eqref{eq:operator_B1} corresponds to~\cite[Theorem 2, case 1]{omokh} with the restriction that~$r^{11}$ and~$r^{22}$ are constants. The solutions~\eqref{eq:sol_waves} correspond to the solutions of the same case when~$ab < 0$. Finally, the solutions \eqref{eq:case2ii} and \eqref{eq:case2iii} satisfy the system of constraints introduced by Mokhov in \cite[Theorem 2, case 2]{omokh}. 
\end{remark}

\begin{remark}[Harmonic functions and Mokhov's result]
    In \cite{Mok13}, the author proved that a pair of metrics
    \begin{equation}
        g^{ij}_\mathcal{A}=\delta^{ij}, \qquad g^{ij}_{\mathcal{B}}=\text{e}^{a(u,v)}\,\delta^{ij},
    \end{equation}
    with $a(u,v)$ a non-constant real harmonic function are compatible if and only if $a=a(u\pm i\, v)$. We cover this result from the solution \eqref{eq:h1h2_laplace} of the Theorem \ref{2x2thm}. In particular, defining
    $$\xi_{1}'(u+iv)=\text{e}^{a(u,v)}+k_1, \qquad \text{or}\qquad \xi_2'(u-iv)=\text{e}^{a(u,v)}+k_2,$$
    with $k_1,k_2$ constants, 
    we obtain that the harmonic function $a(u,v)$ is of the same type as in~\cite[Proposition 4.6]{Mok13}.
\end{remark}

We conclude this part with some comments on compatible pairs whose second operator, namely $\mathcal{B}$, is degenerate. 

 \begin{remark}[On degenerate metrics]
 With $h^1(u,v),h^2(u,v)$ as in \eqref{eq:h1h2_laplace} and 
 \eqref{eq:sol_waves}, the metric $g_{\mathcal{B}}$ in the operator $\mathcal{B}^{\,2}$ \eqref{eq:operator_B2} is diagonal, since
\begin{equation}
    g^{12}_\mathcal{B}=b\,\dfrac{\partial h^1}{\partial v}+a\,\dfrac{\partial h^2}{\partial u}=0\,. 
\end{equation}
Moreover, in these cases we have (see Appendix \ref{class2x2} equations \eqref{eq:h1h2_constr_case1} and \eqref{eq:h1h2_constr_case2})
\begin{equation}
    \dfrac{\partial h^2}{\partial v}=\dfrac{\partial h^1}{\partial u},
\end{equation}implying that $g_\mathcal{B}$ is a conformally pseudo-Euclidean metric, i.e.
\begin{equation}
    g^{ij}_\mathcal{B}=2\dfrac{\partial h^1}{\partial u}\, \varepsilon^i\delta^{ij}, 
\end{equation}
where $\varepsilon\in\{-1,+1\}$. From this, it also follows that metrics $g_{\mathcal{B}}$ built from solutions \eqref{eq:h1h2_laplace} and~\eqref{eq:sol_waves} are non-degenerate if and only if either $h^1(u,v)=h^1(u)$ or $h^2(u,v)=h^2(v)$. This case, though, would fall into the linear case, i.e. the constant $\mathcal{B}^{\,1}$ in \eqref{eq:operator_B1}.

 In case $2.ii$ (e.g.\ with $(a,b)=(1,-1)$) the the solution \eqref{eq:sol_waves} induces a degenerate metric $g_{\mathcal{B}}$ if $\det g_{\mathcal{B}} = 0$, and e.g.\ solving in $\xi_2(u+v)$ we have 
\begin{equation}
    \xi_2(u+v) = \frac{1}{2} \left(\xi'_1(u+v) + \xi'_3(u+v) \right)\,, 
\end{equation}
and the metric becomes of rank 1 since the vectors $(g_{\mathcal{B}}^{11},g_{\mathcal{B}}^{12})$ and $(g_{\mathcal{B}}^{21},g_{\mathcal{B}}^{22})$ are proportional.

 \end{remark}

\subsection{\texorpdfstring{Operator $\mathcal{B}$ in $3$ components}{n3}}
A classification of non-homogeneous pairs in higher dimensions is, in general, a hard task. Even in the case with $n=3$ components we are not able to produce a list of compatible structures because of long non-trivial computations. Nevertheless, we show some preliminary results in this direction, referring to a future work on this subject.

We consider a pair~$(\mathcal{A},\mathcal{B})$ of non-homogeneous Hamiltonian operators in the variables $(u,v,w) \equiv (u^1,u^2,u^3)$, with $\mathcal{A}$ non-degenerate and written in Darboux form as
{\small
\begin{equation}  
       \mathcal{A}= \begin{pmatrix}
           a & 0 &0\\[1.5ex]           
           0 & b &0\\[1.5ex]
           0& 0 &c
       \end{pmatrix}\partial_x+
   \begin{pmatrix}
           0 & c_1 w + c_2& -\dfrac{c c_1}{b}v + c_3 \vspace{0.2cm}\\
           -c_1w - c_2&0 & \dfrac{ c c_1}{a} u + c_4 \vspace{0.2cm}\\
            \dfrac{c c_1}{b}v - c_3 & - \dfrac{ c c_1}{a} u - c_4 &0
                  \end{pmatrix},
   \end{equation} }

\noindent
   with $a,b,c\in\{-1,1\}$ and $c_1, \dots, c_4 \in \mathbb{R}$ arbitrary constants. We refer to \cite{GubOliSgrVer} for further details on the $n=3$ Lie algebra structures of the operators.

   Applying Mokhov's Theorem \ref{thm_mok} we obtain the form of the compatible operator of first-order $\mathcal{B}_{(1)}$, so that the metric $g_\mathcal{B}$ has the following entries:
   {\small
\begin{equation} \label{eq:metr_3comp}
g_\mathcal{B}=\begin{pmatrix}2a\,\dfrac{\partial h^1}{\partial u}&b\,\dfrac{\partial h^1}{\partial v}+a\,\dfrac{\partial h^2}{\partial u}&c\,\dfrac{\partial h^1}{\partial w}+a\,\dfrac{\partial h^3}{\partial u}\\[2ex]
b\,\dfrac{\partial h^1}{\partial v}+a\,\dfrac{\partial h^2}{\partial u}&2b\,\dfrac{\partial h^2}{\partial v}&c\,\dfrac{\partial h^2}{\partial w}+b\,\dfrac{\partial h^3}{\partial v}\\[2ex]
    c\,\dfrac{\partial h^1}{\partial w}+a\,\dfrac{\partial h^3}{\partial u}&c\,\dfrac{\partial h^2}{\partial w}+b\,\dfrac{\partial h^3}{\partial v}&2c\,\dfrac{\partial h^3}{\partial w}
    \end{pmatrix}\,,
\end{equation}} 
while the ultralocal term $\omega_{\mathcal{B}}$ has the form
{\small
\begin{equation}\label{eq:omega_3comp}
    \omega_\mathcal{B}=\begin{pmatrix}
        0&\omega_1&\omega_2\\
        -\omega_1&0&\omega_3\\
        -\omega_2&\omega_3&0
    \end{pmatrix}.
\end{equation}}
\begin{lemma}\label{lemma:ultralocal_3}
    The ultralocal term has the following entries
    {\small
    \begin{subequations}\label{eq:omega_3_entries}
    \begin{align}
        \begin{split} &\omega_1=\!\left(c_3 - \dfrac{c c_1}{b} v\right)\! \dfrac{\partial h^2}{\partial w} - \!\left(\dfrac{c c_1}{a} u + c_4\right)\! \dfrac{\partial h^1}{\partial w}- c_1 h^3
        +(c_2 + c_1 w) \!\left(\dfrac{\partial h^2}{\partial v}+\dfrac{\partial h^1}{\partial u} \right)\!+k w + k_1 \,,\end{split}\\[1ex]
       \begin{split}& \omega_2=\!\left(c_3 - \dfrac{c c_1}{b} v \right)\!\!\left(\dfrac{\partial h^1}{\partial u}+\dfrac{\partial h^3}{\partial w}  \right)\!+(c_2 + c_1 w)\dfrac{\partial h^3}{\partial v} 
       +\!\left(\dfrac{c c_1}{a} u + c_4\right)\! \dfrac{\partial h^1}{\partial v} +\dfrac{c c_1}{b}h^2-\dfrac{c k}{b}v + k_2\,,\end{split} \\[1ex]
       \begin{split} & \omega_3=\!\left(c_3 - \dfrac{c c_1}{b} v \right)\!\dfrac{\partial h^{2}}{\partial u}-(c_2 + c_1 w) \dfrac{\partial h^{3}}{\partial u} -\dfrac{c c_1}{a}h^1-\dfrac{c k}{a}u 
        + \!\left(\dfrac{c c_1}{a} u + c_4\right)\! \!\left(\dfrac{\partial h^{2}}{\partial v}+\dfrac{\partial h^{3}}{\partial w}\right)\!+ k_3\,. \end{split}
    \end{align}
    \end{subequations}}
\end{lemma}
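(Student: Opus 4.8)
The plan is to determine $\omega_{\mathcal{B}}$ from the compatibility conditions of Corollary~\ref{cormok}, exactly as in the two-component Lemma~\ref{lemma12}. Of the five tensor conditions $G^{ijk}$, $R^{jr}_{sk}$, $L^{ijk}$, $P^{ijk}$, $S^{jk}_{ir}$ listed there, the only one whose vanishing directly pins down the ultralocal term is $P^{ijk}=0$ in~\eqref{eq:cond_P} (equivalently, its covariant form~\eqref{newP}): inspecting~\eqref{eq:cond_P} one sees that the unknown $\omega_{\mathcal{B}}$ enters $P^{ijk}$ \emph{only} through the combination $\eta^{is}\,\partial \omega^{jk}_{\mathcal{B}}/\partial u^s+\eta^{js}\,\partial \omega^{ik}_{\mathcal{B}}/\partial u^s$, while every other contribution is built from the potentials $h^1,h^2,h^3$ supplied by Mokhov's Theorem~\ref{thm_mok} and from the fixed (Poisson) tensor $\omega_{\mathcal{A}}$. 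Hence $P^{ijk}=0$ is a first-order \emph{linear inhomogeneous} system for the three entries $\omega_1=\omega^{12}_{\mathcal{B}}$, $\omega_2=\omega^{13}_{\mathcal{B}}$, $\omega_3=\omega^{23}_{\mathcal{B}}$ of~\eqref{eq:omega_3comp}, and integrating it is what yields~\eqref{eq:omega_3_entries}.

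First I would extract the six ``diagonal'' components $P^{iik}=0$ (no summation on $i$, with $k\neq i$): since $\eta$ is diagonal these read $2\eta^{ii}\,\partial_{u^i}\omega^{ik}_{\mathcal{B}}+(\text{terms in }h^a,\omega_{\mathcal{A}})=0$, and so give six of the nine first partial derivatives of $(\omega_1,\omega_2,\omega_3)$ explicitly, namely $\partial_u\omega_1,\partial_v\omega_1,\partial_u\omega_2,\partial_w\omega_2,\partial_v\omega_3,\partial_w\omega_3$. The three remaining partials $\partial_w\omega_1,\partial_v\omega_2,\partial_u\omega_3$ are governed by the three ``fully mixed'' components $P^{231}=P^{132}=P^{123}=0$; these three relations are linearly dependent — a short check shows their (signed) sum vanishes identically — so the $3\times3$ system for $(\partial_w\omega_1,\partial_v\omega_2,\partial_u\omega_3)$ has rank two and determines them only up to a single free parameter. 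This rank drop is precisely the origin of the constant $k$ and of the linear terms $k w$, $-\tfrac{ck}{b}v$, $-\tfrac{ck}{a}u$ occurring in~\eqref{eq:omega_3_entries}. With all nine first partials in hand one integrates: the mixed-partial integrability conditions close (they reduce to the symmetry of the second derivatives of $h^1,h^2,h^3$, the Poisson property of $\omega_{\mathcal{A}}$, and, where needed, the conditions $G^{ijk}=R^{jr}_{sk}=0$ already imposed on the $h^a$), and integration introduces the three further additive constants $k_1,k_2,k_3$, producing~\eqref{eq:omega_3_entries}.

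Equivalently — and this is likely the cleanest way to write it — since $P^{ijk}$ is affine in $\omega_{\mathcal{B}}$ with homogeneous part $\eta^{is}\,\partial_s\omega^{jk}_{\mathcal{B}}+\eta^{js}\,\partial_s\omega^{ik}_{\mathcal{B}}$, the difference of any two solutions is an antisymmetric $\delta^{ij}(u)$ solving the flat ``Killing-type'' system $\eta^{is}\,\partial_s\delta^{jk}+\eta^{js}\,\partial_s\delta^{ik}=0$ (the same equation as~\eqref{nondegcom}), whose antisymmetric solution space in three variables is exactly a four-parameter family: three additive constants together with terms in $u,v,w$ whose coefficient ratios are dictated by $\eta$ — matching the $k,k_1,k_2,k_3$ part of~\eqref{eq:omega_3_entries} (this is the same computation that produces the linear ultralocal operators compatible with $\eta^{ij}\partial_x$ in Darboux coordinates, cf.~\eqref{eq:c_constraints}). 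It then suffices to verify, by direct substitution, that the $k=k_1=k_2=k_3=0$ part of~\eqref{eq:omega_3_entries} satisfies $P^{ijk}=0$. I expect the main obstacle to be exactly this bookkeeping: the nonconstant entries of $\omega_{\mathcal{A}}$ (the $c_1$-proportional pieces $c_1w$, $-\tfrac{cc_1}{b}v$, $\tfrac{cc_1}{a}u$, together with the $c_2,c_3,c_4$ terms) multiply first and second derivatives of $h^1,h^2,h^3$ inside the lower-order part of $P^{ijk}$, and it is their reorganization that produces the ``algebraic'' contributions $-c_1h^3$, $\tfrac{cc_1}{b}h^2$, $-\tfrac{cc_1}{a}h^1$ appearing in~\eqref{eq:omega_3_entries}; keeping these cross-terms straight and confirming that the integrability conditions genuinely close is the only delicate, if entirely mechanical, point. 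The conditions $L^{ijk}=0$ and $S^{jk}_{ir}=0$ then translate into further constraints on $h^1,h^2,h^3$ and are not needed for the form of $\omega_{\mathcal{B}}$.
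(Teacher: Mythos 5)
Your proposal is correct and follows essentially the same route as the paper: the paper also uses only the conditions $P^{ijk}=0$, integrating the components $P^{112},P^{122}$, $P^{113},P^{133}$, $P^{223},P^{233}$ to fix each $\omega_i$ up to a residual function of one variable ($\theta_1(w),\theta_2(v),\theta_3(u)$), and then imposing the mixed components $P^{123}=P^{132}=0$ to force these residues to be linear with the single shared constant $k$ plus $k_1,k_2,k_3$, while $L$ and $S$ play no role here. Your reformulation via the affine structure of $P$ in $\omega_{\mathcal{B}}$ and the four-parameter homogeneous (Killing--Yano) solution space is a harmless repackaging of the same computation.
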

\begin{proof}
We consider the vanishing tensor $P^{ijk}$ in \eqref{eq:cond_P}, in the non-trivial terms $P^{112}$ and $P^{122}$. 
    From 
    {\small
    \begin{equation}
    \begin{split}
    \frac{P^{112}}{a}&=\left(c_2+c_1w\right)\left(\frac{\partial^2h^2}{\partial u\partial v}+\frac{\partial^2 h^1}{\partial u^2}\right)-c_1\frac{\partial h^3}{\partial u}-\left(c_4+\frac{cc_1}{a}u\right)\frac{\partial^2h^1}{\partial u\partial w}-\frac{cc_1}{a}\frac{\partial h^1}{\partial w}\\
    &~~
    +\left(c_3-\frac{cc_1}{b}v\right)\frac{\partial^2h^2}{\partial u\partial w}-\frac{\partial \omega_1}{\partial u},
    \end{split}
    \end{equation}}
    the expression for the element $\omega_{1}(u,v,w)$ in \eqref{eq:omega_3comp}
    {\small
    \begin{equation}
        \omega_1=\left(c_3 - \dfrac{c c_1}{b} v\right) \dfrac{\partial h^2}{\partial w} - \left(\dfrac{c c_1}{a} u + c_4\right) \dfrac{\partial h^1}{\partial w}- c_1 h_3+ (c_2 + c_1 w) \left(\dfrac{\partial h^2}{\partial v}+\dfrac{\partial h^1}{\partial u} \right)+ F_1(v,w),
    \end{equation}}
    where $F_1$ is an arbitrary function in its arguments. Analogously, from
    {\small
    \begin{align}\begin{split}\frac{P^{122}}{b}&=\left(c_2+c_1w\right)\left(\frac{\partial^2h^2}{\partial v^2}+\frac{\partial^2 h^1}{\partial u \partial v}\right)-c_1\frac{\partial h^3}{\partial v}-\left(c_4+\frac{cc_1}{a}u\right)\frac{\partial^2h^1}{\partial v\partial w}-\frac{cc_1}{b}\frac{\partial h^2}{\partial w}\\
    &~~+\left(c_3-\frac{cc_1}{b}v\right)\frac{\partial^2h^2}{\partial v\partial w}-\frac{\partial \omega_1}{\partial v},\end{split}
    \end{align}}
    we obtain a similar expression, 
    {\small
    \begin{equation}
    \label{eq:omega_1_thm}
        \omega_1=\left(c_3 - \dfrac{c c_1}{b} v\right) \dfrac{\partial h^2}{\partial w} - \left(\dfrac{c c_1}{a} u + c_4\right)\dfrac{\partial h^1}{\partial w}- c_1 h_3+ (c_2 + c_1 w) \left(\dfrac{\partial h^2}{\partial v}+\dfrac{\partial h^1}{\partial u}\right)+ G_1(u,w),
    \end{equation}}
where $G_1$ is an arbitrary function in its arguments. Now, by comparison we have that 
$$F_1(v,w)=G_1(u,w) \quad \Longrightarrow \quad F_1=G_1=\theta_1(w).$$
Applying the same procedure to determine $\omega_2(u,v,w)$ and $\omega_3(u,v,w)$ in \eqref{eq:omega_3comp} (using  $P^{113}=0=P^{133}$ and $P^{223}=0=P^{233}$ respectively) we get 
{\small
\begin{align}
\label{eq:omega_2_thm}
     \omega_2&=\left(c_3 - \dfrac{c c_1}{b} v \right) \left(\dfrac{\partial h^1}{\partial u}+\dfrac{\partial h^3}{\partial w}\right)+(c_2 + c_1 w) \dfrac{\partial h^3}{\partial v} +\left(\dfrac{c c_1}{a} u + c_4\right) \dfrac{\partial h^1}{\partial v} +\dfrac{c c_1}{b}h_2+\theta_2(v)\,,\\
    \label{eq:omega_3_thm}
         \omega_3&=\left(c_3 - \dfrac{c c_1}{b} v \right)\dfrac{\partial h^2}{\partial u}-(c_2 + c_1 w) \dfrac{\partial h^3}{\partial u}+ \left(\dfrac{c c_1}{a} u + c_4\right) \left(\dfrac{\partial h^2}{\partial v}+\dfrac{\partial h^3}{\partial w}\right) -\dfrac{c c_1}{a}h_1+\theta_3(u)\,,
\end{align}}
where $\theta_2,\theta_3$ are arbitrary.

Finally, computing $P^{123}=0=P^{132}$ and using \eqref{eq:omega_1_thm}, \eqref{eq:omega_2_thm} and \eqref{eq:omega_3_thm} , we get
\begin{equation}
c\, \theta_1'(w)-a\, \theta_3'(u)=0, \qquad 
    b\, \theta_2'(v)+a\, \theta_3'(u)=0,
\end{equation}
or equivalently that $\theta_1,\theta_2,\theta_3$ are linear in their arguments,  where 
\begin{equation}
    \theta_1(w)=k\, w+k_1, \qquad 
    \theta_2(v)=-\frac{ck}{b}v+k_2, \qquad \theta_3(u)=\frac{ck}{a}u+k_3,
\end{equation}
with $k,k_1,k_2,k_3\in \mathbb{R}$ are arbitrary constants.

\end{proof}

We finally construct the KdV case as exposed in Example \ref{ex:inverted_kdv} by fixing some of the arbitrary constants and for suitable forms of the functions $h^k(u,v,w)$ with $k=1,2,3$.

\begin{example}[KdV equation]\label{ex:kdv_inverted}
Fixing  $a=1, \, b=-1,$  and $c=-1$ and choosing
\begin{subequations}
    \begin{align}
      &  h_1(u,v,w)= \dfrac{1}{4} u + \left(m_1-\dfrac{1}{2}\right)w +m_2 v+ m_3,\\[.5ex]
          &  h_2(u,v,w)= m_2 u-m_4  w + m_5,\\[1ex]
           &  h_3(u,v,w)=-\dfrac{1}{4}w + m_1u + m_4 v + m_6,
    \end{align}
    \end{subequations}
    where $m_1, \dots, m_6$ satisfy the relations 
    \begin{align*}
    &m_3 =\dfrac{1}{2}(k_3- \dfrac{1}{4}c_4  - c_2 m_1 + c_3 m_2 )-\dfrac{\sqrt{2}}{4},\\
           &m_5=-\dfrac{1}{2}\left(k_2 + c_2 m_4 + c_4 m_2\right),\\
           &m_6=-\dfrac{1}{2} \left( \dfrac{c_2}{4} + k_1 + c_4 \left(\dfrac{1}{2} - m_1 \right) - c_3 m_4 \right)+\dfrac{\sqrt{2}}{4}.
    \end{align*}
    With the particular choice $c_1=-2$, $k=0$ in \eqref{eq:omega_3_entries}, we reproduce exactly the second Hamiltonian structure of the KdV \eqref{kdvopsb}.
\end{example}
\begin{remark}
    The classification results presented in this Section provide a systematic framework for identifying both known and novel integrable models of non-homogeneous hydrodynamic type. 
    From a physical perspective, every bi-Hamiltonian pair $(\mathcal{A}, \mathcal{B})$ in the classification is a candidate for generating an integrable hierarchy. For any given pair, the associated bi-Hamiltonian system is obtained by identifying two Hamiltonian densities $F$ and $H$ such that the following relation holds:
\begin{equation}
u^i_t = \mathcal{A}^{ij} \frac{\partial F}{\partial u^j} = \mathcal{B}^{ij} \frac{\partial H}{\partial u^j}.
\end{equation}
 A possible procedure to achieve this aim is to solve the compatibility conditions introduced by Tsarev \cite{tsarev91:_hamil} for homogeneous systems and extend them for non-homogeneous ones, as also studied in \cite{Ver3}. As an example, in the two-component case, this requirement leads to a system of four partial differential equations in the unknowns $F(u,v)$ and $H(u,v)$. Preliminary computations indicate the existence of non-trivial solutions for $F$ and $H,$ ensuring that the classification lists are directly connected to the discovery of new integrable $(1+0)$ systems. Nevertheless, achieving a general characterisation of these integrable system requires substantial additional work and therefore it will be object of a forthcoming paper.

\end{remark}

\section{\texorpdfstring{Geometric description of bi-Hamiltonian $(1+0)$ operators}{bihamnije}}\label{biham_nije}

In this final Section, we introduce the notion of the bi-pencil defined for a pair of $(1+0)$ operators, and we present some results for non-homogeneous operators in connection with Nijenhuis geometry. Both these analyses are in the direction of a purely geometric description of the $(1+0)$ operators and of the pairs of such operators. 

\subsection{Bi-pencils}\label{bipen}
Here, we consider both operators in the pair $(\mathcal{A},\mathcal{B})$ to be non-degenerate. Although we have not found any physically relevant example of a bi-Hamiltonian system with this property, this case 
has a meaningful geometric interpretation in terms of compatible pairs of non-homogeneous operators. 

First, we observe that if $\mathcal{A}$ and $\mathcal{B}$ are non-degenerate and form a compatible pair, then their first-order restrictions $(\mathcal{A}_{(1)},\mathcal{B}_{(1)})$ are a compatible pair of Dubrovin-Novikov operators, whereas the ultralocal terms $(\omega_\mathcal{A},\,\omega_\mathcal{B})$ as zero-order restriction $(\mathcal{A}_{(0)},\mathcal{B}_{(0)})$ must form a compatible pencil of Poisson structures. The definition of Poisson pencil was introduced by Magri and Morosi in \cite{MaMo}, where the authors found a characterization of pairs $(\omega_\mathcal{A},\,\omega_\mathcal{B})$ such that any linear combination $\omega_\mathcal{A}+\lambda\, \omega_\mathcal{B}$ is still a Poisson tensor, i.e.\ it satisfies the Jacobi identity\footnote{The skew-symmetry is automatically ensured by the skew-symmetric property of $\omega_\mathcal{A}$ and $\omega_\mathcal{B}$.}. 
A similar object was introduced by Dubrovin in \cite{dub2}, who extended the notion of pencil for contravariant metrics. In particular, defining 
    \begin{equation}
    \Gamma^{ij}_k = - g^{is}\,\Gamma^{j}_{sk}\,, \qquad
    R^{ij}_{k\ell} = g^{is} R^j_{sk\ell}\,,
\end{equation}
where $\Gamma^{j}_{sk}$ and $R^j_{sk\ell}$ are the Christoffel symbols and the curvature for the metric $g_{ij}$ respectively, 
and introducing 
\begin{equation}\label{eq:g_mu}
g^{ij}_\mu:=g^{ij}_\mathcal{A}+\mu \,g^{ij}_{\mathcal{B}},   
\end{equation}
in \cite{dub2}, the author showed that two Dubrovin-Novikov operators are compatible if and only if $g_\mathcal{A}$ and $g_\mathcal{B}$ are compatible as metrics, i.e.\ if the two following conditions are satisfied
\begin{align}
\Gamma^{ij}_{\mu,k}&=\Gamma^{ij}_{\mathcal{A},k}+\mu\,\Gamma^{ij}_{\mathcal{B},k},\qquad 
    R^{ij}_{\mu, kl}=R^{ij}_{\mathcal{A},kl}+\mu\, R^{ij}_{\mathcal{B},kl}.
\end{align}

In this Section, we give the notion of compatible bi-pencils for non-homogeneous operators of type $(1+0)$. With this aim, we introduce 
\begin{equation}\label{eq:omega_mu}
        \omega_\mu^{ij}:=\omega^{ij}_\mathcal{A}+\mu\,  \omega^{ij}_\mathcal{B},
    \end{equation}
and we recall that a Killing-Yano $(2,0)$-tensor for the metric $g$ is a skew-symmetric tensor $\pi^{ij}$ satisfying
\begin{equation}\label{eq:KY_cond}
\nabla^i\,\pi^{jk}+\nabla^j\,\pi^{ik}=0. 
\end{equation}
\begin{definition}
    Given two contravariant metrics $g_\mathcal{A}$, $g_\mathcal{B}$ and two ultralocal structures $\omega_\mathcal{A}$ and $\omega_\mathcal{B}$, we say that they form the bi-pencil $(g_\mu,\omega_\mu)$ if the metrics form a pencil $g_\mu$ as in~\eqref{eq:g_mu}, and the ultralocal terms form a pencil $\omega_\mu$ as in~\eqref{eq:omega_mu}, which is a Killing-Yano tensor for the metric $g_\mu$. 
\end{definition}
    
    \begin{remark}
        By definition, it follows that not any pair of pencils $g_\mu$, $\omega_\mu$ forms a bi-pencil. The additional geometric requirement for $\omega_\mu$ to be a Killing-Yano tensor is, indeed, nontrivial.
    \end{remark}
    Moreover, if both operators in the pair $(\mathcal{A},\mathcal{B})$ are non-degenerate 
    the following result holds. 
\begin{theorem}\label{thm:bi_pencil}
   Assuming $\det(g_\mathcal{A}+\mu\, g_\mathcal{B})\neq 0$, the compatibility of two non-degenerate Hamiltonian operators $\mathcal{A}$ and $\mathcal{B}$ is equivalent to requiring that their metrics $g_{\mathcal{A}}$, $g_{\mathcal{B}}$ and the ultralocal terms $\omega_{\mathcal{A}}$, $\omega_{\mathcal{B}}$ form the bi-pencil $(g_{\mu},\omega_{\mu})$.
\end{theorem}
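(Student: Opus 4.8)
The plan is to reduce the claimed equivalence to the single-operator Hamiltonianity criterion of Theorem~\ref{thm1} applied along the pencil $\mathcal{A}+\mu\,\mathcal{B}$, exploiting the non-degeneracy hypothesis to invoke its simplified form. Since the conditions for an operator of type $(1+0)$ to be Hamiltonian are tensorial — hence, for the pencil, polynomial in $\mu$ — it suffices to work with generic $\mu$, for which $\det(g_\mathcal{A}+\mu\,g_\mathcal{B})\neq 0$ guarantees that $\mathcal{A}+\mu\,\mathcal{B}$ is itself a non-degenerate operator of type $(1+0)$ with leading metric $g_\mu$ and ultralocal part $\omega_\mu$, after which polynomiality in $\mu$ propagates the identities to all values.

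First I would assemble the building blocks. By the Corollary following~\eqref{eq:A_flat} (a coordinate-free statement), a non-degenerate operator of type $(1+0)$ is Hamiltonian if and only if its first-order part is Hamiltonian, its ultralocal part is a Poisson tensor, and its ultralocal part is a Killing--Yano $(2,0)$-tensor for the leading metric in the sense of~\eqref{eq:KY_cond}. Applying this to $\mathcal{A}+\mu\,\mathcal{B}$ for all admissible $\mu$, the compatibility of $\mathcal{A}$ and $\mathcal{B}$ becomes equivalent to the conjunction of: \textup{(a)} the pair $(\mathcal{A}_{(1)},\mathcal{B}_{(1)})$ of Dubrovin--Novikov operators is compatible; \textup{(b)} $\omega_\mu$ is a Poisson tensor for every $\mu$; and \textup{(c)} $\omega_\mu$ is a Killing--Yano tensor for $g_\mu$ for every $\mu$.

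It then remains to read \textup{(a)}--\textup{(c)} in the bi-pencil dictionary. For \textup{(a)}: as $g_\mathcal{A}$ and $g_\mathcal{B}$ are non-degenerate, Dubrovin's criterion (recalled before~\eqref{eq:omega_mu}, see~\cite{dub2}) identifies the compatibility of the Dubrovin--Novikov parts with $g_\mathcal{A}$ and $g_\mathcal{B}$ being compatible as metrics, that is, with $g_\mu=g_\mathcal{A}+\mu\,g_\mathcal{B}$ being a pencil of (flat) metrics. For \textup{(b)}: since $\omega_\mathcal{A}$ and $\omega_\mathcal{B}$ are Poisson, the Jacobi identity for $\omega_\mu$ is a polynomial in $\mu$ whose $\mu^0$- and $\mu^2$-coefficients vanish identically and whose $\mu$-linear coefficient is $L^{ijk}=[[\omega_\mathcal{A},\omega_\mathcal{B}]]^{ijk}$; hence \textup{(b)} is equivalent to $\omega_\mathcal{A}$ and $\omega_\mathcal{B}$ forming a Poisson pencil $\omega_\mu$ in the sense of Magri--Morosi~\cite{MaMo}. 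Together with \textup{(c)}, conditions \textup{(a)} and \textup{(b)} are exactly the requirements defining $(g_\mathcal{A},g_\mathcal{B},\omega_\mathcal{A},\omega_\mathcal{B})$ forming the bi-pencil $(g_\mu,\omega_\mu)$, which closes the equivalence.

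The delicate point — and the one I expect to be the crux — is checking that condition \textup{(c)}, a priori a one-parameter family of equations, is captured by finitely many tensor identities, consistently with Theorem~\ref{darv}. Here I would use \textup{(a)}: the flat-pencil property makes the contravariant Christoffel symbols of $g_\mu$ add linearly in $\mu$, so that on $(2,0)$-tensors $\nabla^i_\mu=\nabla^i_\mathcal{A}+\mu\,\nabla^i_\mathcal{B}$. Consequently $\nabla^i_\mu\,\omega^{jk}_\mu+\nabla^j_\mu\,\omega^{ik}_\mu$ is quadratic in $\mu$: its $\mu^0$- and $\mu^2$-coefficients are the Killing--Yano identities for $(\omega_\mathcal{A},g_\mathcal{A})$ and $(\omega_\mathcal{B},g_\mathcal{B})$, which already hold because $\mathcal{A}$ and $\mathcal{B}$ are Hamiltonian (condition~\eqref{nondegcom}), while its $\mu$-linear coefficient is precisely $P^{ijk}=\nabla^i_\mathcal{A}\,\omega^{jk}_\mathcal{B}+\nabla^j_\mathcal{A}\,\omega^{ik}_\mathcal{B}+\nabla^i_\mathcal{B}\,\omega^{jk}_\mathcal{A}+\nabla^j_\mathcal{B}\,\omega^{ik}_\mathcal{A}$ of Remark~\ref{rmk:general_coords}; the associated second-order identity yields $S^{jk}_{ir}$ in the same way. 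This is the $\mu$-expansion already performed in the proof of Theorem~\ref{darv} (where $P^{ijk}$ and $S^{jk}_{ir}$ arose as the $\mu$-linear parts of~\eqref{cond1} and~\eqref{eq:phi}), so \textup{(c)} reduces to $P^{ijk}=0=S^{jk}_{ir}$, and the two descriptions of compatibility agree, completing the argument.
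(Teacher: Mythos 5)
Your proposal is correct and follows essentially the same route as the paper: both arguments rest on the Magri--Morosi and Dubrovin/Mokhov pencil characterisations for the order-$0$ and order-$1$ parts, and on the same $\mu$-expansion of $\nabla^i_\mu\,\omega^{jk}_\mu+\nabla^j_\mu\,\omega^{ik}_\mu$ (using linearity of the contravariant Christoffel symbols along the flat pencil) that identifies the $\mu$-linear coefficient with $P^{ijk}$, while the residual condition $S^{jk}_{ir}=0$ is disposed of by the Ferapontov--Mokhov observation -- explicitly cited in the paper, and implicitly packaged in the non-degenerate Corollary you apply along the pencil. The only loose point is your closing claim that condition (c) ``reduces to $P^{ijk}=0=S^{jk}_{ir}$'': the Killing--Yano expansion yields only $P^{ijk}$, and $S^{jk}_{ir}=0$ is not one of its coefficients but a consequence of the Killing--Yano property (as in \cite{FerMok1}), so that remark should be phrased as an implication rather than a reduction, though this does not affect the validity of your main argument.
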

\begin{proof}
    Using the results of Magri and Morosi and Mokhov, the first-order terms $(\mathcal{A}_{(1)},\mathcal{B}_{(1)})$ and the zero-order ones $(\mathcal{A}_{(0)},\mathcal{B}_{(0)})$ are compatible if and only if $(g_{\mathcal{A}},g_{\mathcal{B}})$ and $(\omega_{\mathcal{A}},\omega_{\mathcal{B}})$ form two pencils, i.e.\ $g_{\mu}$ and $\omega_{\mu}$ respectively. 

    We define $\nabla_{\!\mu}$ to be the covariant derivative  with respect to $g_\mu$. We now observe that for any parameter $\mu$ we can write the following: 
    \begin{align*}
        \begin{split}
\nabla^i_\mu\omega^{jk}_\mu+\nabla^j_\mu\omega^{ik}_\mu&
        =g^{is}_\mu \dfrac{\partial \omega^{jk}}{\partial u^s}+g^{js}_\mu\dfrac{\partial \omega^{ik}}{\partial u^s}-\Gamma^{ij}_{\mu,s}\omega^{sk}-\Gamma^{ik}_{\mu,s}\omega^{js}-\Gamma^{ji}_{\mu,s}\omega^{sk}-\Gamma^{jk}_{\mu,s}\omega^{is}\\
        &=\left(g^{is}_\mathcal{A}+\mu\, g^{is}_\mathcal{B}\right)\left(\dfrac{\partial \omega_{\mathcal{A} }^{jk}}{\partial u^s}+\mu\,  \dfrac{\partial \omega_{\mathcal{B} }^{jk}}{\partial u^s}\right)+\left(g^{js}_\mathcal{A}+\mu\, g^{js}_\mathcal{B}\right)\left(\dfrac{\partial \omega_{\mathcal{A} }^{ik}}{\partial u^s}+\mu\,  \dfrac{\partial \omega_{\mathcal{B} }^{ik}}{\partial u^s}\right)\\
        &\hphantom{ciao}-\left(\Gamma^{ji}_{1,s}+\mu\, \Gamma^{ji}_{2,s}\right)\left(\omega^{sk}_{\mathcal{A}}+\mu\, \omega^{sk}_{\mathcal{B}}\right)-\left(\Gamma^{jk}_{1,s}+\mu\, \Gamma^{jk}_{2,s}\right)\left(\omega^{is}_{\mathcal{A}}+\mu\, \omega^{is}_{\mathcal{B}}\right)\\
        &\hphantom{ciao}-\left(\Gamma^{ij}_{1,s}+\mu\, \Gamma^{ij}_{2,s}\right)\left(\omega^{sk}_{\mathcal{A}}+\mu\, \omega^{sk}_{\mathcal{B}}\right)-\left(\Gamma^{ik}_{1,s}+\mu\, \Gamma^{ik}_{2,s}\right)\left(\omega^{js}_{\mathcal{A}}+\mu\, \omega^{js}_{\mathcal{B}}\right)\\
        &=\left(\nabla^i_\mathcal{A}\, \omega^{jk}_\mathcal{A}+\nabla_\mathcal{A}^{j}\, \omega^{ik}_\mathcal{A}\right)+\mu\, P^{ijk}+ \mu^2\, \left(\nabla^i_\mathcal{B}\, \omega^{jk}_\mathcal{B}+\nabla_\mathcal{B}^{j}\, \omega^{ik}_\mathcal{B}\right),\end{split}
    \end{align*}
so that for Hamiltonian operators $\mathcal{A}$ and $\mathcal{B}$, the vanishing of tensor $P^{ijk}$ in~\eqref{eq:cond_P} is equivalent to the Killing-Yano condition~\eqref{eq:KY_cond} for $\omega_\mu$ in the bi-pencil. 

Finally, to satisfy all the conditions of Theorem \ref{darv}, it remains to verify the vanishing of tensor $S^{ijk}_{r}$ in~\eqref{eq:cond_S}. However, as remarked by Mokhov and Ferapontov in \cite{FerMok1}, this condition is a formal corollary of the Killing-Yano property for $g_{\mu}$. 
\end{proof}

\begin{remark}[On strong bi-pencils]\label{rmk:strong_bi_pencil}
    One can choose to use a stronger definition of bi-pencil in which every linear combination of compatible Poisson tensors $\omega_\lambda$ is a Killing-Yano tensor for every metric $g_\mu$ in the pencil, i.e. 
    \begin{equation}\label{eq:KY_our}
    \nabla_{\mu}^i\,\omega^{jk}_\lambda+\nabla_\mu^j\,\omega^{ik}_\lambda=0,
    \end{equation} we call such structure strong bi-pencil. 
   In this case, the Killing-Yano condition~\eqref{eq:KY_our} reads as
    \begin{align*}
        \begin{split}\nabla^i_\mu\,\omega^{jk}_\lambda+\nabla^j_\mu\,\omega^{ik}_\lambda=\left(\nabla^i_\mathcal{A}\, \omega^{jk}_\mathcal{A}+\nabla_\mathcal{A}^{j}\, \omega^{ik}_\mathcal{A}\right)+\mu\, P^{ijk}_1+\lambda\, P^{ijk}_2+ \mu\lambda\, \left(\nabla^i_\mathcal{B}\, \omega^{jk}_\mathcal{B}+\nabla_\mathcal{B}^{j}\, \omega^{ik}_\mathcal{B}\right),\end{split}
    \end{align*}
    where 
    \vspace*{-2ex}
    \begin{subequations}
    \begin{align}
        P^{ijk}_1&=\nabla^i_\mathcal{B}\, \omega^{jk}_\mathcal{A}+\nabla_\mathcal{B}^{j}\, \omega^{ik}_\mathcal{A}\,,\\
        P^{ijk}_2&=\nabla^i_\mathcal{A}\, \omega^{jk}_\mathcal{B}+\nabla_\mathcal{A}^{j}\, \omega^{ik}_\mathcal{B}\,,
    \end{align}
    \end{subequations}
    hence any strong bi-pencil defines a bi-Hamiltonian pair of non-homogeneous operators. The converse of this is in general not true. We indicate such a bi-pencil with $(g_\mu,\omega_\lambda)$ with different indices, to distinguish it from the previous one. Both cases are sketched in Figure~\ref{fig:bipencil}.
\end{remark}

\begin{figure}[t]
\centering 
\includegraphics[width=.4\textwidth]{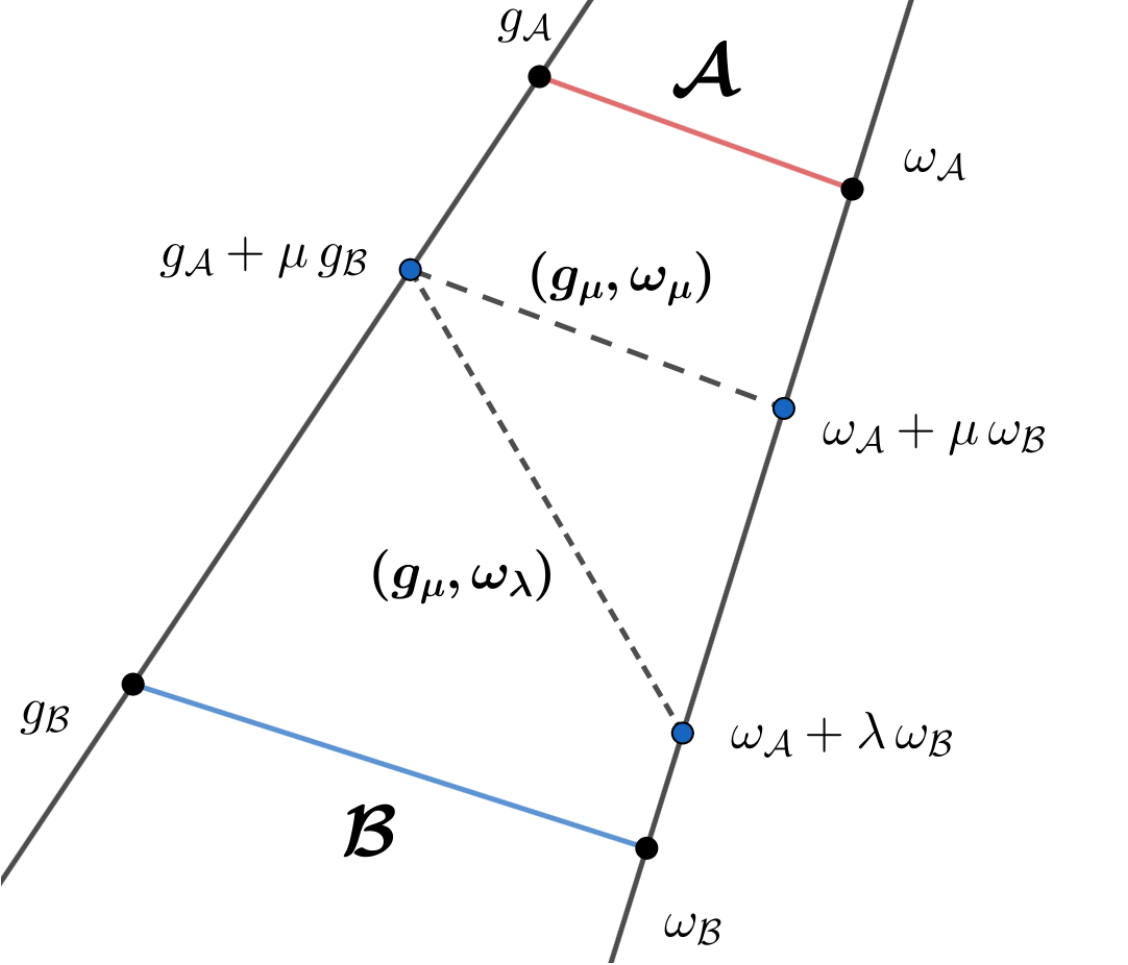}
\caption{Sketch of the bi-pencil $(g_{\mu},\omega_{\mu})$ and the strong bi-pencil $(g_{\mu},\omega_{\lambda})$.}
\label{fig:bipencil}
\end{figure}

\begin{example}
  We consider the case of strong bi-pencils in $n=2$ components. Adding the constraint $P_1^{ijk}=P_2^{ijk}=0$ to the conditions of Theorem~\ref{2x2thm}, the only compatible pair $(\mathcal{A},\mathcal{B})$ is given by constant operators. Explicitly, in $n=2$ components, strongly bi-pencils can be mapped into the following two pairs:
  \begin{equation}
    \mathcal{A}=  \begin{pmatrix}
          a&0\\0&b
      \end{pmatrix}\partial_x+\begin{pmatrix}
          0&c\\-c&0
      \end{pmatrix},\qquad 
      \mathcal{B}=\begin{pmatrix}
          k_1&k_2\\k_2&k_3
      \end{pmatrix}\partial_x+\begin{pmatrix}
          0&k_4\\-k_4&0
      \end{pmatrix}.
  \end{equation}
\end{example}
\begin{remark}
    For the pair $(\mathcal{A},\mathcal{B})$ in $n=3$ components, the condition $P^{ijk}_2=0$ implies that $\omega_{\mathcal{B}}(u,v,w)$ is linear, with 
    \begin{equation}
        \omega^{12}_{\mathcal{B}}=k\, w+k_1, \qquad \omega^{13}_{\mathcal{B}}=-\frac{ck}{b}v+k_2, \qquad \omega^{23}_{\mathcal{B}}=\frac{ck}{a}u+k_3,
    \end{equation}
    with $c, k, k_1, k_2, k_3$ constants, whereas $P^{ijk}_1=0$ implies more sophisticated conditions which are hard to integrate. In particular, the functions $h^1(u,v,w),h^2(u,v,w),h^3(u,v,w)$ appearing in $\mathcal{B}_{(1)}$ in \eqref{eq:metr_3comp} in this case are independent of $\omega_\mathcal{B}$, and this leads to the old problem of classifying first-order compatible pairs. 
    However, similarly to what happens for $n=2$ components, for $n=3$ components a class of strong bi-pencils is represented by the pair of constant operators
      \begin{equation*}
    \mathcal{A}=  \begin{pmatrix}
          a&0 &0\\0&b &0\\0&0 &c
      \end{pmatrix}\partial_x+\begin{pmatrix}
       0&c_1 &c_2\\-c_1&0 &c_3\\-c_2&-c_3 &0
      \end{pmatrix},\qquad 
      \mathcal{B}=\begin{pmatrix}
        k_1&k_2 &k_3\\k_2&k_4 &k_5\\k_3&k_5 &k_6
      \end{pmatrix}\partial_x+\begin{pmatrix}
         0&k_7 &k_8\\  -k_7&0 &k_9\\ -k_8&-k_9 &0
      \end{pmatrix},
  \end{equation*}  
  where $k_i$ ($i=1, \dots, 8$) and $c_j$ ($j=1,2,3$) are arbitrary constants. 
  \end{remark}

\subsection{Nijenhuis geometry} 
In this last Section, we present some preliminary results on non-homogeneous operators within the context of the Nijenhuis geometry. The connection between Nijenhuis tensors (i.e.\ $(1,1)$-tensor with zero Nijenhuis torsion) and integrable systems has been deeply investigated (see the recent papers ~\cite{Bolsi0,Bolsi1,Bolsi2,Bolsi3,Bolsi4,Bolsi5} and \cite{BolsiReview} for an introductory treatment of the topic). 

In \cite{MaMo}, Magri and Morosi proved that two Poisson bivectors $\omega_\mathcal{A}$ and $\omega_\mathcal{B}$ are compatible (i.e.\ they form a pencil) if and only if in the non-degenerate case the $(1,1)$-tensor
\begin{equation}
    L^i_j=\omega_\mathcal{A}^{is}\,(\omega_{\mathcal{B}})_{sj},
\end{equation}
where $(\omega_{\mathcal{B}})_{is}=(\omega_{\mathcal{B}}^{is})^{-1}$ has zero Nijenhuis torsion $\mathcal{N}_L=0$:
\begin{equation}\label{nij}
\mathcal{N}_L(v,w)= L^2[v, w] + [Lv, Lw] - L[Lv, w] - L[v, Lw]=0,
\end{equation}
for any vector fields $v,w$. In local coordinates, the expression \eqref{nij} can be written as 

    \begin{equation}
N^{k}_{ij}=L^s_i\frac{\partial L^{k}_{j}}{\partial u^s}-L^s_{j}\frac{\partial L^{k}_{i}}{\partial u^s}+L^k_s\frac{\partial L^s_{i}}{\partial u^j}-L^k_s\frac{\partial L^{s}_{j}}{\partial u^i}.
\end{equation}

For first-order operators a similar criterion was  introduced by Ferapontov in \cite{Fer11} and {further investigated} by Mokhov in \cite{Mok14}. We collect the main results in the following theorem:
\begin{theorem}[\cite{Mok14}]
    Let $g_\mathcal{A}$ and $g_\mathcal{B}$ be two non-degenerate metrics which are non-singular, i.e.\ for which the solutions $\lambda$ in 
    \begin{equation}
        \det(g_\mathcal{A}+\lambda \, g_\mathcal{B})=0
    \end{equation}
    are distinct. The metrics $g_\mathcal{A}$ and $g_\mathcal{B}$ form a pencil of contravariant metric if and only if the tensor 
    \begin{equation}\label{eq:Nij_tens}
    A^i_j=g_\mathcal{A}^{is}\,(g_{\mathcal{B}})_{sj}    
    \end{equation}
    is a Nijenhuis tensor.  
\end{theorem}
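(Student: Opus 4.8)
The plan is to organise the argument around the local normal form of a Nijenhuis operator with simple spectrum, combined with the known reformulation of first-order compatibility of contravariant metrics in terms of $\mathcal{N}_A$; the genuinely new input, beyond that reformulation, is that the non-singularity hypothesis upgrades the first-order (Christoffel) part of the pencil condition to the full one, curvature included.

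For the implication ``pencil $\Rightarrow$ Nijenhuis'' I would first record two purely algebraic facts about the affinor $A$ of \eqref{eq:Nij_tens}: lowering its upper index with $g_\mathcal{A}$ returns the symmetric tensor $(g_\mathcal{B})_{ij}$, so $A$ is $g_\mathcal{A}$-self-adjoint and $g_\mathcal{B}=g_\mathcal{A}A$; hence, since $\nabla_\mathcal{A}g_\mathcal{A}=0$, one has $\nabla_\mathcal{A}A=g_\mathcal{A}^{-1}\nabla_\mathcal{A}g_\mathcal{B}$. A standard computation (cf.\ \cite{Mok11,Mok14}) rewrites the affine combination of the contravariant Christoffel symbols $\Gamma^{ij}_{\mu,k}=-g^{is}_\mu\Gamma^{j}_{\mu,sk}$ as a symmetry constraint on the $(2,1)$-tensor $\nabla_\mathcal{A}A$; inserting this constraint into the identity, immediate for any torsion-free connection $\nabla$,
$$\mathcal{N}_A(X,Y)=(\nabla_{AX}A)Y-(\nabla_{AY}A)X-A(\nabla_X A)Y+A(\nabla_Y A)X,$$
taken with $\nabla=\nabla_\mathcal{A}$, makes the right-hand side collapse, so that $\mathcal{N}_A=0$. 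Only the first-order half of the pencil condition is used here, which is precisely the equivalence ``almost compatible $\Leftrightarrow$ $\mathcal{N}_A=0$''.

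For the converse I would exploit non-singularity. Since $A$ is $g_\mathcal{A}$-self-adjoint and, by assumption, has $n$ pairwise distinct eigenvalues $\lambda_1,\dots,\lambda_n$, it is pointwise diagonalisable; when moreover $\mathcal{N}_A=0$, the classical structure theorem for Nijenhuis operators of simple spectrum (see e.g.\ \cite{BolsiReview}) yields, at a generic point, local coordinates $(u^1,\dots,u^n)$ in which $A=\operatorname{diag}(u^1,\dots,u^n)$. The coordinate fields $\partial_i$ are then eigenvectors of the $g_\mathcal{A}$-self-adjoint $A$ for distinct eigenvalues, so $g_\mathcal{A}$ and $g_\mathcal{B}=g_\mathcal{A}A$ are simultaneously diagonal, with $g^{ii}_\mathcal{A}=u^i\,g^{ii}_\mathcal{B}$ (no summation) and hence $g^{ii}_\mu=(u^i+\mu)\,g^{ii}_\mathcal{B}$, the off-diagonal entries vanishing. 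Feeding these diagonal forms into $\Gamma^{ij}_k=-g^{is}\Gamma^{j}_{sk}$ and into the contravariant curvature, I would verify term by term that both combine linearly in $\mu$, the parameter entering only through the scalar factors $u^i+\mu$, which is exactly the pencil property.

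The main obstacle is the curvature part of this last step: the linearity of the Christoffel symbols for the diagonal pencil above is a one-line check (indeed it already follows from almost compatibility), whereas establishing $R^{ij}_{\mu,k\ell}=R^{ij}_{\mathcal{A},k\ell}+\mu\,R^{ij}_{\mathcal{B},k\ell}$ requires the full curvature formula of a diagonal metric together with the cancellation of the \emph{a priori} quadratic-in-$\mu$ terms, and it is here that non-singularity is indispensable, the statement being false when two roots $\lambda_i$ coincide; for this point I would lean on Mokhov's observation \cite{Mok14} that for non-singular almost compatible metrics the curvature condition is automatic. A secondary, bookkeeping-type difficulty is pinning down the precise symmetry property of $\nabla_\mathcal{A}A$ equivalent to almost compatibility, so that the collapse of the Nijenhuis identity above is transparent rather than a brute-force index manipulation.
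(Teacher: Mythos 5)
First, a point of comparison: the paper does not prove this statement at all --- it is imported verbatim from Mokhov \cite{Mok14} (the text around it only comments that compatibility always implies vanishing torsion, the converse requiring non-singularity). So there is no in-paper proof to measure your argument against, and your proposal has to stand on its own as a reconstruction of Mokhov's theorem.

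On its own terms it has a genuine gap, and it sits exactly where you locate the difficulty. In the converse direction the whole non-trivial content of the statement is that, for a non-singular pair, the vanishing of $\mathcal{N}_A$ (equivalently, almost compatibility, i.e.\ linearity of the $\Gamma^{ij}_{\mu,k}$) forces the curvature condition $R^{ij}_{\mu,k\ell}=R^{ij}_{\mathcal{A},k\ell}+\mu\,R^{ij}_{\mathcal{B},k\ell}$; your text says you would ``lean on Mokhov's observation \cite{Mok14} that for non-singular almost compatible metrics the curvature condition is automatic''. That observation \emph{is} the theorem to be proved (modulo the almost-compatible $\Leftrightarrow$ torsion-free equivalence, which you also cite to \cite{Mok11,Mok14} rather than derive, leaving the ``symmetry constraint on $\nabla_{\mathcal{A}}A$'' unspecified). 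As written, the argument therefore reduces the statement to itself; to close it you would have to actually carry out the curvature computation for the simultaneously diagonalised pair $g^{ii}_\mu=(\lambda^i+\mu)\,g^{ii}_{\mathcal{B}}$ and exhibit the cancellation of the terms quadratic in $\mu$, which is Mokhov's own (index-heavy) route. A secondary inaccuracy: the splitting theorem for a Nijenhuis operator with simple spectrum gives coordinates in which $A=\operatorname{diag}\bigl(\lambda^1(u^1),\dots,\lambda^n(u^n)\bigr)$ with each eigenvalue a function of its own variable, \emph{possibly constant}; it does not give $A=\operatorname{diag}(u^1,\dots,u^n)$ (take $A$ constant with distinct eigenvalues), so the relations you use should read $g^{ii}_{\mathcal{A}}=\lambda^i(u^i)\,g^{ii}_{\mathcal{B}}$ rather than $g^{ii}_{\mathcal{A}}=u^i\,g^{ii}_{\mathcal{B}}$. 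In fact all you need at this step is the simultaneous diagonalisation of the two metrics, which follows from $g_{\mathcal{A}}$-self-adjointness of $A$, distinctness of the eigenvalues and vanishing torsion, so this slip is repairable --- unlike the circular appeal to \cite{Mok14}, which is the real missing piece. The forward direction, via the identity for $\mathcal{N}_A$ in terms of an arbitrary torsion-free connection applied to $\nabla_{\mathcal{A}}$, is sound in outline and indeed uses only the first-order half of the pencil condition.
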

More generally, it holds true that two compatible metrics have vanishing Nijenhuis torsion (as pointed out by Mokhov, the converse is not in general true without the additional requirement to be non-singular).

As also expected, but in a strongly non-trivial way we can consider the following:
\begin{example}[Compatible metrics in the $n=2$ classification]

Let us consider the metrics from the restriction $\mathcal{A}_{(1)}$ of $\mathcal{A}$ in \eqref{eq:A_nondeg_2comp} and $\mathcal{B}^{\,2}_{(1)}$ of $\mathcal{B}^{\,2}$ in \eqref{eq:operator_B2}
\begin{equation}\label{eq:gA_gB2}
    g^{ij}_\mathcal{A}=\begin{pmatrix}
        a&0\\0&b
    \end{pmatrix}, \qquad g_{\mathcal{B}^{2}}^{ij}=\begin{pmatrix}
                2a\,\dfrac{\partial h^1}{\partial u}&b\,\dfrac{\partial h^1}{\partial v}+a\,\dfrac{\partial h^2}{\partial u}\\[2ex]
                b\,\dfrac{\partial h^1}{\partial v}+a\,\dfrac{\partial h^2}{\partial u}&2b\,\dfrac{\partial h^2}{\partial v}
            \end{pmatrix},
\end{equation}
forming a pencil of metrics. One can compute the Nijenhuis torsion for the (1,1) defined by $g_{\mathcal{A}}^{-1}$ and $g_{\mathcal{B}^2}$ as they are. In this case the torsion does not vanish identically, indeed only specifying the functions $h^1(u,v)$ and $h^2(u,v)$ as in Theorem \ref{2x2thm} one obtains the vanishing torsion. This is due to the fact that the Mokhov form~\eqref{opmok} for a compatible pair is only necessary and not sufficient to find bi-Hamiltonian first-order pairs.   
    
\end{example}

\begin{remark}[On singular metric pencils]
    We stress that the compatible metrics found in Theorem \ref{2x2thm} do not form, in general, non-singular pairs. In particular, setting to zero the determinant of $(g_{\mathcal{A}}+\lambda g_{\mathcal{B}^2})$ with~\eqref{eq:gA_gB2}, we obtain a polynomial of degree two in $\lambda$:
\begin{equation}
   \bigg(\!-\!\left(b \dfrac{\partial h^{1}}{\partial v}+ a \dfrac{\partial h^{2}}{\partial u} \right)^{\!\!2}+ 4 a b \, \dfrac{\partial h^{2}}{\partial v} \, \dfrac{\partial h^{1}}{\partial u} \bigg)\lambda^{2} +2 a  b \,\left(\dfrac{\partial h^{2}}{\partial v} + \dfrac{\partial h^{1}}{\partial u} \right) \lambda +a \, b=0,
\end{equation}
whose discriminant is
 \begin{equation}\label{eq:discriminant}
     \dfrac{\Delta}{4}=  a  b \,\left(b \dfrac{\partial h^{1}}{\partial v}+ a \dfrac{\partial h^{2}}{\partial u} \right)^{\!\!2} + \left(\dfrac{\partial h^{2}}{\partial v}-\dfrac{\partial h^{1}}{\partial u}\right)^{\!\!2}.
 \end{equation}
For pairs of type $(\mathcal{A},\mathcal{B}^{2})$ in Theorem \ref{2x2thm} the expression in~\eqref{eq:discriminant} vanishes identically. We then obtain compatible metrics which are singular. Whereas, for pairs of type $(\mathcal{A},\mathcal{B}^{1})$ in Theorem~\ref{2x2thm} (with the two metrics being both constant), we obtain 
\begin{equation}
     \dfrac{\Delta}{4}=a b \, (b k_2+a k_4)^2+ (k_1-k_5)^2,
\end{equation}
so that the results on $\lambda$ strictly depend on the choice of the arbitrary constants.

\end{remark}

It is generally believed that Nijenhuis geometry is strictly related to compatible pairs and integrable systems. However,  there is a lack of knowledge in this direction regarding non-homogeneous Hamiltonian structure. We plan to investigate this topic in a future paper. However, a preliminary result has been obtained for a single non-degenerate operator $\mathcal{A}$. We recall that in Darboux coordinates, $g_\mathcal{A}$ reduces to be constant and, as already mentioned in Remark \ref{darboux}, the operator takes the form~\eqref{eq:A_darboux}.

\begin{remark}[On left-symmetric Lie algebras]
   It is a remarkable fact that in \cite{koni1} the author proved that given a real algebra $\mathfrak{a}$ of dimension $n$ the $(1,1)$-tensor defined as
   \begin{equation}
       T^i_j=a^{i}_{js}x^s,
   \end{equation}
   (i.e.\ the right-adjoint operator of $\mathfrak{a}$) 
   has zero Nijenhuis torsion if and only if the algebra is  left-symmetric. 
\end{remark}

In our context, the following result holds true.

\begin{theorem}\label{nijnonhom}
    The affinor $L^i_j=g_{js}\left(c^{si}_\ell u^\ell+f^{si}\right)$ has vanishing Nijenhuis torsion \eqref{nij} if and only if
   \begin{subequations}\label{condf} 
   \begin{align}
        &c^{is}_jc^{k\ell}_s=0,\label{cond1x}\\[1ex]
        &f^{is}c_s^{jk}=0. \label{cond2x}
    \end{align}
    \end{subequations}
\end{theorem}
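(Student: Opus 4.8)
The plan is to establish the equivalence by a direct computation of the Nijenhuis torsion \eqref{nij}, using crucially that in the Darboux coordinates of~\eqref{eq:A_darboux} the metric $g_{ij}$ is constant. Since $\partial g_{ij}/\partial u^\ell=0$, the affinor $L^i_j=g_{js}\bigl(c^{si}_\ell u^\ell+f^{si}\bigr)$ is affine in the $u$'s, with constant derivatives $\partial L^k_j/\partial u^\ell=g_{jp}\,c^{pk}_\ell$. First I would substitute $L$ and these derivatives into the coordinate form of the torsion, $N^k_{ij}=L^s_i\,\partial L^k_j/\partial u^s-L^s_j\,\partial L^k_i/\partial u^s+L^k_s\,\partial L^s_i/\partial u^j-L^k_s\,\partial L^s_j/\partial u^i$; the result is a polynomial of degree at most one in the field variables. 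Because $N^k_{ij}$ has to vanish as a function on the whole coordinate chart, its vanishing is equivalent to the separate vanishing of the coefficient of each $u^\ell$ and of the $u$-independent term.

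The $u$-linear coefficient is bilinear in the constants $c$; writing $b^k_{j\ell}:=g_{jp}c^{pk}_\ell$ it reads $b^s_{i\ell}b^k_{js}-b^s_{j\ell}b^k_{is}+\bigl(b^s_{ij}-b^s_{ji}\bigr)b^k_{s\ell}$. Lowering the free index $k$ with $g$ and using the skew-symmetry $c^{ij}_k=-c^{ji}_k$, the symmetry of $g_{ij}$, and the Jacobi identity \eqref{eq:c_constraints} (available in the standing setting, where $c^{ij}_k$ are Lie-algebra structure constants and $f^{ij}$ a $2$-cocycle), this antisymmetrised bilinear form can be repeatedly recombined until it becomes, modulo the invertible factors $g_{ij}$, a multiple of $c^{is}_j c^{k\ell}_s$; hence $N^k_{ij}\equiv 0$ forces \eqref{cond1x}. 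The $u$-independent term is bilinear in $f$ and $c$; the same type of manipulation, now invoking the cocycle identity \eqref{eq:c_f_constraints}, reduces it to $f^{is}c^{jk}_s$, forcing \eqref{cond2x}. The converse implication is immediate: if \eqref{cond1x} and \eqref{cond2x} hold, every monomial occurring in $N^k_{ij}$ is proportional to one of the vanishing contractions $c^{is}_j c^{k\ell}_s$ or $f^{is}c^{jk}_s$, so $L$ is Nijenhuis-torsionless.

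The main obstacle I foresee is precisely the middle step: proving that the vanishing of the $u$-linear coefficient is \emph{equivalent} to, and not merely implied by, \eqref{cond1x}. A plain expansion only yields an $(i,j)$-antisymmetrisation of products of two $c$'s contracted through three copies of the metric, and one must argue that invertibility of $g$ permits stripping off these metric factors and that the Jacobi identity, together with the skew-symmetry of $c$, lets one fuse the four terms into the single relation \eqref{cond1x}; the analogous care is needed for the $f$-linear piece and \eqref{cond2x}. It is worth recording, as both a sanity check and the conceptual reason, that \eqref{cond1x} implies \eqref{eq:c_constraints} and \eqref{cond2x} implies \eqref{eq:c_f_constraints}, so \eqref{condf} is strictly stronger than Hamiltonianity of $\mathcal{A}$; this matches the characterisation recalled in the remark preceding the statement, namely that the right-adjoint affinor of a real algebra is torsionless exactly when the algebra is left-symmetric.
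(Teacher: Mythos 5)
There is a genuine gap, and it lies exactly where you do not look: your list of standing identities (constancy and symmetry of $g$, skew-symmetry of $c$, Jacobi \eqref{eq:c_constraints}, cocycle \eqref{eq:c_f_constraints}) omits the one ingredient the computation cannot do without, namely the compatibility of the flat metric with the linear Poisson structure, $g^{is}c^{jk}_s+g^{js}c^{ik}_s=0$ (equivalently $g_{al}c^{lp}_b+g_{bl}c^{lp}_a=0$). This holds in the theorem's setting because $\mathcal{A}$ in Darboux form \eqref{eq:A_darboux} is Hamiltonian --- it is the flat-coordinate form of the Killing-type condition \eqref{nondegcom} --- and it is the very first identity invoked in the paper's proof (Appendix~\ref{nijapp}), where it is used to move metric factors across structure constants. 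The reason it is indispensable is that the terms of \eqref{nij} coming from $L^k_s\,\partial_j L^s_i-L^k_s\,\partial_i L^s_j$ produce monomials such as $g_{sp}\,c^{pk}_a\,c^{ls}_j$ and $g_{sp}\,f^{pk}\,c^{ls}_j$, in which two \emph{upper} indices of $c$ (or of $f$ and $c$) are contracted through the metric; these are not of the form \eqref{cond1x} or \eqref{cond2x}, and only the invariance identity $g_{sp}c^{pk}_a=-g_{ap}c^{pk}_s$ converts them into admissible contractions. Consequently your claim that the converse is ``immediate'' because every monomial is proportional to a vanishing contraction is false at the level of generality you work in: take the Heisenberg constants $c^{12}_3=-c^{21}_3=1$ (so \eqref{cond1x} holds), $f=0$ and $g=\delta$ (which is \emph{not} invariant); then $L^1_2=-u^3$, $L^2_1=u^3$, all other entries vanish, and a one-line computation gives $N^2_{23}=-u^3\neq 0$. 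The same example shows that your forward step --- fusing the four degree-one terms into a multiple of $c^{is}_jc^{k\ell}_s$ using only skew-symmetry, $g$-symmetry and Jacobi --- cannot succeed, because the equivalence is simply not true without the metric--algebra compatibility.

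Apart from this omission, your strategy is the paper's: expand $N^k_{ij}$ for the affine $L$, split into the $u$-linear and $u$-independent parts, and manipulate indices until each part becomes, modulo invertible factors of $g$, the single contraction \eqref{cond1x} respectively \eqref{cond2x}; the paper does precisely this, alternating the invariance identity with Jacobi and skew-symmetry for the linear part, and with the cocycle identity for the constant part, ending with $g_{jl}g_{ks}c^{sl}_pc^{pi}_a=0$ and $(g_{is}g_{jb}-g_{ib}g_{js})c^{ks}_l f^{lb}=0$, from which \eqref{condf} follows by invertibility of $g$. A minor additional inaccuracy: \eqref{condf} is not ``strictly stronger than Hamiltonianity of $\mathcal{A}$''; it implies \eqref{eq:c_constraints} and \eqref{eq:c_f_constraints}, but not the metric compatibility, which is an independent part of the Hamiltonian property and, as above, the part your argument actually needs.
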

\begin{proof}
We refer to Appendix \ref{nijapp}.    
\end{proof}
 The local expressions \eqref{condf} have a purely geometric interpretation in the contest of Lie algebra structures:
\begin{corollary}
    The Nijenhuis torsion  of the operator $L^i_j$ vanishes if and only if the Lie algebra structure is 2-step nilpotent and the extension given by the the 2-cocycle is 2-step nilpotent too.
\end{corollary}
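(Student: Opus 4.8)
The plan is to reduce the Nijenhuis torsion $N^{k}_{ij}$ of $L$ to a single explicit tensor and read off the two conditions from there. I would work throughout in the Darboux coordinates of Remark \ref{darboux}, in which $g^{ij}_{\mathcal A}$ is constant and $\omega^{ij}_{\mathcal A}=c^{ij}_k u^{k}+f^{ij}$ as in \eqref{eq:A_darboux}, so that $L^{i}_{j}=g_{js}\omega^{si}_{\mathcal A}$ and $\partial L^{k}_{j}/\partial u^{m}=g_{js}c^{sk}_{m}$. Three structural facts are available because $\mathcal A$ is a non-degenerate Hamiltonian operator of type $(1+0)$: the Jacobi identity \eqref{cond2om} for $\omega_{\mathcal A}$; the cocycle relations \eqref{eq:c_constraints}--\eqref{eq:c_f_constraints} for $c,f$; and, crucially, the condition \eqref{nondegcom}, which in flat coordinates reads $g^{is}\partial_{s}\omega^{jk}_{\mathcal A}+g^{js}\partial_{s}\omega^{ik}_{\mathcal A}=0$, i.e.\ $g^{is}c^{jk}_{s}+g^{js}c^{ik}_{s}=0$. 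The first observation is that this last identity, together with the skew-symmetry $c^{ij}_{k}=-c^{ji}_{k}$, forces the completely lowered structure tensor
\begin{equation*}
T_{ijk}:=g_{ia}g_{jb}\,c^{ab}_{k}
\end{equation*}
to be \emph{totally antisymmetric}: indeed $\hat c^{ijk}:=g^{ks}c^{ij}_{s}$ is skew in $(i,j)$ by skew-symmetry of $\omega_{\mathcal A}$ and skew in $(i,k)$ by \eqref{nondegcom}, hence skew also in $(j,k)$, and $T_{ijk}$ is its complete lowering. (Equivalently, $(\mathfrak g,g)$ is a metric, i.e.\ quadratic, Lie algebra.)

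Next I would substitute into \eqref{nij}. The pair of terms $L^{s}_{i}\partial_{s}L^{k}_{j}-L^{s}_{j}\partial_{s}L^{k}_{i}$ is collapsed using the Jacobi identity in the form $\omega^{as}c^{bk}_{s}-\omega^{bs}c^{ak}_{s}=-\omega^{ks}c^{ab}_{s}$, which yields $-\omega^{ks}_{\mathcal A}T_{ijs}$. The remaining pair $L^{k}_{s}\bigl(\partial_{j}L^{s}_{i}-\partial_{i}L^{s}_{j}\bigr)$ is rewritten by raising and lowering indices against the constant metric $g$ and using the total antisymmetry of $T$; it becomes $2\,\omega^{ks}_{\mathcal A}T_{ijs}$. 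Adding the two contributions gives the compact identity
\begin{equation*}
N^{k}_{ij}=\omega^{ks}_{\mathcal A}\,T_{ijs}=\bigl(c^{ks}_{\ell}u^{\ell}+f^{ks}\bigr)T_{ijs},
\end{equation*}
which is manifestly skew in $i,j$, as it must be.

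Since this expression is affine in the $u^{\ell}$ and $g$ is non-degenerate, $N^{k}_{ij}\equiv 0$ is equivalent to the two systems $c^{ks}_{\ell}\,c^{ab}_{s}=0$ and $f^{ks}\,c^{ab}_{s}=0$ holding for all free indices, and these are exactly \eqref{cond1x} and \eqref{cond2x} up to relabelling. For the Corollary I would then interpret $c^{ij}_{k}$ as the structure constants of the Lie algebra $\mathfrak g$, $[\xi^{i},\xi^{j}]=c^{ij}_{k}\xi^{k}$ on the basis dual to the linear coordinates: \eqref{cond1x} says $[\mathfrak g,[\mathfrak g,\mathfrak g]]=0$, i.e.\ $\mathfrak g$ is $2$-step nilpotent, and, granting this, \eqref{cond2x} says exactly that the central extension of $\mathfrak g$ defined by the $2$-cocycle $f^{ij}$ is $2$-step nilpotent as well. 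This also matches the left-symmetric-algebra viewpoint of the preceding remark: in the case $f=0$, $L$ is the right-adjoint operator of the product $e^{\ell}\star e^{j}=g_{js}c^{si}_{\ell}\,e^{i}$, whose left-symmetry by \cite{koni1} is then equivalent to $2$-step nilpotency of $\mathfrak g$.

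The step I expect to be the main obstacle is establishing the compact formula for $N^{k}_{ij}$: both simplifications are needed and each rests on a different structural input---the first on the Poisson (Jacobi) property of $\omega_{\mathcal A}$, the second on the Killing--Yano/invariance property \eqref{nondegcom} through the total antisymmetry of $T$. Overlooking that \eqref{nondegcom} makes $T$ totally antisymmetric leaves the second pair of terms uncollapsed and the computation unwieldy; the concrete difficulty is keeping track of which indices are contracted against the metric and which against $\omega_{\mathcal A}$, and consistently raising/lowering only because $g$ is constant.
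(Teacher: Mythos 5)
Your proposal is correct, and it reaches the conclusion by a genuinely different route than the paper. The paper's argument (Appendix C for Theorem \ref{nijnonhom}) is a term-by-term index computation: it splits $\mathcal{N}_L$ into the coefficients of degree $1$ and degree $0$ in $u$, and repeatedly applies the lowered compatibility relation $g_{al}c^{lp}_b+g_{bl}c^{lp}_a=0$, the Jacobi identity for $c$ and the cocycle identity for $f$ until each block collapses to \eqref{cond1x} and \eqref{cond2x}; the Corollary itself is then only stated as the Lie-algebraic reading of those conditions, without an explicit argument. You instead extract the structural content first: the same compatibility relation, together with skew-symmetry of $c^{ij}_k$, makes $T_{ijk}=g_{ia}g_{jb}c^{ab}_k$ totally antisymmetric (the metric/quadratic Lie algebra property), and this plus the Jacobi identity for $\omega_{\mathcal A}$ yields the closed formula $N^k_{ij}=\omega^{ks}_{\mathcal A}T_{ijs}=(c^{ks}_\ell u^\ell+f^{ks})T_{ijs}$, which I have checked (the first pair of terms gives $-\omega^{ks}T_{ijs}$, the second $+2\omega^{ks}T_{ijs}$). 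From there the equivalence with \eqref{cond1x}--\eqref{cond2x} is immediate by separating the affine dependence on $u$ and using non-degeneracy of $g$, and — usefully — the formula makes the ``if'' direction transparent rather than requiring one to re-run the cancellation. Your final step, identifying \eqref{cond1x} with $[\mathfrak g,[\mathfrak g,\mathfrak g]]=0$ and \eqref{cond1x}+\eqref{cond2x} with $2$-step nilpotency of the central extension $\mathfrak g\oplus\mathbb{R}z$, $[x,y]=[x,y]_{\mathfrak g}+f(x,y)z$, is exactly the interpretation the paper leaves implicit, and you supply the missing verification. The trade-off: the paper's computation needs no intermediate structural lemma, while your version buys a compact invariant formula for $\mathcal N_L$ that also explains why the result sits naturally in the metric-Lie-algebra/left-symmetric framework of \cite{koni1}.
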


We now present an example:

\begin{example}Following the classification introduced in \cite{GubOliSgrVer}, we consider the first case of $2$-step nilpotent Lie algebra, so that the following operator satisfies the vanishing of its Nijenhuis torsion:
\begin{equation}
   \left(\begin{array}{cccccc}
0 & 0 & 0 & \alpha & 0 & 0 
\\
 0 & 0 & 0 & 0 & 0 & \alpha 
\\
 0 & 0 & 0 & 0 & -\alpha & 0 
\\
\alpha & 0 & 0 & \beta & \gamma & \delta 
\\
 0 & 0 & -\alpha & \gamma & \lambda & \epsilon 
\\
 0 & \alpha & 0 & \delta & \epsilon & \mu
\end{array}\right)\partial_x+\begin{pmatrix}
    0&0&0&0&0&0\\0&0&0&0&0&0\\0&0&0&0&0&0\\0&0&0&0&u^2&u^3\\
    0&0&0&-u^2&0&u^1\\
    0&0&0&-u^3&-u^1&0
\end{pmatrix},
\end{equation}
where $\alpha, \beta, \gamma,\delta, \epsilon$ and $\mu$ are arbitrary constants and the operator is defined on a $6$-dimensional manifold of field variables $u^1,\dots u^6$.
\end{example}

We stress that Theorem \ref{nijnonhom} does not represent a solution of the problem of understanding non-homogeneous operators in the contest of Nijenhuis geometry which, on the contrary, we believe should be a meaningful tool to deal with these operators. In this direction, we conjecture the existence of a non-trivial tensor $R^i_j$  whose vanishing Nijenhuis torsion is equivalent to the vanishing of the tensor $P^{ijk}$, thereby extending the result obtained for pencils of metrics and of ultralocal structures to the case of bi-pencils. This will be investigated in a future work.

\subsection*{Acknowledgements} 
The authors are extremely thankful to N.\ Manganaro, D.\ Akpan, E.\ Sgroi and R.\ Vitolo  for their comments, suggestions and for their concrete help.  We finally acknowledge the financial support of GNFM of the Istituto Nazionale di Alta Matematica. {AR is partially supported by the PRIN project MIUR Prin 2022, project code
	1074 2022M9BKBC, Grant No. CUP B53D23009350006}. PV is partially funded by the research project Mathematical Methods in Non-Linear Physics (MMNLP) by the Commissione Scientifica Nazionale – Gruppo 4 – Fisica Teorica of the Istituto Nazionale di Fisica Nucleare (INFN) Sezione di Lecce and by ``Borse per viaggi all'estero'' of the Istituto Nazionale di Alta Matematica, which permitted to visit the Geometry and Physics group of Loughborough University.

\vspace{5mm}

\paragraph{Data Availability} Not applicable. \paragraph{Declarations}

 The authors have no competing interests to declare that are relevant to the content of this article.

\paragraph{Ethical Statement}

No statement required.

\newpage 
\begin{appendices}

\section{\texorpdfstring{Classification of operators $C^{ij}_{\ell,k}$}{operCij}}\label{app:old_classification}
We list here the classification obtained in \cite{DellAVer1}. The operators $C^{ij}_{3,2}$ and $C^{ij}_{3,5}$ can be found in the text in \eqref{eq:operator_C32} and \eqref{eq:operator_C35}. 
{\footnotesize
\begin{equation*}
    \begin{array}{c c c c}
        \hline \\[-.5ex] 
        \text{Operator} & g^{ij} \, \partial_x & b^{ij}_k\,u^k_x  &  \omega^{ij} \\[1ex]
        \hline \\[-.5ex] 
        C^{ij}_{2,1} & 
        \begin{pmatrix}
        \partial_x & 0 \\
        0 & 0
        \end{pmatrix} & 
        \begin{pmatrix}
        0 & 0 \\
        0 & 0
        \end{pmatrix} & 
        \begin{pmatrix}
        0 & f(v) \\
        -f(v) & 0
        \end{pmatrix} \vspace{1ex} \\[1ex] \hline \\[-.5ex] 
        C^{ij}_{2,2} & 
        \begin{pmatrix}
        \partial_x & 0 \\
        0 & 0
        \end{pmatrix} &
        \begin{pmatrix}
        0 & -\dfrac{v_x}{u} \\
        \dfrac{v_x}{u} & 0
        \end{pmatrix} &
        \begin{pmatrix}
        0 & \dfrac{f(v)}{u} \\
        -\dfrac{f(v)}{u} & 0
        \end{pmatrix} 
        \vspace{2ex} \\[1ex] \hline \\[-.5ex] 
        C^{ij}_{3,1} 
        &
\begin{pmatrix}
0 & 0 & 0 \\
0 & 0 & 0 \\
0 & 0 & 0
\end{pmatrix} &
\begin{pmatrix}
0 & w_x & 0 \\
-w_x & 0 & 0 \\
0 & 0 & 0
\end{pmatrix} 
&
\begin{pmatrix}
0 & f(u,v,w) & 0 \\
-f(u,v,w) & 0 & 0 \\
0 & 0 & 0
\end{pmatrix}
 \vspace{2ex} \\[1ex] \hline \\[-.5ex] 
 C^{ij}_{3,3} & 
\begin{pmatrix}
\partial_x & 0 & 0 \\
0 & 0 & 0 \\
0 & 0 & 0
\end{pmatrix}
&
\begin{pmatrix}
0 & w_x & 0 \\
-w_x & 0 & 0 \\
0 & 0 & 0
\end{pmatrix}
&
\begin{pmatrix}
0 & f(v,w) & 0 \\
-f(v,w) & 0 & 0 \\
0 & 0 & 0
\end{pmatrix}
 \vspace{2ex} \\[1ex] \hline \\[-.5ex] 

C^{ij}_{3,4}  &
\begin{pmatrix}
\partial_x & 0 & 0 \\
0 & 0 & 0 \\
0 & 0 & 0
\end{pmatrix}
&
\begin{pmatrix}
0 & 0 & -\dfrac{w_x}{u} \\
0 & 0 & 0 \\
\dfrac{w_x}{u} & 0 & 0
\end{pmatrix}
&
\begin{pmatrix}
0 & 0 & \dfrac{f(v,w)}{u} \\
0 & 0 & 0 \\
-\dfrac{f(v,w)}{u} & 0 & 0
\end{pmatrix}
 \vspace{2ex} \\[1ex] \hline \\[-.5ex]

C^{ij}_{3,6} &
\begin{pmatrix}
\partial_x & 0 & 0 \\
0 & \partial_x & 0 \\
0 & 0 & 0
\end{pmatrix}&
\begin{pmatrix}
0 & 0 & 0 \\
0 & 0 & 0 \\
0 & 0 & 0
\end{pmatrix}
&
\begin{pmatrix}
0 & f(w) & g(w) \\
-f(w) & 0 & c g(w) \\
-g(w) & -c g(w) & 0
\end{pmatrix}
\vspace{2ex} \\[1ex] \hline \\[-.5ex]

C^{ij}_{3,7} &
\begin{pmatrix}
\partial_x & 0 & 0 \\
0 & \partial_x & 0 \\
0 & 0 & 0
\end{pmatrix}
&
\begin{pmatrix}
0 & 0 & 0 \\
0 & 0 & -\dfrac{w_x}{v} \\
0 & \dfrac{w_x}{v} & 0
\end{pmatrix}
&
\begin{pmatrix}
0 & 0 & c f(w) \\
0 & 0 & \dfrac{(1 - c u) f(w)}{v} \\
-c f(w) & -\dfrac{(1 - c u) f(w)}{v} & 0
\end{pmatrix}
\vspace{2ex} \\[1ex] \hline \\[-.5ex] 

C^{ij}_{3,9} & 
\begin{pmatrix}
0 & \partial_x & 0 \\
\partial_x & 0 & 0 \\
0 & 0 & 0
\end{pmatrix} & 
\begin{pmatrix}
0 & 0 & 0 \\
0 & 0 & 0 \\
0 & 0 & 0
\end{pmatrix}
&
\begin{pmatrix}
0 & f(w) & c g(w) \\
-f(w) & 0 & g(w) \\
-c g(w) & -g(w) & 0
\end{pmatrix}
\vspace{2ex} \\[1ex] \hline \\[-.5ex] 

C^{ij}_{3,10} &
\begin{pmatrix}
0 & \partial_x & 0 \\
\partial_x & 0 & 0 \\
0 & 0 & 0
\end{pmatrix}
&
\begin{pmatrix}
0 & 0 & -\dfrac{w_x}{v} \\
0 & 0 & 0 \\
\dfrac{w_x}{v} & 0 & 0
\end{pmatrix}
&
\begin{pmatrix}
0 & f(w) & \dfrac{h(w) - u g(w)}{v} \\
-f(w) & 0 & g(w) \\
-\dfrac{h(w) - u g(w)}{v} & -g(w) & 0
\end{pmatrix}
\vspace{2ex} \\[1ex] \hline \\[-.5ex] 
    \end{array}
\end{equation*}
}
We finally list the operators $C^{ij}_{3,8}$ and $C^{ij}_{3,11}$
{\footnotesize
\begin{equation*}
\begin{split} 
C^{ij}_{3,8} = 
\begin{pmatrix}
\partial_x & 0 & 0 \\
0 & \partial_x & 0 \\
0 & 0 & 0
\end{pmatrix}
&+
\begin{pmatrix}
0 & 0 & -\dfrac{w w_x}{u w - v} \\
0 & 0 & \dfrac{w_x}{u w - v} \\
\dfrac{w w_x}{u w - v} & -\dfrac{w_x}{u w - v} & 0
\end{pmatrix} \\[1ex]
&+ (1 + w^2) f(w)
\begin{pmatrix}
0 & \dfrac{1}{1 + w^2} & \dfrac{w - c v \sqrt{1 + w^2}}{u w - v} \\
-\dfrac{1}{1 + w^2} & 0 & -\dfrac{1 - c u \sqrt{1 + w^2}}{u w - v} \\
-\dfrac{w - c v \sqrt{1 + w^2}}{u w - v} & \dfrac{1 - c u \sqrt{1 + w^2}}{u w - v} & 0
\end{pmatrix}\\[2ex]
C^{ij}_{3,11} = 
\begin{pmatrix}
0 & \partial_x & 0 \\
\partial_x & 0 & 0 \\
0 & 0 & 0
\end{pmatrix}
&+
\begin{pmatrix}
0 & 0 & \dfrac{w_x}{u w - v} \\
0 & 0 & -\dfrac{w w_x}{u w - v} \\
-\dfrac{w_x}{u w - v} & \dfrac{w w_x}{u w - v} & 0
\end{pmatrix} \\[1ex]
&+ f(w)
\begin{pmatrix}
0 & \dfrac{c}{\sqrt{w}} & \dfrac{u w - 2 c \sqrt{w}}{u w - v} \\
-\dfrac{c}{\sqrt{w}} & 0 & -\dfrac{w \left(v - 2 c \sqrt{w}\right)}{u w - v} \\
-\dfrac{u w - 2 c \sqrt{w}}{u w - v} & \dfrac{w \left(v - 2 c \sqrt{w}\right)}{u w - v} & 0
\end{pmatrix}
\end{split} 
\end{equation*}
}  

\vspace{4ex}

\section{Proof of Theorem \ref{2x2thm}}\label{class2x2}
In order to prove the theorem, we first compute the conditions $\Phi^{ijk}=\Phi^{kij}$, that lead to the system
\begin{align}
& \left(a h_{u}^2+b h_{v}^1\right)\omega_{v}+a(  h_{u}^1- h_{v}^2- c_1)\omega_{u}=0, \label{siss1} \\ 
&\left(a h_{u}^2+bh_{v}^1\right)\omega_{u} + b(h_{v}^{2}- h_{u}^{1}-c_1) \omega_{v}=0\,.  \label{siss2}
\end{align}
We consider different cases taking into account the behaviour of the derivatives $\omega_u$ and $\omega_v.$
\begin{enumerate}
    \item { Case $\omega_u = \omega_v = 0$.}\\
  In terms of $h^1$ and $h^2$, starting from \eqref{formadiomega}, this gives
    \begin{equation*}
        \begin{cases} 
            h^1_{uu} + h^2_{uv} = 0 \\[.5ex]
            h^1_{uv} + h^2_{vv} = 0
        \end{cases} \implies~~ \begin{cases} 
            \partial_u(h^1_{u} + h^2_{v}) = 0 \\[.5ex]
            \partial_v(h^1_{u} + h^2_{v}) = 0
        \end{cases} \implies~~ \begin{cases} 
            h^1_{u} + h^2_{v} = f(v) \\[.5ex]
            h^1_{u} + h^2_{v} = g(u)
        \end{cases} 
    \end{equation*}
    with $f(v),g(u)$ arbitrary functions. By comparison, it follows $f(v)=g(u)=\tilde{c}$. \\
    Imposing the conditions prescribed by Corollary $\ref{cormok}$ and with the help of computer algebra, we find that the functions $h_1$ and $h_2$ must be linear, i.e.
    \begin{align*}
        &h^1(u,v)=k_1 u+k_2 v+k_3,\\
        &h^2(u,v)=k_4 u+k_5 v+k_6,
    \end{align*}
    where $k_i$  ($i=1, \dots 6$) are arbitrary constants.
     \item {Case $\omega_u=0, \omega_v\neq 0$} \\   
     In terms of $h^1$ and $h^2,$ we get
\begin{align}\label{1}
    h_{uu}^{1}+h_{uv}^{2}&= 0\,, \\[1ex] 
   c\!\left( h_{uv}^{1}+h_{vv}^{2}\right)&=  \omega_{v}\,, \label{2}
\end{align}
alongside with 
\begin{align}
& a h_{u}^2+b h_{v}^1=0\,,\\[1ex]
& h_{v}^{2}- h_{u}^{1}-c_1=0\, ,
\end{align}
which derived with respect to $u$ and $v$ yield 
\begin{equation}
    \begin{cases}
        a h^2_{uu} + b h^1_{uv} = 0 \\[1ex]
        h_{vu}^{2}- h_{uu}^{1}=0
    \end{cases}\,, \qquad 
    \begin{cases}
        a h^2_{uv} + b h^1_{vv} = 0 \\[1ex]
        h_{vv}^{2}- h_{uv}^{1}=0
    \end{cases}\, .
\end{equation}
Combining the above relations, we get
\begin{equation}
    h^2_{uv} = h^1_{uu} = h^1_{vv} = 0 \,, \qquad h^2_{vv} = h^1_{uv} = -\frac{a}{b}\, h^2_{uu} = \frac{\omega_v}{2c} \,.   
\end{equation}
By integration, we obtain the forms of $h^1$ and $h^2,$ namely 
\begin{align*}
    &h^1= k_1\,u + k_2\,v + k_3\,uv + k_4\,,\\
    &  h^2 = \dfrac{k_3}{2} \,v^2 -\dfrac{b}{2 a}k_3 u^2+ k_5 \,v + k_6\,u + k_7  \, ,  
\end{align*}
with $k_i$, $(i=1, \dots, 7)$ arbitrary constants. 

Computing $R_{12}^{12}$ in \eqref{r1}, we obtain $2 \, b \, k_3^2=0$ and hence $k_3=0$, so that the resulting $h^i$ are once again linear.
  \item {Case $\omega_v=0, \omega_u \neq 0$} \\   This case is perfectly similar to the previous one.
\item {Case $\omega_u \neq 0, \omega_v \neq 0$}\\ 
In this case, with the help of computer algebra, we verified that the system given by \eqref{siss1} and \eqref{siss2} has solutions different from the linear ones only if the arbitrary constant $c_1$ in \eqref{formadiomega} is equal to zero. Under this assumption, \eqref{siss1} and \eqref{siss2} can be rewritten as
 \begin{align}
      &a h_{u}^{2}+b h_{v}^{1}=\dfrac{(h_{v}^{2}-h_{u}^{1}) w_{u}}{w_{v}} \label{siss11}\\
      & (a w_{u}^2+b w_{v}^{2}) (h_{v}^{2}-h_{u}^{1})=0 \label{siss22}
  \end{align}
We recall that it is possible to apply the spectral theorem to the metric $\eta^{ij}$, so that $(a,b)=(1,1),(a,b)=(1,-1), (a,b)=(-1,1)$ and $(a,b)=(-1,-1)$. We can then further distinguish two other subcases: 
\begin{itemize}
    \item[\textit{i})] $ab>0$. We notice that if $a$ and $b$ have the same sign, \eqref{siss11} and \eqref{siss22} are equivalent to the Cauchy-Riemann system for $h^1,h^2$:
\begin{equation} \label{eq:h1h2_constr_case1}
      \begin{cases}
          h_{u}^{2}+ h_{v}^{1}=0\\
       h_{v}^{2}-h_{u}^{1}=0,
      \end{cases}
  \end{equation}
  whose solutions are equivalent to the solutions of the Laplace problem for $h^{1}$ and $h^{2}$:
\begin{equation}
    \Delta h^i=0 \quad \Leftrightarrow \quad h^i_{uu}+h^i_{vv}=0 \qquad i=1,2.
\end{equation} 
Hence we obtain
\begin{equation*}
     h^{1}(u,v)= \xi_1(u+ i v) + \xi_2(u-i v), \quad h^{2}(u,v)= -i \xi_1(u+ i v) + i \xi_2\left(u-i v\right).
\end{equation*}
Finally, computing $R_{12}^{12}$, it immediately follows that either $\xi_1^{''}$ or $\xi_2^{''}$ is equal to zero.
    \item[\textit{ii})] $ab<0$.  If $a$ and $b$ are opposite in sign, the solution to equations \eqref{siss22} is given either by   $h_{v}^{2}-h_{u}^{1}=0$ or $w_u^2-w_v^2=0.$ In the first case, 
    from \eqref{siss11} and \eqref{siss22}, we obtain
  \begin{equation} \label{eq:h1h2_constr_case2}
      \begin{cases}    h_{u}^{2}- h_{v}^{1}=0,\\
       h_{v}^{2}-h_{u}^{1}=0,
       \end{cases}
  \end{equation}
  that is a $p$-system or, equivalently, the wave equation with unitary velocity:
  \begin{equation}
      h^i_{uu}-h^i_{vv}=0, \qquad i=1,2.
  \end{equation}
The solution is explicitly given by
  \begin{equation}
      h^1(u,v)=\xi_1(u+v)+\xi_2(u-v), \qquad h^2(u,v)=\xi_1(u+v)-\xi_2(u-v),
  \end{equation}
  where $\xi_1,\xi_2$ are arbitrary functions in their arguments. Once again, from $R_{12}^{12},$ we get that one of the functions $\xi_1$ and $\xi_2$ must be linear. 
\\

On the other hand, if we assume $a \omega_{u}^{2}+ b \omega_{v}^{2}=0,$ noticing that $b/a= -1$, we can rewrite this condition as
\begin{equation}
    (\omega_{u}- \omega_{v})(\omega_{u}+ \omega_{v})=0.
\end{equation}
This implies $w= F(u \pm v)$, along with 
\begin{equation}
    a h_{u}^{2}+b h_{v}^{1}=\dfrac{(h_{v}^{2}-h_{u}^{1})}{\pm 1}.
    \label{dasostituire1}
\end{equation}
At this point, we consider \eqref{eq2thm}. The only non trivial conditions are
\begin{align}
      &\left(a h_{u}^2+b h_{v}^1\right) (b h_{vv}^2-ah_{uu}^2)+ 2 a b  h_{uv}^2(h_{u}^{1}-h_{v}^{2})=0,  \label{condizionina1}\\
      & \left(a h_{u}^2+b h_{v}^1\right) (b h_{vv}^1-ah_{uu}^1)+ 2 a b  h_{uv}^1(h_{u}^{1}-h_{v}^{2})=0.
      \label{condizionina2}
\end{align}
Substituting \eqref{dasostituire1} in \eqref{condizionina1} and \eqref{condizionina2}, one obtains \begin{equation}
     (h_{v}^{2}-h_{u}^{1})\left(b h_{vv}^2-a h_{uu}^2 \pm 2 h_{uv}^2\right)=0
    \end{equation}
\begin{equation}
    (h_{v}^{2}-h_{u}^{1})\left( b h_{vv}^1-a h_{uu}^1 \pm 2  h_{uv}^1\right)=0
\end{equation}
Hence, remembering that we are in the case $h_{v}^{2}-h_{u}^{1} \neq 0$, we get that $h^1$ and $h^2$ satisfy
\begin{align*}
    & b h_{vv}^1-a h_{uu}^1 \pm 2  h_{uv}^1=0,\\
    & b h_{vv}^2 -a h_{uu}^2 + \pm  h_{uv}^2=0.
\end{align*}
Depending on the sign of $a$ and $b$ and on the choice of $\mathcal{\omega}$, two possible cases arise
\begin{align*}
    &h^1(u,v)=\xi_1(u+v)+(v-u)\,\xi_2(u+v)\\
     &h^2(u,v)=\xi_3(u+v)+(v-u)\,\xi_4(u+v)
\end{align*}
or
\begin{align*}
    &h^1(u,v)=\xi_1(v-u)+(u+v)\,\xi_2(v-u)\\
     &h^2(u,v)=\xi_3(v-u)+(u+v)\,\xi_4(v-u)
\end{align*}
From the tensor $R$, we obtain that $\xi_2=-\xi_4$ for the first choice of $h^1$ and $h^2$, and  $\xi_2=\xi_4$ for the second one.
\end{itemize}
\end{enumerate}

\section{Proof of Theorem \ref{nijnonhom}}\label{nijapp}
To prove the theorem we first compute the Nijenhuis tensor for the affinor $L^i_j=g_{js}(c^{si}_ku^k+f^{si})$. We split the computations into two parts, considering first the coefficients of $u$.  

\vspace*{-2ex}

\begin{enumerate}
    \item {Coefficients of $u$ of degree 1.} 
    
    Collecting the expression of $\mathcal{N}_L$ for $u$ we obtain
\begin{equation}
    g_{js} g_{kl} c_{p}^{li} c_{a}^{sp}- g_{ks} g_{jl} c_{p}^{li} c_{a}^{sp}+ g_{jl} g_{ps} c_{a}^{si} c_{k}^{lp}- g_{kl} g_{ps} c_{a}^{si} c_{j}^{lp}=0,
\end{equation}
that is equivalent to 
\begin{equation}
     g_{jl} c_{a}^{lp} g_{ks} c_{p}^{si} - g_{ks}  c_{a}^{sp} g_{jl} c_{p}^{li}+ g_{ps} c_{a}^{si}[g_{jl}  c_{k}^{lp}- g_{kl} c_{j}^{lp}]=0.
\end{equation}
Using the compatibility condition between the metric and the Poisson tensor, i.e.  $g^{is}c^{jk}_s+g^{js}c^{ik}_s=0$, applying $g_{ai}g_{bj}$ to both the sides, we have $ g_{al} c_{b}^{lp} + g_{bl} c_{a}^{lp}=0$, so that  from the previous expression we get
\begin{equation}
     g_{jl}  g_{ks} [c_{a}^{lp} c_{p}^{si} - c_{a}^{sp}  c_{p}^{li}]+ 2 g_{ps} c_{a}^{si}  g_{jl} c_{k}^{lp}=0.
\end{equation}
From the Jacobi identity on the structure constants and using the skew-symmetry property on the constants we can substitute $c_{a}^{lp} c_{p}^{si} - c_{a}^{sp}  c_{p}^{li}=c_{a}^{lp} c_{p}^{si} + c_{a}^{sp}  c_{p}^{il}= - c_{a}^{ip}  c_{p}^{ls}$, obtaining
\begin{equation}
    -g_{jl} g_{ks} c_{a}^{ip} c_{p}^{ls}+ 2  g_{ps} c_{a}^{si}  g_{jl} c_{k}^{lp}=0.
\end{equation}
or equivalently
\begin{equation}
    -g_{jl} g_{ks} c_{a}^{ip} c_{p}^{ls}-   g_{ps} c_{a}^{si}  g_{jl} c_{k}^{pl} - g_{ps} c_{a}^{si} g_{kl} c_{j}^{lp} =0.
    \label{ul}
\end{equation}
Now, using some simple algebra, we get the additional relations
\begin{equation*}
 g_{ps} c_{a}^{si}  g_{jl} c_{k}^{pl}=-g_{kp} c_{s}^{pl}  g_{jl} c_{a}^{si},   \qquad g_{ps} c_{a}^{si} g_{kl} c_{j}^{lp}=g_{ps} c_{a}^{si} g_{jl}c_{k}^{pl},
\end{equation*}
so that \eqref{ul} can be rewritten as
\begin{equation}
      -g_{jl} g_{ks} c_{a}^{ip} c_{p}^{ls} + g_{jl}g_{ks} c_{p}^{sl}   c_{a}^{pi}+g_{k s} c_{p}^{sl} c_{a}^{pi} g_{jl}=0.
\end{equation}
In the first term, $c_{a}^{ip}c_{p}^{ls} =- c_{a}^{sp} c_{p}^{il}-c_{a}^{lp} c_{p}^{si} =+c_{a}^{sp} c_{p}^{li}-c_{a}^{lp} c_{p}^{si}$, so that
\begin{equation}
    g_{jl} g_{ks} c_{a}^{lp} c_{p}^{si}- g_{jl} g_{ks} c_{a}^{sp} c_{p}^{li} + g_{jl}g_{ks} c_{p}^{sl}   c_{a}^{pi}+g_{k s} c_{p}^{sl} c_{a}^{pi} g_{jl}=0.
\end{equation}
With similar manipulations, we recover
\begin{equation}
   g_{ap} g_{ks} c_{l}^{ps} c_{j}^{li}+ g_{jl}g_{ks} c_{p}^{sl}   c_{a}^{pi}+g_{jl} g_{ks} c_{a}^{sp} c_{p}^{li}=0
\end{equation}
Finally, in the last term, $g_{jl} c_{a}^{sp} c_{p}^{li}=-g_{pl} c_{a}^{sp} c_{j}^{li}=g_{lp} c_{a}^{ps} c_{j}^{li}=- g_{ap} c_{l}^{ps} c_{j}^{li}$ and we get
\begin{align}\begin{split}
   0&= g_{ap} g_{ks} c_{l}^{ps} c_{j}^{li}+ g_{jl}g_{ks} c_{p}^{sl}   c_{a}^{pi}- g_{ap} g_{ks} c_{l}^{ps} c_{j}^{li}= g_{jl}g_{ks} c_{p}^{sl}   c_{a}^{pi}.\end{split}
\end{align}
We then conclude that the terms in $u$ annihilate if and only if
condition \eqref{cond1x} holds true.

\item Coefficients of $u$ of degree 0. 

We now focus on the coefficients of $f$. We have
    \begin{equation}
       f^{ls} [ g_{il} g_{jb} c_{s}^{bk}-g_{jl} g_{ib} c_{s}^{bk}]+ g_{ib} g_{js} c_{l}^{sb} f^{lk}-g_{sl} g_{jb} c_{i}^{bs} f^{lk} \label{eqf1}
    \end{equation}
   Since $f$ is a cocycle, in the third term
    we can use $
        c_{l}^{sb} f^{lk}=-c_{l}^{bk} f^{ls}-c_{l}^{ks} f^{lb}
   $. Hence, in \eqref{eqf1}
       \begin{equation}
       f^{ls} [ g_{il} g_{jb} c_{s}^{bk}-g_{jl} g_{ib} c_{s}^{bk}]- g_{ib} g_{js}c_{l}^{bk} f^{ls}- g_{ib} g_{js}c_{l}^{ks} f^{lb} -g_{sl} g_{jb} c_{i}^{bs} f^{lk}=0. \label{eqf2}
    \end{equation}
  Simplifying the second and the third term (that turn out to be opposite with some algebra), 
  we get with some simple algebra
        \begin{equation}
      g_{il} g_{jb} c_{s}^{bk} f^{ls}- g_{ib} g_{js}c_{l}^{ks} f^{lb}+g_{is} g_{jb}c_{l}^{bk} f^{ls}+g_{is} g_{jb}c_{l}^{ks} f^{lb}=0.
      \label{eqf5}
         \end{equation}  
     Manipulating the third term $g_{is} g_{jb}c_{l}^{bk} f^{ls}=g_{il} g_{jb}c_{s}^{bk} f^{sl}=-g_{il} g_{jb}c_{s}^{bk} f^{ls}$ and simplifying, it remains
         \begin{equation}
    - g_{ib} g_{js}c_{l}^{ks} f^{lb}+g_{is} g_{jb}c_{l}^{ks} f^{lb}=0
      \label{eqf6}
         \end{equation}  
  This implies the second condition and proves the Theorem.
  \end{enumerate}      

\end{appendices}


\begin{thebibliography}{10}


\bibitem{bdmt}
Bandyopadhyay S., Dacorogna B., Matveev V.S., Troyanov M.: \emph{Bernhard Riemann 1861 revisited: existence of flat coordinates for an arbitrary bilinear form}.  Math. Z. 305:12 (2023). \url{https://doi.org/10.1007/s00209-023-03335-1}

\bibitem{bla} Blaszak, M.: Multi-Hamiltonian Theory of Dynamical Systems,  Springer Berlin, Heidelberg,  (1998).
\url{https://doi.org/10.1007/978-3-642-58893-8}


\bibitem{Bolsi0}
Bolsinov, A.V., Konyaev, A.Y., Matveev, V.S.: \emph{Applications of Nijenhuis geometry V: Geodesic equivalence and finite-dimensional reductions of integrable quasilinear systems}. J. Nonlinear Sci. 34 (2020). \url{https://doi.org/10.1007/s00205-019-01453-x}

\bibitem{Bolsi1}
Bolsinov, A.V., Konyaev, A.Y., Matveev, V.S.: \emph{Applications of Nijenhuis geometry II: maximal pencils of multi-Hamiltonian structures of hydrodynamic type}. Nonlinearity 34, 5136 (2021). \url{https://dx.doi.org/10.1088/1361-6544/abed39}

\bibitem{BolsiReview}
Bolsinov, A.V., Konyaev, A.Y., Matveev, V.S.: \emph{Nijenhuis geometry}. Adv. Math. 394 (2022). \url{https://dx.doi.org/10.1016/j.aim.2021.108001}


\bibitem{Bolsi2}
Bolsinov, A.V., Konyaev, A.Y., Matveev, V.S.: \emph{Applications of Nijenhuis geometry IV: Multicomponent KdV and Camassa–Holm equations}. Dyn. Partial Differ. Equ. 20:1, pp. 73--98 (2022). \url{https://dx.doi.org/10.4310/DPDE.2023.v20.n1.a4}

\bibitem{Bolsi3}
Bolsinov, A.V., Konyaev, A.Y., Matveev, V.S.: \emph{Applications of Nijenhuis Geometry III: Frobenius Pencils and Compatible Non-homogeneous Poisson Structures}. J. Geom. Anal. 33, 193 (2023). \url{https://doi.org/10.1007/s12220-023-01237-6}

\bibitem{Bolsi4}
Bolsinov, A.V., Konyaev, A.Y., Matveev, V.S.: \emph{Nijenhuis geometry IV: conservation laws,
symmetries and integration of certain nondiagonalisable systems of hydrodynamic type in
quadratures}.  Nonlinearity 37, pp. 105003 (2024). \url{https://dx.doi.org/10.1088/1361-6544/ad6acc}

\bibitem{Bolsi5}
Bolsinov, A.V., Konyaev, A.Y., Matveev, V.S.:, \emph{Research problems on relations between Nijenhuis geometry and integrable systems}. 2024 MATRIX Annals (2024).
\url{https://doi.org/10.48550/arXiv.2410.04276}

\bibitem{casa1}
Hu X.,  Casati M.: \emph{Multidimensional nonhomogeneous quasilinear systems and their Hamiltonian structures}. SIGMA 20:81, pp. 1--17 (2024). \url{https://doi.org/10.3842/SIGMA.2024.081}

\bibitem{DMS}
Degiovanni L.,  Magri F., Sciacca V.: \emph{On deformation of poisson manifolds of hydrodynamic type}. Comm. Math. Phys. 253, pp. 1--24 (2005). \url{https://doi.org/10.1007/s00220-004-1190-8}

\bibitem{DellAVer1}
Dell'Atti, M. ,  Vergallo, P.: \emph{Classification of degenerate non-homogeneous Hamiltonian operators}. 
J. Math. Phys. 64:3 (2022). \url{https://doi.org/10.1063/5.0135134}
\bibitem{dorfman91:_local}
I. Dorfman, Dirac Structures and Integrability of Nonlinear Evolution Equations, Wiley, Chichester, 1993.


\bibitem{Doyle}
Doyle P.W.: \emph{Differential geometric Poisson bivectors in one space variable}.  J.
Math. Phys. 34:4 (1993). \url{https://doi.org/10.1063/1.530213}

\bibitem{dub1} Dubrovin B.A.: \emph{Geometry of 2D topological field theories}, in Integrable systems and quantum groups : lectures given at the 1st session of the Centro internazionale matematico estivo (C.I.M.E.) held in Montecatini Terme, Italy, June 14-22, 1993 (1996) Lecture Notes in Matehmatics, 1620, 120-348.

\bibitem{dub2} Dubrovin B.A.: \emph{Differential geometry of the space of orbits of a Coxeter group} in J. Differential Geometry, Suppl.4, pp. 181-211,  (1998). 	http://hdl.handle.net/1963/3562
\bibitem{DubKri} Dubrovin  B.A., Krichever  I.M. , Novikov S.P.: \emph{Integrable systems I.} In: V.I. Arnold, I.M. Krichever (eds.) Dynamical Systems IV, Encyclopaedia of Mathematical Sciences 4, pp. 177--332. Springer-Verlag, Berlin (2001).
\url{https://doi.org/10.1007/978-3-662-06791-8}

\bibitem{DN83}
Dubrovin  B.A., Novikov S.P.:
\emph{Hamiltonian formalism of one-dimensional systems of hydrodynamic type
  and the Bogolyubov-Whitham averaging method}.
Soviet Math. Dokl., 27:3 pp. 665--669 (1983). \url{http:mi.mathnet.ru/dan46101}

\bibitem{DubrovinNovikov:PBHT}
Dubrovin  B.A., Novikov S.P.:
\emph{On Poisson brackets of the hydrodynamic type}.
Akad. Nauk SSSR Dokl. 279:2 pp. 294--297 (1984). \url{https://iris.sissa.it/retrieve/dd8a4bf8-2798-20a0-e053-d805fe0a8cb0/dubrovin_novikov_1984_dan.pdf}

\bibitem{DubFla}  Dubrovin B.A.: \emph{ Flat pencils of metrics and Frobenius manifolds}, In: M.-H. Saito, Y. Shimizu and K. Ueno (eds.) Proceedings of 1997 Taniguchi Symposium ``Integrable Systems and Algebraic Geometry'', Editors pp. 47--72. World Scientific (1998).
\url{https://doi.org/10.48550/arXiv.math/9803106}

\bibitem{DubZha}
Dubrovin B.A., Zhang Y.: \emph{Normal forms of hierarchies of integrable PDEs, Frobenius manifolds and Gromov - Witten invariants}. ArXiv (2001). \url{https://arxiv.org/abs/math/0108160}


\bibitem{falq}
Falqui G.: \emph{On a Camassa-Holm type equation with two dependent variables}. J. Phys. A: Math. Gen. 39 pp. 327--342 (2006). \url{https://dx.doi.org/10.1088/0305-4470/39/2/004}

\bibitem{FL}
Falqui G., Lorenzoni P.: \emph{Exact Poisson pencils, $\tau$-structures and topological hierarchies}. Phys. D 241 pp. 2178--2187 (2012). \url{https://dx.doi.org/10.1016/j.physd.2011.11.009}


\bibitem{Fer11}
Ferapontov E.V.: \emph{Compatible Poisson brackets of hydrodynamic type}. J. Phys. A: Math. Gen. 34 2377 (2001). \url{https://dx.doi.org/10.1088/0305-4470/34/11/328}

\bibitem{grin}  Grinberg N.I.: \emph{On Poisson brackets of hydrodynamic type with a degenerate metric}.
Russ. Math. Surv. 40:231 (1985).
\url{https://doi.org/10.1070/RM1985v040n04ABEH003662}

\bibitem{GubOliSgrVer}  Gubbiotti, G., Oliveri, F., Sgroi E., Vergallo, P.: \emph{Lie algebras with compatible scalar products for non-homogeneous Hamiltonian operators}. J. Phys. A: Math. Theor. (2025).
\url{https://doi.org/10.1088/1751-8121/addb14}

\bibitem{hampds} Kersten P.,  Krasil'shchik I.S., Verbovetsky A.M., Vitolo R. F.: \emph{Hamiltonian structures for general PDEs}. In: Kruglikov, B., Lychagin, V., Straume, E.\ (eds) Differential Equations - Geometry, Symmetries and Integrability. Abel Symposia, vol 5. Springer, Berlin, Heidelberg (2009). \url{https://doi.org/10.1007/978-3-642-00873-3_9}


\bibitem{koni1}
 Konyaev A.: \emph{Nijenhuis geometry II: Left-symmetric algebras and linearization problem for Nijenhuis operators},
Diff. Geom. Appl. 74
(2021). 
\url{https://doi.org/10.1016/j.difgeo.2020.101706}


\bibitem{Li} Li N., Liu Q.P.: \emph{On bi-Hamiltonian structure of two-component Novikov equation},
Phys. Lett. A 377, pp. 3--4 (2013). \url{https://doi.org/10.1016/j.physleta.2012.11.023.}


\bibitem{Liu} Liu T.P.: \emph{Quasilinear hyperbolic systems}, Commun.Math. Phys. 68, 141–172 (1979). \url{https://doi.org/10.1007/BF01418125}

\bibitem{LPR}
Lorenzoni P., Pedroni M., Raimondo A.: \emph{Poisson pencils: Reduction, exactness, and invariants}. J. Geom. Phys. 138 pp. 154--167 (2019). \url{https://doi.org/10.1016/j.geomphys.2018.12.010}


\bibitem{LSV:bi_hamil_kdv}
Lorenzoni P., Savoldi A., Vitolo R.F.: \emph{Bi-Hamiltonian structures of KdV type}. J. Phys. A 51:4 pp. 045202 (2018). \url{https://dx.doi.org/10.1088/1751-8121/aa994d}


\bibitem{lorvit} 
Lorenzoni P., Vitolo, R.: \emph{ Projective-geometric aspects of bi-Hamiltonian structures of KdV type}, In: I.S. Krasil'shchik, A.B. Sossinsky, A.M. Verbovetsky (eds.) The Diverse World of PDEs: Geometry and Mathematical Physics. Alexandre Vinogradov Memorial Conference on Diffieties, Cohomological Physics, and Other Animals, AMS, pp. 165--178 (2021). 
\url{https://doi.org/10.1090/conm/788/15825}
\bibitem{Maj} Majda A.: \textit{Compressible fluid flow and systems of conservation laws in several space variables}. Appl. Math. Sci., Springer-Verlag, New York (1984).

\bibitem{ManRizVer1} Manganaro N., Rizzo A., Vergallo P.: \textit{
Solutions to the wave equation for commuting flows of dispersionless PDE}. Int. J. Non-Linear Mech. (2024). \url{https://doi.org/10.1016/j.ijnonlinmec.2023.104611}

\bibitem{Magri:SMInHEq} Magri F.:
\emph{A simple model of the integrable Hamiltonian equation}. J. Math. Phys., 19 pp. 1156--1162 (1978). \url{https://doi.org/10.1063/1.523777}

\bibitem{MaMo}
Magri F., Morosi C.: \emph{A geometrical characterization of integrable Hamiltonian systems through the theory of Poisson-Nijenhuis manifolds}. Quad. Mat. 3, Dipartimento di Matematica dell'Universit\`a di Milano (2008).
\url{https://boa.unimib.it/handle/10281/17656} 


\bibitem{mk2} Mokhov O.I.: \emph{A Hamiltonian structure of evolution in the space variable x for the Korteweg-de Vries equation}. Russ. Math. Surv. 45:218 (1990). \url{https://dx.doi.org/10.1070/RM1990v045n01ABEH002332}

\bibitem{FerMok1}
Mokhov O.I., Ferapontov E.V.: \emph{Hamiltonian Pairs Associated with Skew-Symmetric Killing Tensors on Spaces of Constant Curvature}. Funct. Anal. Appl. 28:2, pp. 123--125 (1994). \url{https://doi.org/10.1007/BF01076502}

\bibitem{omokh}
Mokhov O.I.: \emph{On compatible Poisson structures of hydrodynamic type}. Russian Math. Surveys 52:6 pp. 1310--1311 (1997). \url{https://dx.doi.org/10.1070/RM1997v052n06ABEH002173}

\bibitem{mokhov98:_sympl_poiss}
Mokhov O.I.: \emph{Symplectic and {P}oisson geometry on loop spaces of smooth manifolds
  and integrable equations}.  In: S.P. Novikov and I.M. Krichever (eds), Reviews in
  mathematics and mathematical physics,  11, pp. 1--128 (1998).
  
\bibitem{Mok14}
Mokhov, O.I.: \emph{Compatible and Almost Compatible Pseudo-Riemannian Metrics}. Funct. Anal. Appl. 35, pp. 100--110 (2001). \url{https://doi.org/10.1023/A:1017575131419  }

\bibitem{Mok11}
Mokhov, O.I.: \emph{Compatible Dubrovin–Novikov Hamiltonian Operators, Lie Derivative, and Integrable Systems of Hydrodynamic Type}. Theor. Math. Phys. 133, pp. 1557--1564 (2002).  \url{https://doi.org/10.1023/A:1021155028895}

\bibitem{Mok12}
Mokhov O.I.: \emph{Integrability of the Equations for Nonsingular Pairs of Compatible Flat Metrics}. Theor. Math. Phys. 130, pp. 198--212 (2002). \url{https://doi.org/10.1023/A:1014235331479}

\bibitem{Mok13}
Mokhov O.I.: \emph{Pencils of compatible metrics and integrable systems}. Russ. Math. Surv. 72 889 (2017).  \url{https://doi.org/10.1070/RM9792}

\bibitem{Novi2}
Novikov S.P.: \emph{The geometry of conservative systems of hydrodynamic type. The method of averaging for field-theoretical systems}. Russ. Math. Surv. (1985). \url{https://dx.doi.org/10.1070/RM1985v040n04ABEH003615}

\bibitem{Olver:ApLGDEq}
Olver. P.J.: Applications of Lie Groups to Differential Equations. Springer-Verlag, 2nd ed. (1993).
\url{https://doi.org/10.1007/978-1-4684-0274-2}

\bibitem{PavVerVit1}
Pavlov M.V., Vergallo P., Vitolo R.F.:
\emph{Classification of bi-Hamiltonian pairs extended by isometries}. Proc. Roy. Soc. A (2021). \url{https://doi.org/10.1098/rspa.2021.0185}

\bibitem{Pav1}
Pavlov, M.V.: \emph{Compatible Pairs of Dubrovin–Novikov Poisson Brackets and Lagrangian Representations of Integrable Hierarchies}. Proc. Steklov Inst. Math. 325, pp. 224--229 (2024). \url{https://doi.org/10.1134/S0081543824020135}

\bibitem{Pote}
Potëmin G.V.: \emph{Poisson brackets of differential-geometric type}. Soviet Math. Dokl. 33:1 (1986). \url{https://doi.org/10.1070/RM1997v052n03ABEH001817}

\bibitem{RizVer1} Rizzo A., Vergallo P.: \textit{Quasilinear differential constraints for parabolic systems of Jordan-block type}. Stud. Appl. Math. 154(6) (2024). \url{https://doi.org/10.1111/sapm.70072}


\bibitem{Riz1} Rizzo A.:  \textit{Classification of non-homogeneous multidimensional Hamiltonian operators}. Math. Phys. Anal. Geom. 28:8 (2025). \url{https://doi.org/10.1007/s11040-025-09506-2}


\bibitem{Rozh1} Rozhdestvenskii B.L.: \textit{Discontinuous solutions of hyperbolic systems of quasilinear equations}. Russ. Math. Surv. 15:53 (1960). 
\url{https://doi.org/10.1070/RM1960v015n06ABEH001118}

\bibitem{Rozh2} Rozhdestvenskii B.L., Janenko, N.: \textit{Systems of Quasilinear Equations and Their Applications to Gas Dynamics}. AMS 55 (1983). 
\url{https://doi.org/10.1090/mmono/055}

\bibitem{Savoldi1}
Savoldi A.: \emph{Degenerate first-order Hamiltonian operators of hydrodynamic type in 2D}.  J. Phys. A: Math. Theor. 48 265202 (2015). \url{https://dx.doi.org/10.1088/1751-8113/48/26/265202}


\bibitem{sav1}
Savoldi A.: \emph{On deformations of one-dimensional Poisson structures of hydrodynamic type with degenerate metric}. J. Geom. Phys. 104 (2016). \url{https://doi.org/10.1016/j.geomphys.2016.03.002}

 \bibitem{tsarev85:_poiss_hamil}
Tsarev S.P.: \emph{On Poisson brackets and one-dimensional Hamiltonian systems of hydrodynamic type}. Soviet Math. Dokl. 31:3 pp. 488--491 (1985). \url{http:mathscinet.ams.org/mathscinet-getitem?mr=796577}

\bibitem{tsa3} Tsarev S.P.: \emph{The hamiltonian property of stationary and inverse equations of condensed matter mechanics and mathematical physics}. Math. Notes 46, pp. 569--573 (1989). \url{https://doi.org/10.1007/BF01159109}

\bibitem{tsarev91:_hamil}
Tsarev S.P.: \emph{The geometry of Hamiltonian systems of hydrodynamic type. The generalized hodograph method}. Math. USSR-Izv., 37:2 pp. 397--419 (1991). \url{https://dx.doi.org/10.1070/IM1991v037n02ABEH002069}


 

\bibitem{Ver3}
Vergallo P.: \emph{Non-homogeneous Hamiltonian structures for quasilinear systems}. Boll. Unione Mat. Ital., pp. 1–14 (2024). \url{https://doi.org/10.1007/s40574-023-00369-5}

\bibitem{Vit1}
Vitolo R.F.: \emph{Computing with Hamiltonian operators}. Comput. Phys. Commun. 244, pp. 228--245 (2019).  \url{https://doi.org/10.1016/j.cpc.2019.05.012}

\end{thebibliography}
\end{document}